\newtheorem{observation}{Comment}
\newcommand{\blk}{\color{black}}
\newcommand{\adri}[1]{{\color{black} #1}} %new edits by Adrian 080824
\newcommand{\dam}[1]{{\color{black} #1}}
\newcommand{\adr}[1]{{\color{black} #1}}
\newcommand{\yang}[1]{{\color{black} #1}}
\newcommand{\yangnew}[1]{{\color{black} #1}}
\newcommand{\dami}[1]{{\color{black} #1}}
\newcommand{\damian}[1]{{\color{black} #1}}%Damián's edits of 19 August 2024 in blue
\newtheorem{theorem}{Theorem}
\newtheorem{lemma}{Lemma}
\newtheorem{proposition}{Proposition}
\newtheorem{definition}{Definition}
\let\saved@includegraphics\includegraphics
\renewenvironment*{figure}{\@float{figure}}{\end@float}
\title{\dam{Experimental practical quantum tokens with transaction time advantage}}
\author{Yang{-}Fan Jiang$^{1*}$, Adrian Kent$^{2,3*}$, Dami\'an Pital\'ua-Garc\'ia$^{2*}$, Xiaochen Yao$^{4}$, Xiao-Han Chen$^{1}$, Jia Huang$^{5}$, George Cowperthwaite$^{2}$, Qibin Zheng$^{4}$, Hao Li$^{5}$, Lixing You$^{5}$, Yang Liu$^{1}$, Qiang Zhang$^{1,6,7}$ and Jian-Wei Pan$^{6,7}$}
\begin{document}
\begin{spacing}{1.0}
\begin{sloppypar}
%\begin{singlespace}
\maketitle

\begin{affiliations}
\item Jinan Institute of Quantum Technology and Hefei National Laboratory Jinan Branch, Jinan 250101, China
 \item Centre for Quantum Information and Foundations, DAMTP, Centre for Mathematical Sciences, University of Cambridge, Wilberforce Road, Cambridge, CB3 0WA, United Kingdom
 \item Perimeter Institute for Theoretical Physics, 31 Caroline Street North, Waterloo, ON N2L 2Y5, Canada
 \item Quantum Medical Sensing Laboratory and School of Health Science and Engineering, University of Shanghai for Science and Technology, Shanghai 200093, China
 \item Shanghai Key Laboratory of Superconductor Integrated Circuit Technology, Shanghai Institute of Microsystem and Information Technology, Chinese Academy of Sciences, Shanghai 200050, China
\item Hefei National Laboratory, University of Science and Technology of China, Hefei 230088, China
\item Hefei National Research Center for Physical Sciences at the Microscale and School of Physical Sciences, University of Science and Technology of China, Hefei 230026, China
 \item[] $^{*}$First co-authors in alphabetical order
% \item[] $^{+}$E-mail: A.P.A.Kent@damtp.cam.ac.uk; D.Pitalua-Garcia@damtp.cam.ac.uk; liuyang@jiqt.org; qiangzh@ustc.edu.cn; pan@ustc.edu.cn
\end{affiliations}

\maketitle

\begin{abstract}
Quantum money~\cite{W83} is the first \dam{invention in quantum information science}, 
promising advantages over classical money by simultaneously achieving unforgeability, user privacy, and instant validation. However, \dam{standard} quantum money~\cite{W83,G12,MVW12,PYLLC12,GK15,MP16,AA17,BDG19,K19,HS20} rel\dam{ies} on quantum memories \dam{and long-distance quantum communication, which are technologically extremely challenging}. \dam{Quantum ``S-money" tokens~\cite{KSmoney,quantumtokenspat,KPG20,KLPGR22} eliminate these technological requirements while preserving unforgeability, user privacy, and instant validation}. Here, we report the first full experimental demonstration of quantum S-tokens, \dam{proven secure despite errors, losses and experimental imperfections.} 
The heralded single-photon source with a high system efficiency of 88.24\% \dam{protects against arbitrary multi-photon attacks~\cite{BCDKPG21} arising from losses in the quantum token generation}. Following short-range quantum communication, the token is stored, transacted, and verified using classical bits. We demonstrate \adr{a transaction time advantage over} intra-city 2.77 km and inter-city 60.54 km optical fibre network\dam{s}, compared with \adr{optimal} classical cross-checking \dam{schemes}. \dam{Our implementation demonstrates the practicality of quantum S-tokens for applications requiring \adr{high security, privacy 
and minimal transaction times}, like financial trading~\cite{WF10} and network control}.
It is also the first demonstration of a \dam{quantitative} quantum \adr{time} advantage in relativistic cryptography, showing the enhanced cryptographic power of simultaneously considering quantum and relativistic physics. 
\end{abstract}

\maketitle

{\it Introduction.---}Money plays a pivotal role \dam{in society}. 
Quantum money tokens, \adr{first proposed by Wiesner~\cite{W83} in 1970, guarantee} information-theoretic unforgeability \dam{from} the no-cloning theorem~\cite{WZ82,D82} \adr{of quantum information}; and ensure user privacy (nobody else knows when or where the user will spend) and instant validation (can be validated locally without communicating with distant locations)~\cite{KSmoney}. It has only recently been recognized that the fundamental advantage of quantum tokens over standard classical alternatives is in satisfying \adr{all} three properties simultaneously with unconditional security~\cite{KSmoney,KLPGR22}. For example, purely classical token schemes that cross-check for multiple presentations
among distant locations can give user privacy and unforgeability, but not instant validation;
\adri{if the user announces the presentation point in advance, they can give unforgeability and instant validation, but not user privacy; without cross-checking, they can give instant validation and user privacy but not \damian{unconditionally secure} unforgeability}~\cite{KSmoney,KLPGR22}. \adri{Examples of quantum token applications are envisaged in \adri{future} high-speed financial trading~\cite{KLPGR22}, where instant validation\adri{, user privacy and unconditional unforgeability will be} crucial to avoid relativistic signalling delays~\cite{WF10}, keep trading plans secret and prevent fraud.}

\dam{Standard quantum token schemes~\cite{W83,G12,MVW12,PYLLC12,GK15,MP16,AA17,BDG19,K19,HS20} are technologically very challenging as they require quantum memories and long-distance quantum communication. Despite remarkable progress in quantum memories~\cite{yang2016efficient,jiang2019experimental,wang2021single} and long-range quantum communication~\cite{chen2021integrated,liu2023experimental}, we are still far from implementing quantum token systems over useful time and distance scales. Hence, experimental investigations of standard quantum tokens~\cite{BCGLMN17,BBP17,BOVZKD18,GAAZLYWZP18,JBCL19},
while valuable, have not yet demonstrated practically useful schemes. }

\dam{A new class of quantum tokens, “S-money” (S-tokens)~\cite{KSmoney,quantumtokenspat}, achieves the three properties above with unconditional security but \adr{needs neither} quantum memory \adr{nor} long-range quantum communication, and is thus practical with current technology. The tokens can be generated \adr{far} in advance of their use and can be transferred
among parties~\cite{KPG20}.} \adr{Previous work\cite{KLPGR22} 
reported the quantum stage of S-token generation
and analysed security against experimental imperfections, but did not achieve full security or include real-time 
token presentation and validation.}

Here we present a full experimental \adri{implementation} of the quantum S-token scheme of Refs.~\cite{KSmoney,KPG20,KLPGR22}, \adri{demonstrating for the first time quantum tokens with near-perfect security and user privacy and  near-instant validation} in a realistic photonic setup that considers losses, errors and experimental imperfections.
We \adr{present a new security analysis based on maximal confidence \dam{quantum} measurement bounds \dam{\cite{CABGJ06}} to} \dami{prove \adr{unforgeability}.} 
\adr{Due to a high detection heralding efficiency of 88.24\%, our implementation allows the correct presentation and validation of tokens while guaranteeing user privacy by not requiring the user to report losses when the token is generated.  This prevents multi-photon \dam{attacks \cite{BCDKPG21},} to which many previous experimental demonstrations of mistrustful quantum cryptography (e.g., \cite{NJMKW12,LKBHTKGWZ13,LCCLWCLLSLZZCPZCP14,ENGLWW14,PJLCLTKD14,KLPGR22}) were vulnerable.} 
\adr{Moreover, our} experiment demonstrates for the first time a secure quantum token scheme that achieves faster validation than classical cross-checking can, given relativistic signalling constraints.   

\yang{By developing a data transmission and processing board with a 10 Gigabaud rate}, we demonstrate quantum token transactions using a metropolitan fibre network, showing transaction time advantages over any \dam{classical cross-checking token schemes}. We also demonstrate these advantages \adr{in short-\adr{range} urban fibre optic networks}.

{\it The scheme.---}
\adr{Following Refs.~\cite{KSmoney,KPG20,KLPGR22}}, \dam{Bob \adr{(the bank)} sends Alice \adr{(the client)} the quantum token via a short-range quantum channel}. \adr{Alice} measures the quantum states immediately upon reception without using quantum memories.  The scheme then proceeds with classical communications, \adr{without} further quantum communications.  

\yang{\adr{We consider a two-node network scenario (see Fig.~\dami{S3 in SI}) involving small spatial regions $L_0, L_1$, defined in an agreed reference frame $F$.} Alice (Bob) comprises two collaborating and mutually trusting agents or laboratories A$_0$ and A$_1$ (B$_0$ and B$_1$), located in $L_0$ and $L_1$ respectively connected via secure and authenticated classical channels (implemented via pre-distributed secret keys, for instance).} 
\adri{In general, Alice and Bob may be companies or governments with many distributed trusted agents: the two-node two-agent scenario models the simplest case (see SI for extensions to larger networks).   Note that any money or token scheme applicable in time-critical scenarios on extended networks needs distributed trusted agents, since tokens can be transmitted at near light speed while individual agents cannot.}
\adr{We define the space-time region} $R_i$ to comprise the location \dam{$L_i$} within a time interval $\Delta T$ beginning at \dam{a time $T$ \adr{in $F$}. Let} $\Delta T_\text{comm}$ be the time taken for A$_0$ to communicate \dam{a bit} to A$_1$. The values of $\Delta T$  \dam{and} $\Delta T_\text{comm}$ are agreed in advance by Alice and Bob, \adr{with} $\Delta T$  \adr{large enough} to allow Alice to present the token to Bob. \dam{We} assume that $T$ is only communicated by Alice to Bob at step 3 \yangnew{in Table~1} via the relation
\begin{equation}
\label{time}
  T=T_\text{bit}+ \Delta T_\text{comm} \, ,
\end{equation}
where $T_\text{bit}$ is the time at which A$_0$ communicates the bit $c$ to B$_0$ in step 3.
\adr{We assume that B$_0$ can communicate \dam{a bit} to B$_1$ within time $\Delta T_\text{comm}$, so B$_1$ is ready to receive and verify the token if and when presented by A$_1$. Alice and Bob also agree on a maximum error rate $\gamma_{err}$.}

We describe the complete procedure for \adr{an} \emph{ideal scheme} \yangnew{in Table~1}, which is extended to allow for experimental imperfection in a \emph{practical scheme} (see Methods); these \adr{are minor variations} of the schemes $\mathcal{IQT}_2$ and $\mathcal{QT}_2$ of Ref.~\cite{KLPGR22}. \yangnew{The quantum token preparation phase can be performed arbitrarily in advance of the following stage. The transaction phase requires high-speed data transmission and processing to give an advantage over purely classical schemes.}

\begin{table}

\caption{The complete procedure for \adr{an} \emph{ideal scheme}.}
{\noindent}\rule[0pt]{16.5cm}{0.05em}
{\noindent}\rule[10pt]{16.5cm}{0.05em}
\begin{spacing}{1.0}
Below, $\textbf{x},\textbf{x}',\textbf{x}^0,\textbf{x}^1,\textbf{t},\textbf{u}$ are $N-$bit strings, $z,b,c,d_0,d_1$ are bits, and $s_k$ is \adr{the} $k$th bit of \adr{the string $\textbf{s}$}.

~

\emph{The quantum phase for the token preparation.}
\begin{itemize}
\item[1.] B$_0$ sends A$_0$ $N$ random states from the BB84 set $\{\lvert 0\rangle, \lvert 1\rangle,\lvert +\rangle,\lvert -\rangle\}$~\cite{BB84}. A$_0$ randomly chooses $z$ and measures all the states in the qubit orthonormal basis $\mathcal{D}_{z}$, where $\mathcal{D}_{0}=\{\lvert 0\rangle,\lvert 1\rangle\}$ and $\mathcal{D}_{1}=\{\lvert +\rangle,\lvert -\rangle\}$. Let $\textbf{t}$ denote B$_0$'s encoded bits and $\textbf{u}$ the preparation bases. Let $\textbf{x}$ denote A$_0$'s outcomes.
\item[2.] A$_0$ \dam{generates a random dummy token $\textbf{x}'$,} keeps copies of $\textbf{x}$ \dam{and $\textbf{x}'$}, and sends copies to A$_1$; while B$_0$ keeps copies of $\textbf{t}$ and $\textbf{u}$ and sends copies to B$_1$. 
\end{itemize}

\emph{The classical phase for the token transaction.}
\begin{itemize}
\item[3.] A$_0$ \dam{obtains} $b$, which labels the location $L_b$ for token presentation, and keeps a copy; she also sends $b$ to A$_1$ at time $T_\text{begin}$ and $c=b\oplus z$ to B$_0$ \adr{as soon as possible after, at time $T_\text{bit} > T_\text{begin}$.}

\item[4.] Upon receiving $c$, B$_0$ keeps a copy and sends another copy to B$_1$.

\item[5.] For $i=0,1$, A$_i$ sends $\textbf{x}^i$ to B$_i$ at location $L_i$, where $\textbf{x}^b=\textbf{x}$ and $\textbf{x}^{b\oplus 1}=\textbf{x}'$. 

\item[6.] For $i=0,1$, at $L_i$, B$_i$ calculates $d_i= c \oplus i$, and computes the number $N_{\text{errors},i}$ of entries $k\in\Delta_i$ that do not satisfy $x_k^i=t_k$, where $\Delta_i=\{k\in[N]\vert u_k=d_i\}$. B$_i$ locally validates the token if
\begin{equation}
\label{allowederrors}
\begin{aligned}
\frac{N_{\text{errors},i}}{N_i}\leq \gamma_{err} \, ,
\end{aligned}
\end{equation}
or rejects it otherwise, where $N_i=\lvert \Delta_i\rvert$. We define $T_\text{end}$ as the time at which token verification or rejection is completed at both $L_0$ and $L_1$.
We define the quantum token scheme \emph{transaction time} by
\begin{equation}
    \label{tran}
  \Delta T_\text{tran}= T_\text{end}-T_\text{begin} \, .   
\end{equation}

The parameters $N$ and $\gamma_{err}$ define the token size and the transaction error tolerance.   They are chosen \adr{to minimize $N$} (and hence transaction times) while ensuring the false rejection probability and the forging probability are suitably low.
\end{itemize}

\end{spacing}
{\noindent}\rule[0pt]{16.5cm}{0.05em}
{\noindent}\rule[25pt]{16.5cm}{0.05em}

\end{table}

 \dam{The practical schemes} extend straightforwardly to an arbitrary number of presentation spacetime 
 regions. Our security analysis applies to this general case \dami{(See SI)}.  \dam{The} unforgeability proof \adr{also} holds even if Alice is required to report losses and applies for arbitrarily powerful dishonest Alice who may detect all quantum states received from Bob and choose to report an arbitrary subset of states as lost \dami{(See Methods and SI)}. \dami{However, if the scheme requires Alice to report losses,  \adr{and she 
does so honestly, she}
 cannot perfectly protect against multiphoton attacks and future privacy is compromised.} \adri{Full security in our implementation -- i.e. unforgeability combined with user privacy -- thus requires the high detection heralding efficiency that avoids the need for loss reporting.} 

\adr{We use two scenarios to quantify our scheme's time advantage over classical protocols in which the agent B$_i$ who receives a valid token waits for a signal confirming the other agent B$_{\bar{i}}$ has not also received one before accepting.
First, we suppose both quantum and classical protocols transmit the signals via the same optical fibre channel. The \textit{quantum advantage} is the transaction time of the classical cross-checking protocol minus that of the quantum S-token scheme.  
Second, we consider a classical cross-checking protocol that transmits direct signals at light speed in free space. The \textit{comparative advantage} is the transaction time of this ideal classical cross-checking protocol minus that of the quantum S-token scheme using a practical (fibre optical and not \dami{straight-line}) channel.} \dami{(See Methods)}

{\it Experimental Implementation.---}In the quantum phase (steps 1-2), the quantum token is prepared using a high-efficiency heralded single-photon source. A schematic, depicted in Fig.~1, consists of two modules: A$_0$ and B$_0$. B$_0$ generates \adr{a} photon pair $\ket{\phi}=\ket{0}_1\otimes \ket{1}_2$ via spontaneous parametric down-conversion (SPDC). One photon is detected by a high-efficiency superconducting nanowire single-photon detector (SNSPD) to generate a trigger signal. The other is modulated with random basis $u_i\in \{0, 1\}$ and bit $t_i\in \{0, 1\}$ using two Pockels cells driven by real-time quantum random number generators (QRNGs)~\cite{abellan2015generation}. After encoding, B$_0$ sends the photon to A$_0$. A$_0$ randomly selects the measurement basis $z\in \{0, 1\}$ using a half-wave plate (HWP) and records the outcomes as $x_i\in \{0, 1\}$, \dam{comprising} the token.

\adr{We employed the only known perfect protection against general multiphoton attacks~\cite{BCDKPG21}, with A$_0$ not reporting losses and accepting all pulses transmitted by B$_0$.}  A$_0$ assigned random measurement outcomes $x_i\in \{0, 1\}$ for the pulses activating none or both of her detectors, introducing errors \adr{for these cases} with probability close to $50\%$. Following Ref~\cite{bennink2010optimal} and employing improved experimental techniques, we measured the final system efficiency of $88.24\%$, including the heralded single-photon source, the encoding by Pockels cells and the measurement by A$_0$. \adr{This high system efficiency gives a 7 standard deviation bound for the} \adr{overall} error rate of only \dami{$E=6.2550\%$}. By setting $\gamma_\text{err}=9.4\%$ and $N=10,048$ bits,
we guarantee our implementation to be \dami{$2.1\times 10^{-11}$}-correct \dam{(Bob rejects a valid token with a probability  $\dami{\leq}\dami{2.1}\times 10^{-11}$)}.
Our implementation is proved \adr{$\beta_E$-private} \dam{(Bob learns Alice's chosen presentation region before she presents with a probability} $\dami{\leq}\frac{1}{2}+\adr{\beta_E}$), \adr{given the perfect protection against multiphoton attacks.
Here the bias $\beta_\text{E}=10^{-5}$ of the bit $z$ encoding A$_0$'s measurement basis in our implementation and can be made arbitrarily small by pre-processing. 
This assumes} that B$_0$ \adr{cannot} exploit side-channels to obtain information about A$_0$'s measurement basis \adr{nor} implement clock synchronization attacks to obtain information about A$_0$'s chosen presentation location prematurely.   \adr{ 
Seven standard deviation \dami{upper} bounds on B$_0$'s biases in selecting the preparation basis and state and the proportion of multi-photon heralded pulses 
were respectively  $\beta_\text{PB} = %0.001357
\dami{0.001360}$, $\beta_\text{PS} = %0.001118
\dami{0.001120}$, $P_\text{noqub} = %4.87 \times 10^{-5}$
\dami{4.9\times 10^{-5}}$.} 

Based on the measurement, the uncertainty angle in Bob's state preparation is guaranteed to be $\leq \theta = %4.53 ^\circ
\dami{5.115515^\circ}$ 
%(rounded to six significant figures)}
with a probability $\geq (1- P_\theta)$, where $P_\theta = 0.027$ (see SI). This shows our scheme to be \dami{$5.52\times 10^{-9}$}-unforgeable (Alice succeeds in getting Bob to validate tokens at both presentation regions with a probability \dami{$\leq 5.52\times 10^{-9}$}). This follows from a novel security analysis based on bounds on Alice's maximum confidence quantum measurement \cite{CABGJ06} for each pulse (\dami{see Methods}).

%\dami{The uncertainty angle in \adr{Bob's state preparation} is bounded by $\theta \leq 4.53 ^\circ$ with $P_\theta \leq 0.027$.} This shows our scheme to be \dami{$4\times 10^{-11}$}-unforgeable \adr{(Alice succeeds in getting Bob to validate tokens at both presentation regions with a probability \dami{\adr{\leq} $4\times 10^{-11}$})}. \adr{This follows from a novel security analysis based on bounds on Alice's maximum confidence quantum measurement \cite{CABGJ06} for each pulse.} \dam{(See Methods -NEED TO ADD NEW THEORETICAL RESULTS IN METHODS AND PROOFS IN SI).}

The classical phase (steps 3-6) was implemented with optical fibre communication links and high-speed electronic boards. To demonstrate the time advantage over classical cross-checking \adr{protocols}, \adr{each A$_i$ communicates to B$_i$ at 10 Gbps} and \adr{ the B$_i$ perform} real-time local validation using high-speed field-programmable gate arrays (FPGAs). The total duration of the classical processing for $N=10,048$ bits without considering the communication time between $L_0$ and $L_1$ is $\Delta T_\text{proc}\approx 1.5~\mu\text{s}$.

\begin{comment}
The relationship between transaction time and distance is illustrated in Figs.~3c and d. \textit{Quantum advantage} can be demonstrated for optical fibre lengths $L_\text{fibre}$ between $L_0$ and $L_1$ satisfying
\begin{equation}
\label{QA}
  L_\text{fibre}\geq \Delta T_\text{proc}c_\text{fibre} \approx 0.3 \text{km},
\end{equation}
where $c_\text{fibre}=0.2$~km$\mu\text{s}^{-1}$ is the speed of light through the fibre. Furthermore, \textit{comparative advantage } requires
\begin{equation}
\label{CA}
    D>\frac{\Delta T_\text{proc}}{\frac{2}{c}-\frac{1}{c_\text{fibre}}}\approx 0.9 km.
\end{equation}
That is, if $L_0$ and $L_1$ were communicated by a straight-line optical fibre communication channel, our implementation could demonstrate a comparative advantage for a minimal physical distance of $D\approx 0.9$ km.
\end{comment}

We \adr{demonstrated} \textit{quantum advantage} \dam{within the city of Jinan, Shandong Province, between two locations separated by 426 m and connected by 2.766 km of optical fibre, as shown in} Fig.~2(a). A$_0$ decides the presentation location 
and sends the information $b$ to A$_1$ via the fibre. During the transaction, A$_0$ transmits $c$ to B$_0$ using 
high-speed electrical signals; subsequently, B$_0$ \dam{communicates} $c$ to B$_1$ \dam{over} the fibre. 
\dam{A$_b$ sends the token $\textbf{x}$ to B$_b$ at $L_b$ } 
\adr{and A$_{\bar{b}}$ } 
sends \adr{\dam{the dummy token $\textbf{x}'$ to B$_{\bar{b}}$} 
at $L_{\bar{b}}$}\dam{, both of  $N$=10,048 bits.} Upon receiving \dam{$\textbf{x}$ and $\textbf{x}'$, B$_b$ and B$_{\bar{b}}$ process them simultaneously, validating or rejecting them}. The token \dam{was} tested 20 times with the \adr{presentation} location chosen randomly, \dam{obtaining respective average error rate and transaction time of $6.02\%$ and $15.34 \pm 0.01 ~\mu s$}. \dam{The achieved} quantum advantage \dam{was} $12.32\pm 0.01 ~\mu s$, shown in Fig.~3(a)\dam{, demonstrating a} significant time advantage in practical fibre networks, even at short distances.

We further \adr{demonstrated} \textit{comparative advantage} between Yiyuan ($36^{\circ}  10^{\prime}  
50.4^{\prime \prime} N, 118^{\circ}  12^{\prime}  10^{\prime \prime} E$) and Mazhan ($36^{\circ}  0^{\prime}  19^{\prime 
\prime} N, 118^{\circ}  42^{\prime}  35^{\prime \prime} E$) in Shandong Province of China, \adr{separated by 51.60 km and connected by a 60.54 km field-deployable optical fibre,} as shown in Fig.~2(b). We \adr{ran} the transactions 20 
times and measured the average error rate and transaction time were $6.00\%$ and $304.20\pm 0.01 ~\mu s$, achieving the comparative advantage \dam{of} $39.80\pm 0.01 ~\mu s$, 
shown in Fig.~3(b).

The relationship between transaction time and distance is illustrated in Fig.~3(a) and (b). \adr{Assume straight fibre optic channels, the quantum and comparative advantage} can be achieved at approximately 0.3 km and 0.9 km, respectively \dami{(see Methods)}.
\adr{Even though real} optical fibre channels are not straight, our experiment demonstrates \adr{quantum and comparative advantage respectively in an intra- and inter-city network.}

{\it Conclusion.---}\dami{We} have presented the first complete implementation  of provably unforgeable quantum money \adr{tokens with near-instant validation and with user privacy, and with a time advantage over classical schemes.} We implement\adr{ed} a high-efficiency heralded single-photon source, enabling the secure preparation of quantum token\adr{s} against \dam{arbitrary} multiphoton attacks \cite{BCDKPG21}. %To our knowledge, this is the first \adr{mistrustful quantum cryptography experiment} perfectly closing this loophole.
Furthermore, \adr{by using} high-speed data transmission and processing, we have demonstrated a \adr{quantified} time advantage over \adr{optimal} classical \adr{cross-checking} protocols, \adr{even for intra-city networks}. The implementation could ideally be improved further using secure timing and location techniques \dami{(see Methods)}.

\adr{The total transaction times were $\approx  15~\mu{\rm s}$ and $\approx 30\dami{4}~\mu{\rm s} $  for our intra-city and inter-city experiments.  
For comparison, a recent implementation~\cite{SERGTBW23} of a modified version of 
the schemes\cite{KPG20,KLPGR22} required tens of minutes from Alice's choice of presentation point to Bob's validation (see SI for further comparative discussion \dami{and extensions of our schemes}).}

\adri{Quantum S-tokens straightforwardly extend to arbitrarily many presentation regions~\cite{KLPGR22} and extrapolating our implementations shows that quantum and comparative advantage are attainable in real world conditions on multi-node financial and other large-scale networks while maintaining strong security (see SI).}
\adr{The work thus represents} a crucial step \adr{towards} the widespread adoption of \adr{secure }quantum token\adr{s}.

\adri{Our theoretical and experimental techniques apply more broadly to mistrustful quantum cryptography. To our knowledge, this is the first mistrustful relativistic quantum cryptography experiment perfectly closing the multiphoton attacks loophole. We have advanced the theory of quantum token security \cite{KLPGR22} by new results using bounds on Alice's maximum confidence quantum measurement \cite{CABGJ06} for each pulse, which also apply to other quantum token (e.g., \cite{BOVZKD18}) and mistrustful quantum cryptographic schemes (e.g.,\cite{LKBHTKGWZ13,PGK18,PG19}). These bounds imply near-perfect token
unforgeability allowing for experimentally quantified
error types and loss levels.   Unlike 
previous analyses (e.g., \cite{KLPGR22,LKBHTKGWZ13}), they do not assume that source qubit states belong to orthonormal bases. We characterized the deviation from BB84 states and used this in our security proof, going substantially beyond previous security analyses (e.g., \cite{TVUZ05,NFHM08,BBBGST11,LKBHTKGWZ13,LCCLWCLLSLZZCPZCP14,PJLCLTKD14,BOVZKD18,KLPGR22,SERGTBW23}) in mistrustful quantum cryptography. To our knowledge, no previous security analysis has allowed for general deviations from the set of states stipulated by an ideal protocol, measured these deviations experimentally, and based security bounds on these empirical data; without these results, claimed security bounds are not reliable. (See Methods and SI.)}
\adri{Our experiment is also the first demonstration of a quantitative quantum time advantage in relativistic cryptography, showing the enhanced cryptographic power of simultaneously considering quantum and relativistic physics.}

\textbf{Acknowledgement} This work was supported by the National Key Research and Development Program of China (2020YFA0309704), the Innovation Program for Quantum Science and Technology (2021ZD0300800, 2023ZD0300100), the National Natural Science Foundation of China (12374470), the Key R\&D Plan of Shandong Province (2021ZDPT01), the Key Area R\&D Program of Guangdong Province (2020B0303010001), Shandong Provincial Natural Science Foundation (ZR2021LLZ005), Y.-F.J and Q.Z acknowledge support from the Taishan Scholars Program. A.K. and D.P.-G. acknowledge support from the UK Quantum Communications Hub grant no. EP/T001011/1. G.C. was supported by a studentship from the Engineering and Physical Sciences Research Council. A.K. was supported in part by Perimeter Institute for Theoretical Physics. Research at Perimeter Institute is supported by the Government of Canada through the Department of Innovation, Science and Economic Development and by the Province of Ontario through the Ministry of Research, Innovation and Science.

\clearpage

\begin{figure*}[!t]
	\centering
	\includegraphics[width=0.8\textwidth]{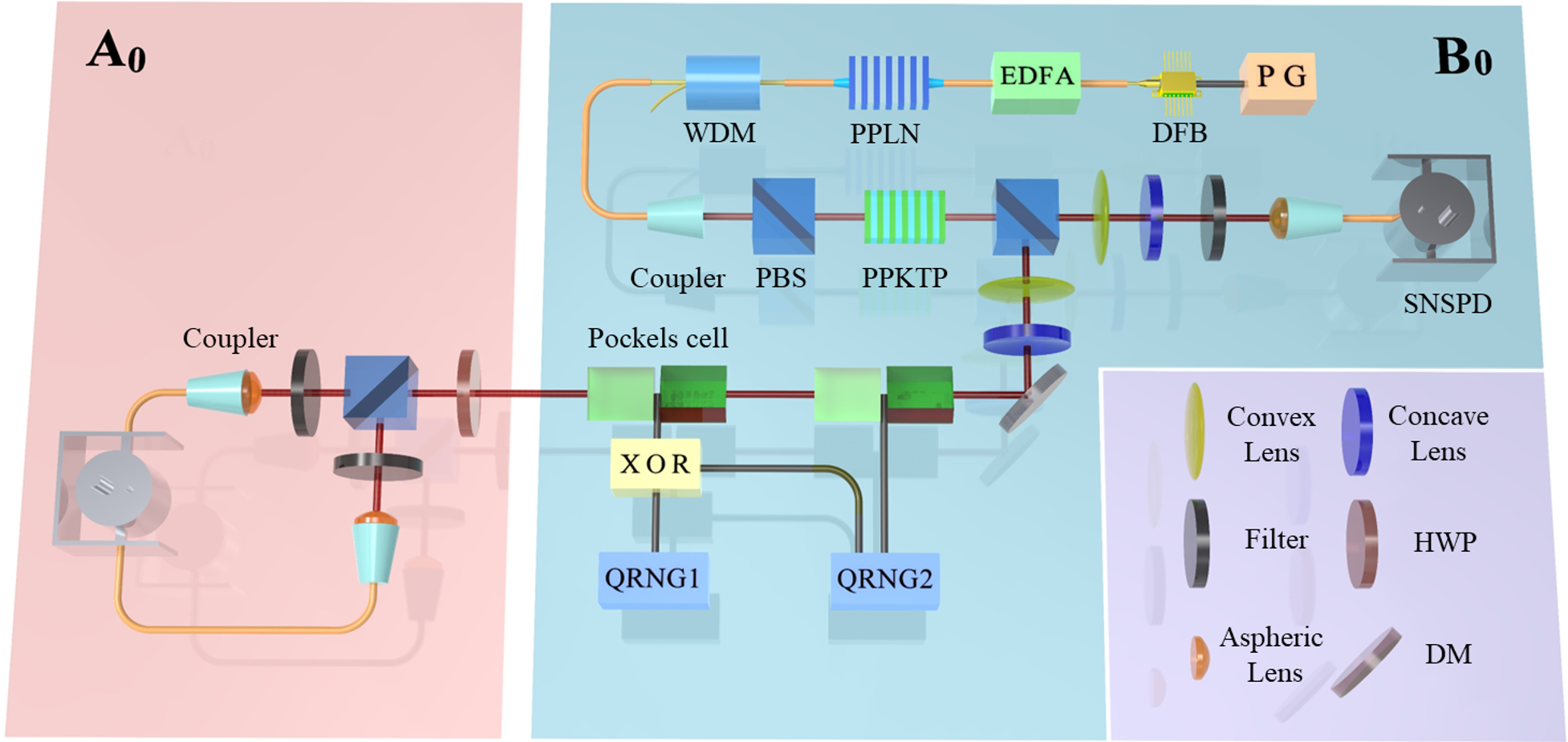}
	\caption{\dam{Diagram of the quantum phase of the quantum token generation using a} heralded single-photon source. A distributed feedback laser (DFB) with a central wavelength of 1560 nm is used as the pump, driven by a pulse generator (PG). The laser emits a pulse with a width of 5 ns and a repetition rate of 500 KHz. The pump is frequency-doubled with a periodically poled MgO-doped lithium niobate (PPLN) crystal and filtered by a wavelength division multiplex (WDM) to create the 780 nm pump. The photon pairs are generated through the spontaneous parametric down-conversion (SPDC) process, utilizing the Type-II periodically poled potassium titanyl phosphate (PPKTP) crystal. One photon is detected by the superconducting nanowire single-photon detector (SNSPD) to generate a trigger signal. The other is encoded with the basis $u_i$ and bit $t_i$ using the two Pockels cells respectively, which are driven by quantum random number generators (QRNGs). The XOR is an Exclusive OR operation implemented with electronic. A$_0$ \dam{measures} the received qubits using a half-wave plate (HWP) and a polarizing beam splitter (PBS), and the photons are detected by SNSPDs.
	} \label{Fig:concepts}
\end{figure*} 
\clearpage
\begin{figure*}[!t]
	\centering
	\includegraphics[width=1\textwidth]{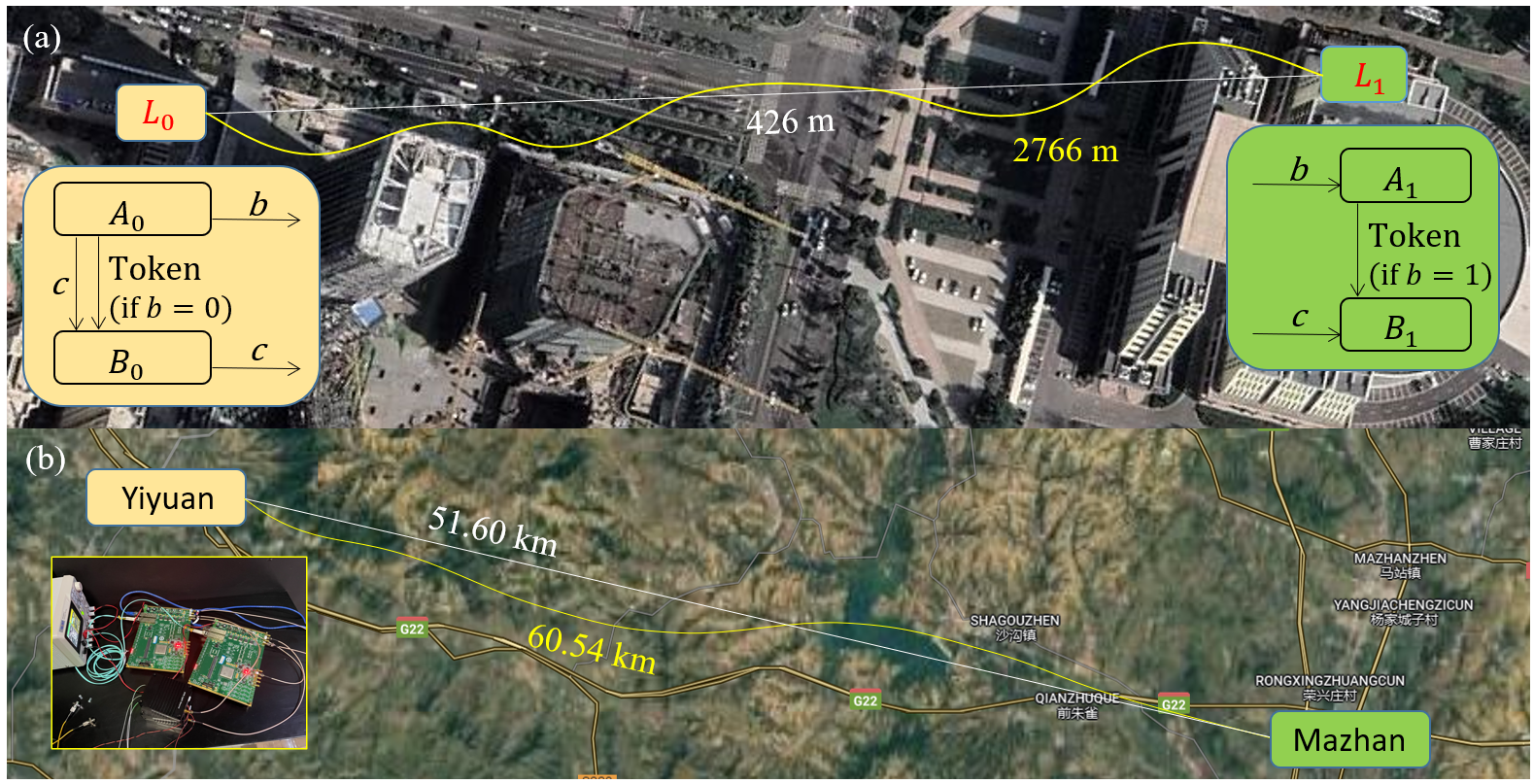}
	\caption{Field deployment of the S-token. (a) A satellite image shows \adr{the S-token setup} in the fibre optic network within the city of Jinan, Shandong Province, China, for demonstrating the quantum advantage. The fibre length is 2.766 km, whereas the corresponding direct free space distance is around 425 m. (b) A satellite image displays the S-token \adr{setup} with field-deployed fibre between Yiyuan ($36^{\circ} 10^{\prime} 50.4^{\prime \prime} N, 118^{\circ} 12^{\prime} 10^{\prime \prime} E$) and Mazhan ($36^{\circ} 0^{\prime} 19^{\prime \prime} N, 118^{\circ} 42^{\prime} 35^{\prime \prime} E$) in Shandong Province of China, for demonstrating the comparative advantage. The fibre length is 60.54 km, while the direct free space distance between them is about 51.60 km.
    }
	\label{fig:Field}
\end{figure*}
\clearpage

\begin{figure*}[!t]
\centering
\includegraphics[width=0.98\textwidth]{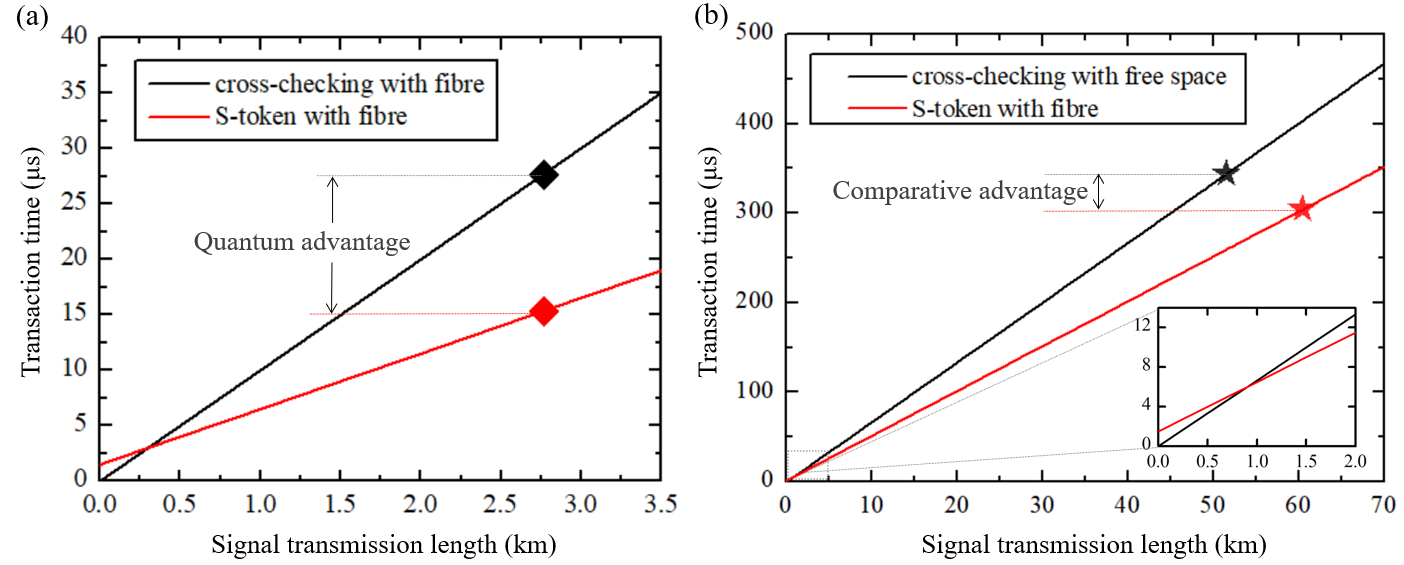}
\caption{The results of the time advantage. (a) The red and black lines are the transaction time based on the S-token and classical cross-checking through fibre. The red square is the measured transaction time of the S-token through 2.766 km fibre link, the black square is the ideal transaction time of classical cross-checking via the same fibre length. (b) The red and black lines are the transaction time based on the S-token and classical cross-checking through fibre and free space channel, respectively. The red star is the measured transaction time of the S-token through 60.54 km fibre link, the black star is the ideal transaction time of classical cross-checking through 51.60 km free space link. Insert: the comparative advantage is achieved for distances longer than 0.9 km. }
	\label{fig:result}
\end{figure*}

\clearpage
\end{sloppypar}

\section*{Methods}

\subsection{The practical quantum token scheme}

The \emph{practical scheme} deviates from the ideal scheme above by allowing the experimental imperfections described in Table 5, and \dam{making} the assumptions of Table 6, of Ref.~\cite{KLPGR22}. However, \dam{here we enhance the security analysis of Ref.~\cite{KLPGR22} by allowing Bob to prepare qubit states that do not belong to orthonormal bases, hence, we do not need to make assumption A of Ref.~\cite{KLPGR22}. We further improve the security analysis of Ref.~\cite{KLPGR22} by introducing the probability $P_\theta>0$ defined below. Moreover, Alice does not report any losses to Bob in our experimental implementation.} Thus, we do not need to make the assumptions C, D and F of Ref.~\cite{KLPGR22} \dam{either}. The  experimental imperfections \adr{ considered are defined by the parameters $\dam{E,} \gamma_\text{err},P_\text{noqub},\dam{P_{\theta},P_{\text{noqub},\theta}}\in(0,1)$, $\theta\in\Bigl(0,\frac{\pi}{4}\Bigr)$ and $\beta_{\text{PB}},\beta_{\text{PS}},\beta_\text{E}\in\Bigl(0,\frac{1}{2}\Bigr)$.   Here \adr{$E=\max_{t,u}\{E_{tu}\}$, where $E_{tu}$ is an upper bound on the probability that Alice obtains a wrong measurement outcome when she attempts to measure a quantum state $t$ }
%$\lvert\psi_k\rangle$
in its preparation basis $u$;} $\gamma_\text{err}$ is the maximum error rate allowed by Bob for token validation as given by (\ref{allowederrors})\dam{;} $P_\text{noqub}$ is an upper bound on the probability that each quantum state transmitted by Bob has dimension greater than two (by comprising two or more qubits, for instance), which arises due to an imperfect single-photon source\dam{; $\theta$ is an uncertainty angle in the Bloch sphere;} \dam{$P_{\theta}$ is \adr{an upper bound on} the probability that a prepared quantum state has uncertainty angle greater than $\theta$ in the Bloch sphere; $P_{\text{noqub},\theta}$ is \adr{an upper bound on} the probability that a prepared quantum state has dimension greater than two or its uncertainty angle in the Bloch sphere is greater than $\theta$, given by
\begin{equation}
\label{Pnoqubtheta}
    P_{\text{noqub},\theta}=1-(1-P_{\text{noqub}})(1-P_{\theta});
\end{equation}}
and where $\beta_{\text{PB}}$, $\beta_{\text{PS}}$ and $\beta_\text{E}$ are upper bounds on the biases for the respective probabilities of basis preparation $u_k$, state preparation $t_k$ and the bit $z$. 

\dam{Although not needed in our experimental implementation, it is useful to mention that the scheme can be straightforwardly extended to allow Alice to report losses to Bob.  This requires the following extra steps in the quantum phase for the token generation~\cite{KLPGR22}. A$_0$ reports to B$_0$ the set $\Lambda$ of indices $k$ of quantum states $\lvert \psi_k\rangle$ sent by B$_0$ that produce unsuccessful measurements. Let $n=\lvert \Lambda\rvert$. B$_0$ does not abort if and only if $n\geq \gamma_\text{det} N$, where the threshold $\gamma_\text{det}\in(0,1)$ is agreed in advanced by Alice and Bob. The scheme continues as above but with the $n-$bit strings that restrict $\textbf{x},\textbf{x}',\textbf{x}^0,\textbf{x}^1,\textbf{t},\textbf{u}$ to entries with indices $k\in\Lambda$.} \dam{$P_{\text{det}}\in(0,1)$ is the probability that a quantum state transmitted by Bob is reported by Alice as being successfully measured. \adr{The strategy used by A$_0$ to report (un)successful measurements} must be chosen carefully to counter multi-photon attacks by B$_0$ \cite{KLPGR22,BCDKPG21}.} \dam{The analysis for our implementation reduces straightforwardly to the case $P_{\text{det}}=\gamma_\text{det}=1$ and $n=N$.}

\subsection{Security definitions}

A token scheme using $N$ transmitted quantum states is 
$\epsilon_{\text{cor}}-$correct \cite{KLPGR22} if the probability that Bob does not accept Alice's token as valid when Alice and Bob follow the scheme honestly is not greater than $\epsilon_{\text{cor}}$, for any $b\in\{0,1\}$;
$\epsilon_{\text{priv}}-$private if the probability that Bob guesses Alice's bit $b$ before she presents her token is not greater than $\frac{1}{2}+\epsilon_{\text{priv}}$, if Alice follows the scheme honestly
and chooses $b\in\{0,1\}$ randomly from a uniform distribution;
$\epsilon_{\text{unf}}-$unforgeable, if the probability that Bob accepts Alice's tokens as valid at the two presentation locations is not greater than $\epsilon_{\text{unf}}$, if Bob follows the scheme honestly\dam{;
\emph{$\epsilon_\text{rob}-$robust} if the probability that Bob aborts when Alice and Bob follow the token scheme honestly is not greater than $\epsilon_\text{rob}$, for any $b\in\{0, 1\}$.} 
It is correct, unforgeable \dam{and robust} if the respective $\epsilon$-parameters decrease exponentially with $N$, and private if 
$\epsilon_{\text{priv}}$ can be made arbitrarily small by increasing security parameters.

\subsection{Quantum \dam{and comparative} advantage}
\label{sec:adv}

To compare our quantum scheme to classical cross-checking schemes we provide the following definitions. 
Let $\Delta T_\text{tran,C}$ and $\Delta T_\text{tran,CF}$ be the transaction times of a classical cross-checking scheme when it uses the same classical communication channel as our quantum scheme, and when it uses an ideal free-space communication channel at light-speed, respectively. The transaction time $\Delta T_\text{tran}$ of our quantum token scheme is defined by (\ref{tran}). We say our quantum scheme has \emph{quantum advantage} if
\begin{equation}
\label{qadv}
    QA\equiv \Delta T_\text{tran,C}- \Delta T_\text{tran}>0,
\end{equation}
and that it has \emph{comparative advantage} if
\begin{equation}
\label{cadv}
    CA\equiv \Delta T_\text{tran,CF}-\Delta T_\text{tran}>0.
\end{equation}
We note that $\Delta T_\text{tran,C}\geq \Delta T_\text{tran,CF}$. Thus, $QA\geq CA$ and comparative advantage ($CA>0$) implies quantum advantage ($QA>0$).

We define $\Delta T_\text{tran,C}$ and $\Delta T_\text{tran,CF}$ precisely by considering the simplest type of classical cross-checking scheme implemented by Alice and Bob between $L_0$ and $L_1$, under assumptions that minimize transaction time. As in the quantum scheme, Alice and Bob agree in advance on a spacetime reference frame F.   They agree that the token may be presented by Alice in one of two spacetime presentation regions $R_0$ and $R_1$, where $R_i$ comprises location $L_i$ with a time interval $\delta T$ beginning at a time $T$ given by (\ref{time}) in the frame F, and where the value of $T_\text{bit}$ is only indicated by Alice to Bob during the protocol (in step 3). We first assume that the communications between $L_0$ and $L_1$ by Alice and Bob use the same classical communication channel as in our quantum scheme. Step 1 is performed arbitrarily in advance of the following steps. Below, $\textbf{y}$ is a $n-$bit string, and $b,r_0,r_1$ are bits.

\begin{enumerate}

\item B$_0$ gives a classical password $\textbf{y}$ to A$_0$ at $L_0$, keeps a copy of $\textbf{y}$ and
sends a copy to B$_1$; A$_0$ keeps a copy of $\textbf{y}$ and sends a copy to A$_1$.

\item A$_0$ obtains $b$, indicating that she wishes to present the token at $L_b$. She keeps a copy of $b$ and sends a copy to A$_1$ at the time $T_\text{begin,C}$.

\item At the time $T_\text{bit}$, A$_0$ indicates to B$_0$ that the token will be presented at time $T$ given by (\ref{time}) at either $L_0$ or $L_1$.

\item A$_b$ gives $\textbf{y}$ to B$_b$ at $L_b$ within the time interval $[T,T+\delta T]$.

\item At the time $T+\delta T$, B$_i$ sends $r_i$ to B$_{i\oplus 1}$, where $r_i=1$ ($r_i=0$) indicates that B$_i$ received (did not receive) a token at $L_i$ within the time interval $[T,T+\delta T]$.

\item At the time $T_\text{C,end}=T+\delta T+\Delta T_\text{comm}$, B$_i$ validates a token received at $L_i$ within $[T,T+\delta T]$ if it is equal to $\textbf{y}$ and if $r_{i\oplus 1}=0$.
\end{enumerate}

We assume that the communication time between $L_0$ and $L_1$ is the same for Alice and Bob. We define the \emph{transaction time} of the cross-checking scheme by
\begin{equation}
    \label{tranC0}
  \Delta T_\text{tran,C}= T_\text{end,C}-T_\text{begin,C}.  
\end{equation}
Assuming that all local classical processing and local communications at $L_0$ and $L_1$ can be made instantaneous, that $\delta T=0$ and that $T_\text{bit}=T_\text{begin,C}$, we obtain
\begin{equation}
    \label{tranC}
  \Delta T_\text{tran,C}= 2 \Delta T_\text{comm}.  
\end{equation}
If now we assume that Alice and Bob use free-space channels at light-speed for their communications between $L_0$ and $L_1$, then 
\begin{equation}
    \label{tranCF}
  \Delta T_\text{tran,CF}= \frac{2D}{c},  
\end{equation}
where $D$ is the distance between $L_0$ and $L_1$, and $c$ is the speed of light through \dam{a} vacuum.

\subsection{\dam{Experimental quantum and comparative advantage}}

Our experiment demonstrates quantum advantage (\ref{qadv}) and comparative advantage (\ref{cadv}). The transaction time $\Delta T_\text{tran}$ in our scheme is 
\begin{equation}
\label{eqtran}
    \Delta T_\text{tran}=\Delta T_\text{proc}+\Delta T_\text{comm},
\end{equation}
where
\begin{equation}
\label{eqcomm}
    \Delta T_\text{comm}=\frac{L_\text{fibre}}{c_\text{fibre}}
\end{equation}
is the communication time between $L_0$ and $L_1$\dami{, where $L_\text{fibre}$ is the length of the optical fibre between $L_0$ and $L_1$, and where $c_\text{fibre}\approx 2 \times 10^{8}~\text{ms}^{-1}$ is the speed of light through the fibre}. Thus, from (\ref{qadv}), (\ref{tranC}), (\ref{eqtran}) and (\ref{eqcomm}), we obtain
\begin{equation}
\label{qadvc}
    QA=\frac{L_\text{fibre}}{c_\text{fibre}}-\Delta T_\text{proc}.
\end{equation}
Thus, quantum advantage ($QA>0$) can be demonstrated for lengths
\begin{equation}
  L_\text{fibre}\geq \Delta T_\text{proc}c_\text{fibre} \approx 0.3 \text{~km},
\end{equation}
as \dam{mentioned} in the main text\dami{, since $\Delta T_\text{proc} \approx 1.5~\mu\text{s}$}. We observed the quantum advantage in the fibre optic network within the city of Jinan, Shandong Province, as depicted in Fig.~2(a). The fibre length connecting these locations is $L_\text{fibre}=2.766$ km, while the direct free-space distance is $D\approx 426$ m. The token is tested 20 times with randomness in choosing the presentation location $b$, resulting in an average transaction time of $\dam{\Delta} T_\text{tran}=15.34 \pm 0.01 ~\mu \text{s}$ and an average error rate of $6.02\%$. Notably, the quantum advantage in this scenario amounts to $QA = 12.32\pm 0.01 ~\mu \text{s}$, shown in Fig.~3(a).  

Furthermore, from (\ref{cadv}) and (\ref{tranCF}) -- (\ref{eqcomm}), we have
\begin{equation}
    \label{cadvc}
    CA=\frac{2D}{c}-\Delta T_\text{proc}-\frac{L_\text{fibre}}{c_\text{fibre}}.
\end{equation}
In general, $L_\text{fibre}\geq D$, hence, comparative advantage ($CA>0$) requires
\begin{equation}
    D>\frac{\Delta T_\text{proc}}{\frac{2}{c}-\frac{1}{c_\text{fibre}}}\approx 0.9 {\rm~ km},
\end{equation}
as \dam{mentioned} in the main text. We realize the comparative advantage in field-deployable fibre between Yiyuan ($36^{\circ}  10^{\prime}  50.4^{\prime \prime} N, 118^{\circ}  12^{\prime}  10^{\prime \prime} E$) and Mazhan ($36^{\circ}  0^{\prime}  19^{\prime \prime} N, 118^{\circ}  42^{\prime}  35^{\prime \prime} E$) in Shandong Province of China, which are connected by an optical fibre channel of length $L_\text{fibre}= 60.54$ km and separated by a physical distance of $D=51.60$ km, as shown in Fig.~2(b). We ran the transactions 20 times. The error rates of all trials are below $\gamma_{err}=9.4\%$, with an average error rate of $6.00\%$. Additionally, the average transaction time is $\dam{\Delta} T_\text{tran}=304.20\pm 0.01 ~\mu \text{s}$. Thus, we obtained the comparative advantage $CA= 39.80\pm 0.011 ~\mu \text{s}$ shown in Fig.~3(b). \adr{See SI for further details.} 

\subsection{Security analysis}

\dam{The security analysis of our experimental demonstration is based on the following lemmas and theorem, which apply to the general case of the practical scheme discussed above in which Alice reports losses to Bob.} \dam{The analysis for our implementation, in which Alice does not report losses, reduces straightforwardly to the case $P_\text{det}=\gamma_\text{det}=1$ and $n=N$.}

\begin{lemma}
\label{privacylemma}
We assume that the quantum token scheme is perfectly protected against arbitrary multi-photon attacks. We also assume that Bob does not obtain any information about Alice's measurement basis labelled by the bit $z$ via side-channel attacks, clock synchronization attacks or by any other means. We assume that Alice chooses the bit $b$ denoting her chosen presentation location $L_b$ randomly and securely. Then, the quantum scheme is $\epsilon_{\text{priv}}-$private with
\begin{equation}
	\epsilon_{\text{priv}}=\beta_\text{E}\,.
\end{equation}
\end{lemma}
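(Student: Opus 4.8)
The plan is to reduce the privacy guarantee to a one-time-pad argument in which Alice's presentation bit $b$ is masked by her private basis bit $z$ through the transmitted bit $c = b \oplus z$ of step 3, and then to account for the single imperfection permitted by the hypotheses, namely the bias $\beta_\text{E}$ of $z$.

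First I would catalogue all information available to a dishonest Bob at any time strictly before Alice presents her token (i.e.\ before step 5). This consists of the data $\textbf{t}, \textbf{u}$ that Bob himself generated together with any quantum systems he may have retained, the timing information $T_\text{bit}$ at which $c$ is received, and the bit $c$ itself. I would then argue that every piece of this \emph{except} $c$ is statistically independent of $b$: the token preparation in steps 1--2 is completed before $b$ is even obtained by A$_0$ in step 3, so $\textbf{t}, \textbf{u}$ and any retained systems are independent of $b$; the hypothesis excluding clock-synchronization attacks ensures $T_\text{bit}$ carries no information about $b$; and the hypothesis of perfect protection against multi-photon attacks, together with the exclusion of side channels, ensures Bob can extract no information about $z$ from the states he prepared. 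Hence, conditioned on all of Bob's $b$-independent data, the only $b$-dependent quantity he holds is $c$.

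Next I would carry out the one-time-pad step. Since Alice chooses $b$ uniformly and securely (third hypothesis) and independently of $z$, and since Bob has no information about $z$ beyond its prior distribution, I would first show that $c = b \oplus z$ is uniformly distributed and statistically independent of $z$: indeed $P(c=0) = P(b = z) = \tfrac{1}{2}$ because $b$ is uniform and independent of $z$, and the analogous computation gives $P(z=0 \mid c) = P(z=0)$. Consequently, observing $c$ gives Bob no information about $z$, and since $b = c \oplus z$, his optimal guess of $b$ given $c$ succeeds with probability exactly $\max\{P(z=0), P(z=1)\}$. Using that the bias of $z$ is at most $\beta_\text{E}$, this maximum is at most $\tfrac{1}{2} + \beta_\text{E}$, yielding $\epsilon_{\text{priv}} = \beta_\text{E}$.

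The routine computations (uniformity and independence of $c$, and the guessing bound) are elementary; the real work, and the main obstacle, is the first step: rigorously justifying that the hypotheses leave $c$ as Bob's sole $b$-correlated resource. In particular, the connection between multi-photon protection and privacy is the crux — one must argue that the only channel through which Bob could learn $z$, and hence unmask $b$ from $c$, is by extracting basis information from non-qubit states, which the perfect multi-photon protection is precisely designed to forbid. I would make this precise by showing that, under the stated assumptions, the joint state of Bob's pre-presentation registers factorizes as a part independent of $(b,z)$ tensored with the classical bit $c$, so that the reduction to the pure one-time-pad scenario is exact.
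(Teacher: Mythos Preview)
Your proposal is correct and matches the paper's approach: the paper does not give a detailed proof but simply states that it follows straightforwardly from the proof of Lemma~4 in Ref.~\cite{KLPGR22}, which is precisely the one-time-pad argument you outline. Your careful enumeration of Bob's pre-presentation resources and the reduction to the classical bit $c=b\oplus z$, followed by the elementary guessing bound $\max\{P(z=0),P(z=1)\}\le \tfrac{1}{2}+\beta_\text{E}$, is exactly the intended argument.
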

The proof of lemma~\ref{privacylemma} follows straightforwardly from the proof of lemma 4 in Ref.~\cite{KLPGR22}.

We note that our implementation is perfectly protected against arbitrary multi-photon attacks because Alice does not report any losses to Bob~\cite{KLPGR22,BCDKPG21}. \dami{The problem of side-channel attacks is very general in quantum cryptography. However, effective counter-measures are available \dami{(e.g.~\cite{BCDKPG21})}, although we did not implement them in our experimental demonstration.}

\dami{In our implementation, the time synchronization between Alice’s and Bob’s electronic boards was accomplished through an optical fibre channel. Ideally, \adr{given Bob and Alice's mistrust}, each party should have an independent trustworthy method for synchronizing their agents' clocks. Potential time synchronization attacks would not affect unforgeability, but could compromise correctness and user privacy. If Bob cannot rely on his agents' clocks being synchronized then he cannot be sure Alice is presenting the token at a valid space-time region. If Alice cannot rely on her agents' clocks being synchronized then she cannot be sure she will present the token at a valid space-time region, and she might give away information (about her chosen token presentation region) sooner than she intended, or receive resources (in exchange for the token) in a different region than she intended. We emphasize that this issue arises in any quantum token scheme and is not specific to the S-tokens implemented here, and can be resolved by using independent reliable secure timing and frequency networks.
The problem of secure time synchronization is quite general in relativistic quantum cryptography. Although we did not solve this problem in our implementation, countermeasures are possible, for example synchronising the clocks in secure laboratories or
via a secure global position system.}

In our experiment, we obtained $\beta_\text{E}=10^{-5}$. Thus, our implementation is $10^{-5}-$private, given the assumptions of Lemma~\ref{privacylemma}. The privacy level of our scheme could be made arbitrarily good, since one can decrease $\beta_\text{E}$ exponentially in $r$, by computing $z$ as the sum modulo two of $r$ close to random bits~\cite{Pilinguplemma}.  This can be done at any time before the scheme.

\begin{lemma}
\label{robsutnesslemma}
If
\begin{equation}
	0<\gamma_\text{det} <P_{\text{det}}\,,
\end{equation}
then the quantum token scheme is $\epsilon_{\text{rob}}-$robust with
\begin{equation}
	\epsilon_{\text{rob}}=\left(\frac{P_{\text{det}}}{\gamma_\text{det}}\right)^{N\gamma_\text{det}}\left(\frac{1-P_{\text{det}}}{1-\gamma_\text{det}}\right)^{N(1-\gamma_\text{det})}\,.
\end{equation}
\end{lemma}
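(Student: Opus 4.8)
The plan is to reduce the robustness claim to a single lower-tail large-deviation estimate for a binomial random variable. In an honest run of the loss-reporting scheme, $A_0$ attempts to measure each of the $N$ states sent by $B_0$ and places the index $k$ in the reported set $\Lambda$ precisely when her measurement succeeds. By the definition of $P_\text{det}$, each state is reported as successfully measured with probability $P_\text{det}$, and, modelling the $N$ pulses as independent and identically distributed detection events, the count $n=\lvert\Lambda\rvert$ is distributed as $\text{Bin}(N,P_\text{det})$. Crucially, $B_0$'s abort decision depends only on $n$: he aborts if and only if $n<\gamma_\text{det}N$. Since this rule makes no reference to $b$, the resulting bound holds for both $b\in\{0,1\}$, as required by the definition of robustness.

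First I would observe that the hypothesis $\gamma_\text{det}<P_\text{det}$ places the threshold strictly below the mean $\mathbb{E}[n]/N=P_\text{det}$, so $\{n<\gamma_\text{det}N\}$ is a lower-tail event. I would then apply the Chernoff--Hoeffding bound in its relative-entropy form: for any $t<0$, Markov's inequality gives $\Pr[n\le\gamma_\text{det}N]\le e^{-t\gamma_\text{det}N}\,\mathbb{E}[e^{tn}]$, and substituting the moment generating function $\mathbb{E}[e^{tn}]=\bigl(P_\text{det}e^{t}+1-P_\text{det}\bigr)^{N}$ and optimizing over $t<0$ yields
\begin{equation}
\Pr[n<\gamma_\text{det}N]\le\Pr[n\le\gamma_\text{det}N]\le\exp\!\bigl(-N\,D(\gamma_\text{det}\Vert P_\text{det})\bigr),
\end{equation}
where $D(\gamma_\text{det}\Vert P_\text{det})=\gamma_\text{det}\ln\frac{\gamma_\text{det}}{P_\text{det}}+(1-\gamma_\text{det})\ln\frac{1-\gamma_\text{det}}{1-P_\text{det}}$ is the binary relative entropy.

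The remaining step is purely algebraic: exponentiating the two terms of $D$ separately converts this bound into the product
\begin{equation}
\exp\!\bigl(-N\,D(\gamma_\text{det}\Vert P_\text{det})\bigr)=\left(\frac{P_\text{det}}{\gamma_\text{det}}\right)^{N\gamma_\text{det}}\left(\frac{1-P_\text{det}}{1-\gamma_\text{det}}\right)^{N(1-\gamma_\text{det})},
\end{equation}
which is exactly the stated $\epsilon_\text{rob}$. Because $\mathbb{E}[n]=N P_\text{det}$ strictly exceeds $\gamma_\text{det}N$, the exponent $D(\gamma_\text{det}\Vert P_\text{det})$ is positive and $\epsilon_\text{rob}$ decays exponentially in $N$, matching the general requirement for robustness in the security definitions.

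I expect the only genuine subtlety to lie in justifying the i.i.d.\ model for the $N$ detection events and in the precise reading of $P_\text{det}$, rather than in the inequality itself. If $P_\text{det}$ is merely a lower bound on the true per-pulse detection probability, the estimate only improves, since $D(\gamma_\text{det}\Vert p)$ increases as $p$ moves above $\gamma_\text{det}$; and if the detection events are negatively associated rather than strictly independent, the same multiplicative Chernoff estimate still applies. The binomial-tail computation is standard, so the substantive work is this modelling justification together with the observation that the abort event is a function of $n$ alone.
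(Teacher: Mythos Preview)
Your proof is correct and follows essentially the same approach as the paper: identify the abort event as $\{n<\gamma_\text{det}N\}$ for $n\sim\text{Bin}(N,P_\text{det})$ and apply the tight (relative-entropy) Chernoff lower-tail bound, which the paper states separately as Proposition~\ref{proposition3} and then invokes with $p=P_\text{det}$, $\epsilon=1-\gamma_\text{det}/P_\text{det}$. Your additional remarks on the i.i.d.\ modelling and monotonicity in $P_\text{det}$ go slightly beyond what the paper spells out here but are consistent with its framework.
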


\dam{Our implementation is perfectly robust as Alice does not report losses to Bob, hence, Bob aborts with zero probability.}

\begin{lemma}
\label{correctnesslemma}
If
\begin{eqnarray}
	0&<&E<\gamma_\text{err}\,,\nonumber\\
	0&<&\nu_\text{cor}<\frac{P_{\text{det}}(1-2\beta_\text{PB})}{2}\,,\end{eqnarray}
\dam{for some $\nu_\text{cor}\in(0,1)$,} then the quantum scheme is $\epsilon_{\text{cor}}-$correct with
\begin{eqnarray}
\label{correctness}
	\epsilon_{\text{cor}}&=&\left(\frac{P_{\text{det}}(1-2\beta_\text{PB})}{2\nu_\text{cor}}\right)^{N\nu_\text{cor}}\left(\frac{2-P_{\text{det}}(1-2\beta_\text{PB})}{2-2\nu_\text{cor}}\right)^{N(1-\nu_\text{cor})}\nonumber\\
 &&\qquad+\left(\frac{E}{\gamma_\text{err}}\right)^{N\nu_\text{cor}\gamma_\text{err}}\left(\frac{1-E}{1-\gamma_\text{err}}\right)^{N\nu_\text{cor}(1-\gamma_\text{err})}.
\end{eqnarray}
\end{lemma}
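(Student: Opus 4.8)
\emph{Proof proposal.} The plan is to bound the honest rejection probability at the genuine presentation region $L_b$ by a union over two failure modes, each controlled by a multiplicative (relative-entropy form) Chernoff bound; the two resulting tail bounds are exactly the two summands of $\epsilon_{\text{cor}}$. First I would fix an honest run and identify the checked indices. Since $c=b\oplus z$ gives $d_b=c\oplus b=z$, the set $\Delta_b$ consists of the (detected) indices $k$ for which Bob's preparation basis $u_k$ equals Alice's measurement basis $z$ — precisely the states Alice measures in their own preparation basis, so each contributes a wrong outcome ($x_k^b\neq t_k$) with probability at most $E=\max_{t,u}E_{tu}$. Bob rejects iff $N_{\text{errors},b}>\gamma_\text{err}N_b$, so introducing the free threshold $\nu_\text{cor}$ I would write
\begin{equation}
\Pr[\text{reject}]\le \Pr[N_b\le N\nu_\text{cor}]+\Pr\big[N_{\text{errors},b}>\gamma_\text{err}N_b\,\big|\,N_b> N\nu_\text{cor}\big].
\end{equation}

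For the first term I would model the event ``index $k$ is detected and $u_k=z$'' as a Bernoulli trial. Since detection occurs with probability $P_{\text{det}}$ and the preparation-basis bias is at most $\beta_{\text{PB}}$ (so that $\Pr[u_k=z\mid z]\ge \tfrac12-\beta_{\text{PB}}$), and these are independent of $z$, each trial succeeds with probability at least $p^\ast:=\tfrac{P_{\text{det}}(1-2\beta_{\text{PB}})}{2}$. The hypothesis $\nu_\text{cor}<p^\ast$ places $N\nu_\text{cor}$ strictly below the mean, and the lower-tail Chernoff bound gives $\Pr[N_b\le N\nu_\text{cor}]\le\big(\tfrac{p^\ast}{\nu_\text{cor}}\big)^{N\nu_\text{cor}}\big(\tfrac{1-p^\ast}{1-\nu_\text{cor}}\big)^{N(1-\nu_\text{cor})}$, which is the first summand after substituting $p^\ast$ and $1-p^\ast=\tfrac{2-P_{\text{det}}(1-2\beta_{\text{PB}})}{2}$. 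Using $p^\ast$ is the worst case, since for a threshold below the mean the lower-tail bound increases as the per-trial success probability decreases.

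For the second term I would condition on the checked set and use that, under honest preparation and measurement, the per-check error events are independent with probability at most $E$. With $\gamma_\text{err}>E$ (the remaining hypothesis), the upper-tail Chernoff bound for $m$ checks reads $\big(\tfrac{E}{\gamma_\text{err}}\big)^{m\gamma_\text{err}}\big(\tfrac{1-E}{1-\gamma_\text{err}}\big)^{m(1-\gamma_\text{err})}=e^{-mD(\gamma_\text{err}\|E)}$, which is decreasing in $m$; replacing $m=N_b$ by its lower bound $N\nu_\text{cor}$ (valid on the conditioning event $N_b>N\nu_\text{cor}$) yields the second summand. Adding the two bounds gives $\epsilon_{\text{cor}}$.

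The main obstacle is the careful handling of the two-stage randomness so the two Chernoff bounds compose cleanly: the count $N_b$ is itself random, and I must ensure that conditioning on $N_b$ being large does not inflate the per-check error probability. This is safe because, in the honest run, whether the measurement on state $k$ matches $t_k$ is independent of which indices satisfy $u_k=z$ and of detection, so the error probability stays bounded by $E$ on the conditioning event. A minor technical point is that the per-index success and error probabilities are only one-sided bounds and may be heterogeneous across $k$; I would justify both Chernoff estimates from these bounds via the standard monotonicity of the relative-entropy rate function (or a stochastic-domination argument). Overall this parallels the correctness lemma of Ref.~\cite{KLPGR22}, generalized to the present scheme.
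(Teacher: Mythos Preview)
Your proposal is correct and follows essentially the same approach as the paper's proof: the same split at the threshold $N\nu_\text{cor}$ on $|\Delta_b|$, the same two relative-entropy Chernoff bounds yielding the two summands, the monotonicity-in-$m$ argument to replace $N_b$ by $N\nu_\text{cor}$ in the second term, and the stochastic-domination handling of the heterogeneous one-sided per-index probabilities for the first term. The paper makes these steps explicit via dedicated propositions (the tight Chernoff bounds and a coupling-style monotonicity lemma for sums of heterogeneous Bernoullis), but your outline already identifies each of these ingredients and the subtleties they address.
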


\dami{
Our unforgeability proof uses the maximum confidence measurement of the following quantum state discrimination task.

\begin{definition}
\label{Pbound}
    Consider the following quantum state discrimination problem. For $k\in\Omega_\text{qub}$, we define $\rho_1^k=\rho_{00}^k$, $\rho_2^k=\rho_{01}^k$, $\rho_3^k=\rho_{10}^k$, $\rho_4^k=\rho_{11}^k$, $q_1^k=P^k_\text{PS}(0)P^k_\text{PB}(0)$, $q_2^k=P^k_\text{PS}(0)P^k_\text{PB}(1)$, $q_3^k=P^k_\text{PS}(1)P^k_\text{PB}(0)$, $q_4^k=P^k_\text{PS}(1)P^k_\text{PB}(1)$, and 
    \begin{equation}
    \label{BB1main}
r^k_i=\frac{q^k_i+q^k_{i+1}}{2} \, , \qquad\chi^k_i =\frac{q^k_i\rho^k_i+q^k_{i+1}\rho^k_{i+1}}{q^k_i+q^k_{i+1}} \, , \qquad \rho^k=\sum_{i=1}^4 r^k_i \chi^k_i\, ,
    \end{equation}
for all $i\in[4]$. Let $P_{\text{MC}}(\chi^k_j)$ be the maximum confidence measurement that the received state was $\chi^k_j$ when Alice \adr{ is distinguishing states from the 
ensemble $\{ \chi_j^k , r_i^k \}$ and her outcome is $j\in [4]$ \cite{CABGJ06}. 
This} maximum is taken over all positive operators $Q$ acting on a two dimensional Hilbert space. That is, we have
\begin{equation}
\label{BB2main}
 P_{\text{MC}}(\chi^k_j)=\max_{Q\geq 0}\frac{r^k_jTr[Q\chi^k_j]}{Tr[Q\rho^k]} \, .   
\end{equation}
\end{definition}

}

\begin{theorem}\label{unforgeabilitytheorem}
Suppose that the following constraints hold:
\adr{
\begin{eqnarray}\label{conditionsmain}
\max_{j\in[4],k\in \Omega_\text{qub}} \dami{2} P_{\text{MC}}(\chi^k_j) & < & 1 \, , \nonumber \\
 N\gamma_\text{det}&\leq &  n \leq N\, ,\nonumber\\
0&<&P_{\text{noqub},\theta}<\nu_\text{unf}<\gamma_\text{det}\Bigl(1-\frac{\gamma_\text{err}}{1-P_\text{bound}}\Bigr)\, ,
\end{eqnarray}
for predetermined $\gamma_\text{det}\in(0,1]$ and $\gamma_\text{err}\in[0,1)$ and for some $\nu_\text{unf}\in(0,1)$, where $n=\lvert \Lambda\rvert$, and where $P_\text{bound}$ satisfies
\begin{equation}
\label{BB3main}
\max_{j\in[4],k\in \Omega_\text{qub}} \dami{2} P_{\text{MC}}(\chi^k_j)\leq P_{\text{bound}} < 1\, .
\end{equation}
In the case} that losses are not reported we take $\gamma_\text{det}=1$ and $n=N$.
 Then the quantum token schemes $\mathcal{QT}_1$ and $\mathcal{QT}_2$ are $\epsilon_{\text{unf}}-$unforgeable with
\begin{eqnarray}
\label{unforgeabilitymain}
\epsilon_{\text{unf}} &=& \sum_{l=0}^{\lfloor N(1-\nu_\text{unf}) \rfloor}\left(\begin{smallmatrix}N\\ l\end{smallmatrix}\right)(1-P_{\text{noqub},\theta})^{l}(P_{\text{noqub},\theta})^{N-l}\, \nonumber\\
&&\quad + \sum_{l=0}^{\lfloor n\gamma_\text{err} \rfloor}\left(\begin{smallmatrix}n- \lfloor N\nu_\text{unf} \rfloor\\ l\end{smallmatrix}\right)(1-P_{\text{bound}})^{l}(P_{\text{bound}})^{n- \lfloor N\nu_\text{unf} \rfloor-l} \, .
\end{eqnarray}
\end{theorem}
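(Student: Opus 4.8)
The plan is to bound the probability that a dishonest Alice gets Bob to validate tokens at \emph{both} presentation regions, and I would do so by reducing the global forging event to a per-state discrimination task. First I would partition the $N$ states Bob sends into the ``good'' set $\Omega_\text{qub}$ of states that are genuinely two-dimensional and lie within the uncertainty angle $\theta$ of an ideal qubit state, and the complementary ``bad'' set; by construction each transmitted state is bad with probability at most $P_{\text{noqub},\theta}$ as in (\ref{Pnoqubtheta}), and for the bad states I would simply grant Alice unconditional success (worst case). The conceptual heart is the single good-qubit bound. Since the subsets $\Delta_0,\Delta_1$ checked at $L_0,L_1$ are exactly the indices with $u_k=c$ and $u_k=c\oplus 1$, passing both tests on a good qubit forces Alice's reported pair $(x^0_k,x^1_k)$ to agree with $t_k$ in whichever basis Bob actually prepared. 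Enumerating the four possible outcomes $(x^0_k,x^1_k)$ shows that each one corresponds precisely to correctly identifying one of the four grouped ensembles $\chi^k_j$ of Definition~\ref{Pbound}. Hence for any measurement the per-qubit success probability equals $\sum_{j} 2 r^k_j\,\Tr[Q_j\chi^k_j]$, which by the maximum-confidence characterization (\ref{BB2main}) and $\sum_j Q_j=I$ is at most $\max_j 2P_{\text{MC}}(\chi^k_j)\le P_\text{bound}<1$.

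Next I would lift this single-qubit bound to the joint attack. Because the honest Bob prepares the states independently in product form $\bigotimes_k \rho^k$ while Alice may apply an arbitrary global measurement, I would establish the subset-product bound: for every subset $T$ of good qubits, the probability that Alice's reported bits pass both tests on all $k\in T$ is at most $P_\text{bound}^{|T|}$. This follows by coarse-graining the global POVM to the binary pass/fail outcome on each qubit and tracing out the complement, which yields a valid measurement on $\bigotimes_{k\in T}\rho^k$ to which the per-qubit maximum-confidence bound applies multiplicatively. Armed with this subset-product bound I would invoke the combinatorial tail lemma used in Ref.~\cite{KLPGR22} to conclude that the number of good qubits on which Alice succeeds is stochastically dominated by a binomial variable with success parameter $P_\text{bound}$; equivalently, the number of errors she incurs on the good qubits dominates a binomial variable with parameter $1-P_\text{bound}$.

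Finally I would assemble the two contributions to $\epsilon_\text{unf}$. Conditioning on the number of bad states, the event that there are more than $N\nu_\text{unf}$ of them has probability at most the first binomial tail in (\ref{unforgeabilitymain}), which is a lower-tail estimate on the count of good states precisely because the constraint $P_{\text{noqub},\theta}<\nu_\text{unf}$ holds. On the complementary event there are at least $n-\lfloor N\nu_\text{unf}\rfloor$ good states among the $n$ reported, and to pass both tests Alice must keep her total error count at most $\lfloor n\gamma_\text{err}\rfloor$; by the stochastic domination above, the probability of so few errors on at least $n-\lfloor N\nu_\text{unf}\rfloor$ good qubits is bounded by the second binomial tail. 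Adding the two terms yields (\ref{unforgeabilitymain}), and the remaining inequality $\nu_\text{unf}<\gamma_\text{det}\bigl(1-\gamma_\text{err}/(1-P_\text{bound})\bigr)$ is exactly what forces both tails to decay exponentially in $N$ (with the reduction $\gamma_\text{det}=1$, $n=N$ in the no-loss-reporting case of our implementation).

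The main obstacle I anticipate is the tensorization step: rigorously passing from the single-qubit maximum-confidence bound to the subset-product bound against fully entangled joint measurements, and then converting subset-product bounds into a genuine binomial tail. Because Alice may correlate her guesses across qubits arbitrarily, this factorization cannot be taken for granted and must be argued from the product structure of Bob's ensemble together with the multiplicativity of the maximum-confidence bound, after which the combinatorial lemma handles the bookkeeping. A secondary technical point is verifying that the per-qubit reduction survives the relaxation from ideal BB84 states to states merely within uncertainty angle $\theta$ of the ideal ones, which is exactly why $P_{\text{MC}}$ must be evaluated on the perturbed ensembles $\chi^k_j$ of Definition~\ref{Pbound} rather than on orthonormal bases as in the earlier analysis of Ref.~\cite{KLPGR22}.
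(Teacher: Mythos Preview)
Your high-level architecture matches the paper's: partition the $N$ transmitted states into good qubits (single-photon, within angle $\theta$) and bad ones, grant Alice free success on bad states, bound her per-qubit success on good ones by $2\max_j P_{\text{MC}}(\chi^k_j)\le P_\text{bound}$, and assemble two binomial tails by splitting on whether $|\Omega_{\text{noqub},\theta}|$ exceeds $N\nu_\text{unf}$. Your reduction of the two-token acceptance condition on a single good qubit to choosing one of the four grouped states $\chi^k_j$---with the factor of two arising because each received state lies in exactly two of the four candidate sets---is precisely the paper's computation in equations~(\ref{E1})--(\ref{E2}).

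Where you diverge is the tensorization against joint measurements. You propose to establish subset-product bounds by ``tracing out the complement'' so that the single-qubit bound ``applies multiplicatively'', and then invoke a combinatorial tail lemma you attribute to Ref.~\cite{KLPGR22}. The paper does neither. No such combinatorial lemma appears in Ref.~\cite{KLPGR22}: that earlier analysis proceeded via operator-norm estimates on explicit test operators and required Bob's four states to form two qubit orthonormal bases, an assumption deliberately dropped here. Instead the paper proves two new self-contained lemmas. Lemma~\ref{lemma1} shows that the single-qubit maximum-confidence bound holds \emph{conditionally} on any auxiliary state $\sigma_x$ and any extra measurement outcome $y$: for an arbitrary joint POVM on the full Hilbert space, the probability of success on qubit $k$ given $(\sigma_x,y)$ is still at most $\max_j P_{\text{MC}}(\chi^k_j)$. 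Lemma~\ref{lemma2} then proves by induction on $N$ that if each round's winning probability, conditioned on all other rounds' inputs, outputs, and any side outcome $o_\text{extra}$, is at most $P_\text{bound}^k$, then the probability of at most $n$ failures in $N$ rounds is bounded by the corresponding probability for $N$ independent coins. Chaining these two lemmas yields the second term of~(\ref{unforgeabilitymain}) directly, without subset-product bounds as an intermediate object.

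Your route can be made rigorous---the subset-product bound does follow from the conditional maximum-confidence bound via the chain rule, and subset-product bounds do imply binomial-tail domination---but the justification you give (tracing out and asserting multiplicativity) skips exactly the step that Lemma~\ref{lemma1} supplies, and the combinatorial conversion you need is not in the prior work but is essentially what Lemma~\ref{lemma2} proves from scratch. The remaining bookkeeping (monotonicity of the tail in the number of trials, and in the loss-reporting case letting Alice optimally place all bad indices inside $\Lambda$ so that the good-qubit count in $\Lambda$ is at least $n-\lfloor N\nu_\text{unf}\rfloor$) is as you describe.
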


\adr{Note that the conditions (\ref{conditionsmain}) \dami{and (\ref{BB3main}})} imply that the bound (\ref{unforgeabilitymain}) decreases exponentially with $N$.

\adr{Theorem \ref{unforgeabilitytheorem} is improved in two main ways from previous work \cite{KLPGR22} and earlier mistrustful cryptography security analyses: 1) it allows Bob's prepared states to deviate arbitrarily from the intended BB84 states up to an angle $\theta$ in the Bloch sphere without restricting the prepared states to form qubit orthonormal bases; and 2) it replaces $P_{\text{noqub}}$ by $P_{\text{noqub},\theta}$. That is, in the security analysis of Ref.~\cite{KLPGR22}, $\theta$ was considered an upper bound on the uncertainty angle in the Bloch sphere for state preparation. But, here we relax this assumption by allowing the uncertainty angle to be greater than $\theta$ with a probability $P_\theta$. The probability $P_{\text{noqub},\theta}$ considers this via equation (\ref{Pnoqubtheta}).
Lemmas \ref{robsutnesslemma} and \ref{correctnesslemma} also improve on the corresponding lemmas 2 and 3 and theorem 1 of Ref.~\cite{KLPGR22}, by using tighter Chernoff bounds.} 

In our implementation, Alice does not report any losses to Bob. Thus, our implementation is perfectly robust, as Bob aborts with zero probability, hence, we can ignore lemma \ref{robsutnesslemma} in the security analysis for our implementation. We can also set $P_\text{det}=\gamma_\text{det}=1$ in lemma \ref{correctnesslemma} and theorem \ref{unforgeabilitytheorem} for our experimental demonstration. The proofs for lemmas~\ref{robsutnesslemma} and~\ref{correctnesslemma} and for theorem~\ref{unforgeabilitytheorem} are given in the Supplementary Information.

We obtained $N=10,048$ in approximately five minutes. We obtained the following experimental parameters in our implementation with seven standard deviations: $E=0.062550$, $P_\text{noqub} = %4.87 \times 10^{-5}
\dami{4.9\times 10^{-5}}$, $\beta_\text{PB}=%0.001357
\dami{0.001360}$, $\beta_\text{PS}=%0.001118
\dami{0.001120}$. Details for the estimation of these experimental parameters are given in the Supplementary Information.

We set $\gamma_\text{err}=0.094$ and $\nu_\text{cor}=0.457643134$. Using the previous experimental parameters, the required constraints in lemma~\ref{correctnesslemma} are satisfied and we obtain \dami{that the two terms in (\ref{correctness}) are respectively $\epsilon_\text{cor}^1=2.05304 \times 10^{-15}$ and $\epsilon_\text{cor}^2=1.89154 \times 10^{-15}$, giving $\epsilon_\text{cor}=\epsilon_\text{cor}^1+\epsilon_\text{cor}^2=3.94458 \times 10^{-15}$.} 

We measured $\theta=%4.522881426 
\dami{5.115515}^\circ$ with $P_\theta=0.027$, guaranteed correct unless with an error probability smaller than $1.3\times 10^{-12}<2.6\times 10^{-12}$. This is consistent with our seven standard-deviation measurements of other experimental parameters, guaranteeing the accuracy of our measurements unless with error probabilities smaller than $2.6\times 10^{-12}$, as discussed below. Thus, from (\ref{Pnoqubtheta}), we obtained $P_{\text{noqub},\theta}=0.027047677$. With the obtained values of $\theta, \beta_\text{PB}$ and $\beta_\text{PS}$, we obtained a numerical bound $P_\text{bound}=\dami{0.884130}$ satisfying (\ref{BB3main}) using Mathematica software. Taking $\nu_\text{unf}=P_{\text{noqub},\theta}+\dami{0.0105=0.037547677}$, we obtained respective values for the two terms of \dami{$\epsilon_\text{unf}$ in (\ref{unforgeabilitymain}) of $\epsilon_\text{unf}^1=
3.72375\times 10^{-10}$ and $\epsilon_\text{unf}^2=5.11874\times 10^{-9}$, giving $\epsilon_\text{unf}=\epsilon_\text{unf}^1+\epsilon_\text{unf}^2=5.49112\times 10^{-9}$.  }

We note that $\epsilon_\text{cor}$ depends on several variables $V_1,V_2,\ldots,V_{K_\text{cor}}$, and $\epsilon_\text{unf}$ depends on variables $W_1,W_2,\ldots,W_{K_\text{unf}}$, whose values are estimated within a confidence interval of seven standard deviations. This means that the estimated value for each of these variables $V_j$ or $W_j$ could be wrong with a probability $P_\text{wrong}=2.6\times 10^{-12}$, which is the value corresponding for seven standard deviations. We assume that if any of the variables $V_1,V_2,\ldots,V_{K_\text{cor}}$ does not correspond to the estimated values then the token scheme is not correct, while if every variable $V_1,V_2,\ldots,V_{K_\text{cor}}$ corresponds to its estimated value then the token scheme is not correct with probability $\epsilon_\text{cor}$. Thus, assuming that these variables are independent, the probability that the token scheme is not correct is
\begin{equation}
\epsilon_\text{cor}'=1-(1-P_\text{wrong})^{K_\text{cor}}+\epsilon_\text{cor}(1-P_\text{wrong})^{K_\text{cor}}.
\end{equation}
The variables $V_j$ are $\beta_{PB}$, $E_{00}$, $E_{01}$, $E_{10}$, $E_{11}$, $T_\text{exp}$ and $f_\text{sys}$, where $f_\text{sys}$ is the frequency of quantum state generation by Bob and $T_\text{exp}$ is the time taken to generate the $N$ quantum states transmitted to Alice. Thus, we have $K_\text{cor}=7$. Therefore, with $P_\text{wrong}=2.6\times 10^{-12}$ and $\epsilon_\text{cor}=\dami{3.94458} \times 10^{-15}$, we obtain
\begin{equation}
    \epsilon_\text{cor}'=\dami{2.1}\times 10^{-11}.
\end{equation}
Thus, our implementation is proved \dami{$2.1\times 10^{-11}-$}correct.

Similarly, we assume that if any of the variables $W_1,\ldots,W_{K_\text{unf}}$ does not correspond to the estimated values then the token scheme is not unforgeable, while if every variable $W_1,\ldots,W_{K_\text{unf}}$ corresponds to its estimated value then the token scheme is not unforgeable with probability $\epsilon_\text{unf}$. Thus, assuming that these variables are independent, the probability that the token scheme is not unforgeable is
\begin{equation}
\epsilon_\text{unf}'=1-(1-P_\text{wrong})^{K_\text{unf}}+\epsilon_\text{unf}(1-P_\text{wrong})^{K_\text{unf}}.
\end{equation}
The variables $W_j$ are $\beta_{PB}, \beta_{PS}, P_\theta, P_\text{noqub}, T_\text{exp}$ and $ f_\text{sys}$. Thus, we have $K_\text{unf}=6$. Therefore, with $P_\text{wrong}=2.6\times 10^{-12}$ and \dami{
$\epsilon_\text{unf}=\dami{5.49112\times 10^{-9}}$, we obtain
\begin{equation}
\epsilon_\text{unf}'=\dami{5.52\times 10^{-9}} \, .%<5.6 \times 10^{-11}.
    \end{equation}
Thus, our implementation is proved \dami{$5.52 \times 10^{-9}-$}unforgeable.
}

%\appendix
%\renewcommand{\theequation}{\arabic{equation}}
\setcounter{equation}{0}
\setcounter{table}{0}
\setcounter{definition}{0}
\setcounter{figure}{0}
\section*{Supplementary information}

\section{Summary}

\label{sec1}
We give a brief summary of the content discussed in this supplementary information, emphasizing the most important points, and discussing how we believe they can be helpful beyond our implementation in the broader \adr{fields of experimental and theoretical mistrustful quantum cryptography}.

In section~\ref{sec:exp}, we provide a rigorous analysis to estimate various experimental parameters playing a role in our security analysis, for example, upper bounds on the biases in state preparation, basis choice, and selection of the encoding bit $z$, given by $\beta_\text{PS}$, $\beta_\text{PB}$ and $\beta_\text{E}$, respectively, as well as upper bounds on the error rate $E$ and the probability $P_\text{noqub}$ that Bob's heralding pulse has more than one photon.

In particular, it is crucial in mistrustful quantum cryptography implementations to guarantee that $P_\text{noqub}$ is suitably small in order for Bob to be sufficiently protected against photon number splitting attacks~\cite{HIGM95,BLMS00} by Alice (who receives the quantum states from Bob). Thus, we believe our analysis here can be helpful quite broadly in experimental mistrustful quantum cryptography. Our analysis is based on the assumption that the photon source has Poissonian statistics, which is well \adr{supported} in the literature~\cite{avenhaus2008photon,schneeloch2019introduction}.

In section~\ref{sec:theta}, we provide an experimental and theoretical analysis to derive an upper bound $\theta$ on the uncertainty angle on the Bloch sphere for Bob's state preparations, and an upper bound $P_\theta$ on the probability that the bound does not hold.  We believe this analysis can be useful quite broadly in quantum cryptography implementations, as we consider the imperfections of various photonic devices that are commonly used in quantum cryptography, like half wave plates, polarizing beam splitters and rotation mounts. 

It is crucial to characterize the values of $\theta$ and $P_\theta$ in the security analysis of quantum cryptographic protocols. In general, we expect that the security of realistic quantum cryptography protocols will decrease if the prepared states deviate from the intended states in an ideal protocol. Thus, a realistic security analysis must take into account such deviations, as characterized by the parameters $\theta$ and $P_\theta$ in our analysis, hence, must also estimate the value of these deviations using experimental data. 
\adr{
Our work here goes substantially beyond previous security analyses (e.g. \dami{\cite{TVUZ05,NFHM08,BBBGST11,LKBHTKGWZ13,LCCLWCLLSLZZCPZCP14,PJLCLTKD14,BOVZKD18,KLPGR22,SERGTBW23}}) in mistrustful quantum cryptography.
As far as we are aware, no previous security analysis has allowed for general deviations from BB84 states \dami{(or \adr{any alternative set of states stipulated by an ideal protocol})}, measured these deviations experimentally, and based security bounds on these empirical data.   Without these results, claimed security bounds are not reliable, and indeed experimental protocols may be completely insecure. }

\dami{In section~\ref{sec:times}, %we discuss the time values measured in our implementation. In particular
we discuss the time sequence of our implementation, and the time advantages achieved by our experiment.} %We also discuss the problem of secure time synchronization, which is a general \adr{issue} in quantum token schemes and relativistic quantum cryptography.

In section~\ref{sec:related}, we discuss our quantum token scheme in the context of other related works. We also discuss ways in which our schemes can be straightforwardly extended.

In section~\ref{sec:security}, we proved the security of our quantum token scheme implementation. Our unforgeability proof holds even if Alice is required to report losses and applies for arbitrarily powerful dishonest Alice who may detect all quantum states received from Bob and choose to report an arbitrary subset of states as lost.

As discussed in section~\ref{sec:multiple}, our quantum token schemes and the security analysis extend straightforwardly to an arbitrary number of presentation spacetime regions. As we discuss, our experimental setup would guarantee a high degree of security in \adr{realistic} multi-node scenarios \adr{involving global or national networks}.

Our results in section~\ref{sec:security} are technical and apply more broadly to the area of mistrustful quantum cryptography. In many ideal quantum cryptographic protocols, including  relativistic quantum bit commitment \cite{LKBHTKGWZ13} and quantum money tokens \cite{KLPGR22}, Bob sends Alice random states from the BB84 \cite{BB84} or another given set. In practice, the states are prepared with misalignment, not uniformly distributed, are mixed, and include some multi-photon states.  To cheat, Alice must produce statistically plausible results for measurements in both BB84 bases, allowing for a given error level. We present a general security analysis based on maximum confidence quantum measurements \cite{CABGJ06} that strongly bounds Alice's probability of winning games of this type with arbitrary quantum strategies, and discuss applications to specific protocols.

Our main technical results are twofold. First, we consider a broad class of quantum tasks in which Alice receives quantum states from a given set in $N$ independent rounds and is required to obtain particular classical information about the prepared states for all rounds, with the possibility of failing in no more than $n$ rounds, for a given $0\leq n\leq N$. Effectively, Alice is playing a multi-round game which she wins if she succeeds in a sufficiently high proportion of the rounds.   

We show that if Alice's success probability  in the $k$th round is upper bounded by $P_\text{bound}^k$, conditioned on any quantum inputs $\rho_j$ and classical outputs $x_j$ for rounds $j\neq k$ and on any extra measurement outcome $o_{\text{extra}}$ obtained by Alice, for all $k\in[N]$, then Alice's success probability $P_\text{win}(n,N\vert o_\text{extra})$ in the task conditioned on the extra outcome $o_{\text{extra}}$ is upper bounded by the probability $P_{\text{bound}}^{\text{coins}}(n,N)$ of having no more than $n$ errors in $N$ independent coin tosses with success
probabilities $P_\text{bound}^1, P_\text{bound}^2, \ldots, P_\text{bound}^N$. Thus, we have
\begin{equation}
\label{B1}
   P_{\text{win}}(n,N\vert o_\text{extra})\leq P_{\text{bound}}^{\text{coins}}(n,N)  \leq \sum_{l=0}^{n}\left(\begin{smallmatrix}N\\ l\end{smallmatrix}\right)(1-P_{\text{bound}})^{l}(P_{\text{bound}})^{N-l},
\end{equation}
where $P_\text{bound}^k\leq P_{\text{bound}}<1$ for all $k\in[N]$. This further implies that we can upper bound the right hand side by a Chernoff bound decreasing exponentially with $N$ if $n<N(1-P_{\text{bound}})$.

This result is quite useful for a great variety of quantum cryptography protocols in which Alice's cheating probability reduces to wining the described task. In this case the security proof can be reduced to finding the upper bound $P_\text{bound}^k$ for the round $k$
 conditioned on any quantum inputs $\rho_j$ and classical outputs $x_j$ for rounds $j\neq k$ and on any extra measurement outcome $o_\text{extra}$ obtained by Alice, for all $k\in[N]$. Crucially, we note that the result applies to arbitrary quantum strategies by Alice, including arbitrary joint quantum measurements on the quantum states received in all $N$ rounds.
 
 Examples where this result is useful include relativistic quantum bit commitment protocols (e.g., \cite{LKBHTKGWZ13}), quantum money schemes (e.g., \cite{BOVZKD18}), quantum S-money token schemes \cite{KLPGR22}.  It can also be used for security proofs in other mistrustful quantum cryptography protocols, for example, quantum spacetime-constrained oblivious transfer protocols \cite{PGK18,PG19}. 

Second, we deduce the bound $ P_\text{bound}^k$ for an important and cryptographically relevant subset of the quantum tasks described above, in which Alice's task in each round can be shown to be equivalent to a quantum state discrimination task. In this case, we show that Alice's probability to win the task in round $k$, conditioned on any quantum input states $\rho_i$ and classical outputs $x_i$ for rounds $i\neq k$ and on any extra measurement outcomes $o_\text{extra}$, is upper bounded by her maximum confidence quantum measurement $\max_{j\in S_k} P_{\text{MC}}(\rho^k_j)$ \cite{{CABGJ06}}, where
\begin{equation}
\label{BB2}
 P_{\text{MC}}(\rho^k_j)=\max_{Q\geq 0}\frac{p^k_jTr[Q\rho^k_j]}{Tr[Q\rho^k]},  
\end{equation}
where in the relevant state discrimination task Alice receives the quantum state $\rho^k_j$ with probability $p^k_j$, for all $j\in S_k$, and where $\rho^k=\sum_{j\in S_k} p^k_j\rho^k_j$.

Because $P_{\text{MC}}$ can be shown to increase relatively little for small variations from the ideal protocol, this result allows us to derive significantly tighter and more general security bounds for S-money quantum tokens of Ref.~\cite{KLPGR22}, in which we allow the prepared states to deviate from the target BB84 state up to an angle $\theta$ on the Bloch sphere.  
\adr{Previous security analyses (e.g. \cite{KLPGR22}; see also \cite{LKBHTKGWZ13})} assumed that the four states belonged to two qubit orthonormal bases, which cannot be precisely guaranteed in a realistic experimental setup. 

We further refine the security analysis for the S-money quantum tokens of \cite{KLPGR22} by allowing a small probability $P_\theta$ that the qubit prepared states deviate from the intended BB84 states by an angle greater than $\theta$ in the Bloch sphere.  This allows security to be proven based on experimental data that sample the distribution of deviations from BB84 states.

More broadly, we believe our security analysis can be helpful to analyse the security of practical implementations of mistrustful quantum cryptography. Together with the analysis of multiphoton attacks in Ref.~\cite{BCDKPG21}, these results provide a more rigorous security analysis of implementations of mistrustful quantum cryptography with realistic experimental setups. This is crucial for 
developing the secure mistrustful quantum cryptographic applications envisaged for free space and fibre optic quantum networks and the eventual quantum internet \cite{K08,WER18}.

%\section{The numerical analysis of quantum token preparation}
\section{Estimation of experimental imperfections}
\label{sec:exp}
%\dami{Edit this section. Need to be consistent with what we say about bounds on mu Pnoqub etc.}

In this section we discuss our experimental procedure to determine the reported values for the experimental imperfections given by $\beta_{\text{PS}},\beta_{\text{PB}},\beta_\text{E},E$ and $P_\text{noqub}$. The estimations of $\theta$ and $P_\theta$ are provided in section~\ref{sec:theta}. In this document, for a variable $y$ depending on variables $x_1,\ldots,x_n$, we estimate its standard deviation by
\begin{equation}
\label{Ap1}
    \sigma_y=\sqrt{\sum_{i=1}^n\Bigl(\frac{\partial y}{\partial x_j}  \sigma_{x_j} \Bigr)^2 }\,,
\end{equation}
where $\sigma_{x_j}$ is the standard deviation of $x_j$, for all $j\in\{1,2,\ldots,n\}$.

The frequency of our photon source was set at $f_{\text{sys}}=500$~kHz. We collected data for a time $T_{\text{exp}}=331,465$ s. Thus the total number of pulses was $N= T_{\text{exp}}f_{\text{sys}}=1.657325\times 10^{11}$. 
Our setup used a heralding single-photon source. Thus, only the photon pulses activating Bob's heralding detector are considered in our analysis below unless otherwise stated. We obtained $N_B=11,467,415$ pulses activating Bob's heralding detector. From these pulses, Alice's setup obtained $N_0 = 1,348,725 $ events with no detectors being activated, $N_1= 10,118,574$ events  activating only one detector, and  $N_2= 116$ events activating both of her detectors. The $N_0$ and $N_2$ pulses activating none or both detectors were assigned a random measurement outcome, in agreement with the scheme.

%Alice keeps all the events that Bob declares as a heralding signal. 
%In our test, Alice recorded $N=11,467,415$ bits, including zero, one, or multi-detector clicks. Alice sets the zero detector click events to a random value following the protocol.

%The numbers of the zero, one, or multi-detector click events are:

%Zero-detector click ($N_0$): 1,348,725 

%One-detector click ($N_1$): 10118574

%Multi-detector click ($N_2$): 116

%A total of $N=11,467,415$ detections with all bases is collected in $T_{exp}=331,465 s$ at Bob. The coincidence counts between Alice’s and Bob’s detectors is $C_{coincidence}=10,118,690$. The counts of Alice’s detector is $C_{Alice}=12,021,392$. 

%If we disregard the error bar, the heralding efficiency of Bob is:
%\begin{equation}
%\eta_{\text{Bob}} = \frac{{C_{\text{coincidence}}}}{{C_{\text{Alice}}}} = 84.17\%.
%\end{equation}
%The mean number of photon pairs per pulse is:
%\begin{equation}
%\mu = \frac{{N/{\eta_{\text{Bob}}}}}{{T_{\text{exp}} \cdot f_{\text{sys}}}} = 8.22 \times 10^{-5}.
%\end{equation}
%The multi-photon probability on detection is:
%\begin{equation}
%P_{\text{noqub}} = 1 - \frac{{\mu e^{-\mu} \eta_{\text{Bob}}}}{{1 - e^{-\mu \eta_{\text{Bob}}}}} = 4.76 \times 10^{-5}.
%\end{equation}

%Hence, we calculate the heralding efficiency (conditioned on Bob’s heralding signal):
%\begin{equation}
%\eta_{\text{Alice}} \approx \frac{{N_1}}{{N_0 + N_1 + N_2}} = 88.24\%
%\end{equation}

In the experimental setup, the biases $\beta_{\text{PB}}$ and $\beta_{\text{PS}}$ in selecting the preparation basis and the preparation state were determined by random numbers. The numbers of selected bases $u_i\in\{0,1\}$ and outcomes $t_i\in\{0,1\}$ were
\begin{eqnarray}
N(u_i=0)&=&5,737,415 \, , \qquad N(u_i=1)=5,730,000 \, ,\nonumber\\
N(t_i=0)&=&5,732,749 \, ,\qquad N(t_i=1)=5,734,666 \, .
\end{eqnarray}
Note that
\begin{equation}
    \sum_{u=0}^1 N(u_i=u) =\sum_{t=0}^1 N(t_i=t)=N_\text{B}=11,467,415\,.
\end{equation}
The estimated biases were
\begin{eqnarray}
\bar{\beta}_{\text{PB}} &=& \left|\frac{N(u_i=0)}{N_{\text{B}}}-0.5\right|=%0.00032\dami{3307} 
\dami{0.000324}\, ,\nonumber\\
\bar{\beta}_{\text{PS}} &=& \left|\frac{N(t_i=0)}{N_{\text{B}}}-0.5\right|=%0.00008\dami{35847} 
\dami{0.000084}\,  , 
\end{eqnarray}
\adr{to six decimal places.}

The standard deviations if the observed frequencies represent the probabilities are smaller than but very close to those for distributions in which $0$ and $1$ are equiprobable.
Conservatively, we use the latter, taking
\begin{equation}
    \sigma_{\bar{\beta}_{\text{PB}}}=\sigma_{\bar{\beta}_{\text{PS}}}=\frac{1}{\sqrt{N_\text{B}}}=%0.000147651
    \dami{0.000148}\, 
\end{equation}
\adr{to six decimal places.}

Thus, our upper bounds for the biases allowing for seven standard deviation fluctuations were
\begin{eqnarray}
    \beta_{\text{PB}} &=&\bar{\beta}_{\text{PB}}+7\sigma_{\bar{\beta}_{\text{PB}}}=    %0.0013\dami{56866} 
    \dami{0.001360}\, ,\nonumber\\
\beta_{\text{PS}} &=&\bar{\beta}_{\text{PS}}+7\sigma_{\bar{\beta}_{\text{PS}}}=%0.0011\dami{17144} 
\dami{0.001120}\, .
\end{eqnarray}
\adr{Here and below we generally give experimental data to six decimal places or significant figures to aid comparison between estimates, standard deviations, and upper bound estimates.} 

Alice chose the bit $z$, labelling the measurement that she applied to all received quantum states during the quantum token generation phase, using a quantum random number generator with a bias of $\beta_E=10^{-5}$.

%\dami{EDIT BELOW (I think the experimental data was reported in the file 33.Current Status of Quantum Token Experiment heralding source 1222 v9)}
%In the test for dark counts of 75906 s, a total of 17111 dark counts are detected at Bob’s detector, of 12985 and 13354 are detected at Alice’s detector. Hence, the dark count probabilities are calculated as $4.51\times 10^{-7}$ for Bob, $3.42\times 10^{-7}$ and $3.52\times 10^{-7}$ for Alice. 

%The \dami{measured average error rates $\bar{E}_{00}$, $\bar{E}_{01}$, $\bar{E}_{10}$, and $\bar{E}_{11}$ 

%are given by the frequencies that Alice obtains a wrong measurement outcome when she measures in Bob's preparation basis, including the events in which none, one or both of her detectors click.

%} We count the numbers, respectively, as shown in Table 1.

The measured average error rates were 
\begin{equation}
    \bar{E}_{tu}=\frac{N_{tu}^{\text{error}}}{N_{tu}} \, , 
\end{equation}
where $N_{tu}$ is the total number of pulses prepared by Bob in the state labelled by $t$ in the basis labelled by $u$ that were measured by Alice in the basis labelled by $u$, and where $N_{tu}^{\text{error}}$ is the number of Alice's incorrect measurement outcomes $t\oplus 1$ obtained from the $N_{tu}$ pulses, for all $t,u\in\{0,1\}$. Given that the frequencies $\bar{E}_{tu}$ are obtained from binomial distributions, their standard deviations are 
\begin{equation}
    \sigma_{\bar{E}_{tu}}=\sqrt{\frac{\bar{E}_{tu}(1-\bar{E}_{tu})}{N_{tu}}} \, ,
\end{equation}
for all $t,u\in\{0,1\}$. We compute the upper bounds $E_{tu}$ on the error rates  by
\begin{equation}
    E_{tu}=\bar{E}_{tu}+7\sigma_{\bar{E}_{tu}} \, , 
\end{equation}
for all $t,u\in\{0,1\}$. This means that the hypothesis that $E_{tu}$ is a valid upper bound for the corresponding error probability is incorrect with a probability $\leq2.6\times 10^{-12}$, for all $t,u\in\{0,1\}$. The results are given in Table~S1.

\begin{table}
\centering
\renewcommand{\thetable}{S\arabic{table}}
\caption{\centering{Statistics for the average error rates.}}
\begin{tabular}{cccccc}
$(t,u)$ & $N^{\text{error}}_{tu}$ & $N_{tu}$ & $\dami{\bar{E}_{tu}}$ & $\dami{\sigma_{\bar{E}_{tu}}}$ & $\dami{E_{tu}}$\\
\hline
(0,0) & 89,317 & 1,508,557 & 5.92\dami{06911} \% &\dami{0.0192155 \%} &\dami{6.0551998 \%} \\
(0,1) & 92,020 & 1,507,895 & 6.10\dami{25469} \% &\dami{0.0194938 \%} &\dami{ 6.2390037\%}\\
(1,0) & 82,505 & 1,358,476 & 6.07\dami{33498} \% &\dami{0.0204919 \%} & \dami{ 6.2167933\%}\\
(1,1) & 82,923 & 1,356,953 & 6.11\dami{09707} \% &\dami{0.0205627 \%} & \dami{ 6.2549096\%}\\
\end{tabular}
\end{table}

The upper bound on the error probability used in our security analysis is given by
\begin{equation}
    E=\max_{t,u\in\{0,1\}}\{E_{tu}\}=%6.2549096
    \dami{ 6.2550}\, \% \, ,
\end{equation}
\dami{to six decimal places, rounding up.}

%However, to achieve high security, we fully consider the error bars as follows.

% FOLLOWING TEXT TAKING FROM 46.REPLYTOADRIAN0823V7D We did not use the Piling-up Lemma and/or RNG pre-processing. We think the experiment is a demonstration, and due to the limited number of experimental trials (20 trials in total), we did not use a real-time RNG to select the bit z. Currently, a set of random numbers generated by a QRNG is used to post-select the bit z, with the bias of 1×10^(-5). We do not see the practical difference in the demonstration, since this bias will not affect the result. Damian: I do not see why the small number of trials (20) did not allow then to use a real-time RNG to select the bit z for each trial. If they had done that, then quite likely they would not have obtained z=0 and z=1 exactly ten times, but maybe they would have obtained something close to that, which I think would have been fine. Anyway, I think we should state clearly what was done. Sorry, I do not follow how they postslected the bits z using a QRNG. Does this mean the bias  βE equals 10^(-5)?

\subsection{Bounds on dark count probabilities}

\dam{The experimental setup guarantees that the number of generated pulses in the time interval $T_{\text{d}}$ is $N_{\text{d}}=T_{\text{d}} f_{\text{sys}}$, where the source emits pulses at the frequency $f_\text{sys}$. 
}

Let \dam{$d_{\text{B}}, d_{\text{A0}}, d_{\text{A1}}$} be the dark count probabilities of Bob's detector and Alice's two detectors, respectively. We define Alice's combined dark count probability to be
\begin{equation}
\label{Ap2}
d_{\text{A}} = 1 - (1 - d_{\text{A0}})(1 - d_{\text{A1}})\,.
\end{equation}
Let $N_{\text{d}_{\text{B}}}, N_{\text{d}_{\text{A0}}}$, and $N_{\text{d}_{\text{A1}}}$ be the number of dark counts in Bob's detector and Alice's detectors during the time interval $T_{\text{d}}$, and let $\sigma_{N_{\text{d}_{\text{B}}}}, \sigma_{N_{\text{d}_{\text{A0}}}}$, and $\sigma_{N_{\text{d}_{\text{A1}}}}$ be their standard deviations, respectively. In the limit $N_{\text{d}} \to \infty$, we have 
\begin{equation}
\label{Ap3}
d_{\text{A0}} = \frac{N_{\text{d}_{\text{A0}}}}{N_{\text{d}}} \, , \quad d_{\text{A1}} = \frac{N_{\text{d}_{\text{A1}}}}{N_{\text{d}}}\,,\quad d_{\text{B}} = \frac{N_{\text{d}_{\text{B}}}}{N_{\text{d}}} \, .
\end{equation}

In practice, there will be some uncertainty in these estimations. We obtain the standard deviations for $d_{\text{A}}$ and $d_{\text{B}}$ using the experimental data $T_{\text{d}}$, $f_{\text{sys}}$, $N_{\text{d}_\text{B}}$, $N_{\text{d}_\text{A0}}$, $N_{\text{d}_\text{A1}}$:
\begin{equation}
\sigma_{d_\text{B}} = \frac{\sigma_{N_{\text{d}_\text{B}}}}{\dam{N_\text{d}}} \, ,  \quad \sigma_{d_{\text{A0}}} = \frac{\sigma_{N_{\text{d}_{\text{A0}}}}}{\dam{N_\text{d}}} \, ,  \quad \sigma_{d_{\text{A1}}} = \frac{\sigma_{N_{\text{d}_{\text{A1}}}}}{\dam{N_\text{d}}} \, , 
\end{equation}
\begin{equation}
\sigma_{d_{\text{A}}} = \sqrt{((1 - d_{\text{A1}})\sigma_{d_{\text{A0}}})^2 + ((1 - d_{\text{A0}})\sigma_{d_{\text{A1}}})^2} \, , 
\end{equation}
where
\dam{
\begin{equation}
\sigma_{N_{\text{d}_{\text{B}}}} = \sqrt{N_{\text{d}_{\text{B}}}} \, , \quad \sigma_{N_{\text{d}_{\text{A0}}}} = \sqrt{N_{\text{d}_{\text{A0}}}} \, , \quad \sigma_{N_{\text{d}_{\text{A1}}}} = \sqrt{N_{\text{d}_{\text{A1}}}} \, .
    \label{Ap5}
\end{equation}
}

We have $T_{\text{d}} = 75906$ s, $f_{\text{sys}} = 500$ kHz, $N_{\text{d}_{\text{B}}} = 17111$, $N_{\text{d}_{\text{A0}}} = 12985$, and $N_{\text{d}_{\text{A1}}} = 13354$. From \dam{(\ref{Ap2})--(\ref{Ap5})}, we obtain:
\begin{eqnarray}
\label{Ap6}
d_{\text{A0}} &=& 3.42134 \times 10^{-7} \, ,  ~ d_{\text{A1}} = 3.51856 \times 10^{-7} \, ,  ~  d_{\text{A}} = 6.9399 \times 10^{-7} \, ,  ~  d_{\text{B}} = 4.50847 \times 10^{-7} \, , \nonumber\\
\sigma_{d_{\text{A0}}} &=& 3.00244 \times 10^{-9} \, ,  ~  \sigma_{d_{\text{A1}}} = 3.04481 \times 10^{-9} \, ,  ~  \sigma_{d_{\text{A}}} = 4.27615 \times 10^{-9} \, ,  ~  \sigma_{d_{\text{B}}} = 3.44661 \times 10^{-9} \, .\nonumber\\
\end{eqnarray}

\dam{

\subsection{Bounds on pulse detection probabilities}

The experimental setup guarantees that the number of generated pulses in the time interval $T_{\text{exp}}$ is $N=T_{\text{exp}} f_{\text{sys}}$, where the source emits pulses at the frequency $f_{\text{sys}}$.

Let $N_\text{A}$, $N_\text{B}$ and $N_\text{C}$ be the number of pulses activating Alice’s detectors, activating Bob’s detector, and creating a coincidence in Alice’s and Bob’s detectors, during the time interval $T_\text{exp}$, respectively. Let $P_\text{A}$, $P_\text{B}$ and $P_\text{C}$ be the corresponding pulse detection probabilities. In the limit $N\rightarrow \infty$, we have
\begin{equation}
\label{Ap7}
    P_\text{A}=\frac{N_\text{A}}{N} \, , \qquad		 P_\text{B}=\frac{N_\text{B}}{N} \, , \qquad		 P_\text{C}=\frac{N_\text{C}}{N} \, .
\end{equation}
In practice, there will be some uncertainty in these estimations. We obtain the standard deviations for $P_\text{A}$, $P_\text{B}$ and $P_\text{C}$ using the experimental data $N_\text{A}$, $N_\text{B}$, $N_\text{C}$:
\dam{
\begin{equation}
\label{Ap8}
    \sigma_{P_\text{A}}=\frac{\sigma_{N_\text{A}}}{N} \, , \qquad		 \sigma_{P_\text{B}}=\frac{\sigma_{N_\text{B}}}{N} \, , \qquad		 \sigma_{P_\text{C}}=\frac{\sigma_{N_\text{C}}}{N} \, , 
\end{equation}}
where
\begin{equation}
\label{Ap9}
    \sigma_{N_\text{A}}=\sqrt{N_\text{A}} \, , \qquad		\sigma_{N_\text{B}}=\sqrt{N_\text{B}} \, , \qquad		 \sigma_{N_\text{C}}=\sqrt{N_\text{C}} \, .
\end{equation}
We have $T_{\text{exp}}=331465$ s, $f_{\text{sys}}=500$ kHz, $N_\text{B}=11467415$, $N_\text{A}=12021392$ and $N_\text{C}=10118690$. From (\ref{Ap7})--(\ref{Ap9}) we obtain
\begin{eqnarray}
\label{Ap10}
P_\text{A}&=&7.25349\times10^{-5} \, , \quad P_\text{B}=6.91923\times10^{-5} \, , \quad P_\text{C}=6.10543\times10^{-5} \, , \nonumber\\
\sigma_{P_\text{A}}&=&2.09204\times10^{-8} \, , \quad \sigma_{P_\text{B}}=2.04327\times10^{-8} \, , \quad \sigma_{P_\text{C}}=1.91935\times10^{-8} \, . 
\end{eqnarray}
}

\subsection{Derivation of upper bound on $P_{\text{noqub}}$}
\label{Pnoqubapp}

\dami{In this subsection, we derive an upper bound on the probability $P_\text{noqub}$ that a heralded photon pulse that Alice sends Bob is multi-photon. \dami{We} assume that the photon number distribution of the photon pairs is Poissonian~\cite{avenhaus2008photon,schneeloch2019introduction}. That is, we assume the quantum density matrix for the photon pairs is given by
\begin{equation}
\rho = \sum_{j=0}^\infty \frac{\mu^je^{-\mu}}{j!} \ket{jj}\bra{jj} \, , 
%e^{-\mu} {|00\rangle \langle 00| + \mu e^{-\mu} |11\rangle \langle 11| + \frac{{\mu^2}}{{2}} e^{-\mu} |22\rangle \langle 22| + \ldots,}
\end{equation}
where $\ket{jj}\bra{jj}$ denotes the quantum state of $j$ pairs of photons.}

We assume Alice's detectors have a combined efficiency $\eta_{\text{A}} = q\eta_{\text{A0}} + (1 - q)\eta_{\text{A1}}$ and dark count probability $d_{\text{A}}$, where $q$ represents the probability of a photon going to the first detector (which we assume remains constant throughout the experiment), and where $d_{\text{A0}}$, $d_{\text{A1}}$ are the detectors' individual dark count probabilities with $1 - d_{\text{A}} = (1 - d_{\text{A0}})(1 - d_{\text{A1}})$. Bob's detector has efficiency $\eta_{\text{B}}$ and dark count probability $d_{\text{B}}$. Let $P_{\text{B}}$ be the probability that a pulse activates a detection in Bob's detector. Let $P_{\text{A}}$ be the probability that a pulse activates a detection in one of Alice's detectors. Let $P_{\text{C}}$ be the probability that a pulse activates a \dam{`coincidence'},  i.e., is detected by Bob's detector and one of Alice's detectors.

The form of $P_{\text{noqub}}$ when Bob's dark count probability \dam{$d_\text{B}$} is non-zero is given by:
\begin{equation}
\label{Ap11}
P_{\text{noqub}} = 1 - \frac{e^{-\mu}[d_\text{B}(1+\mu)+(1-d_\text{B})\mu\eta_\text{B}]} {d_\text{B} + (1-d_\text{B})(1-e^{-\mu\eta_\text{B}})} \, .
\end{equation}
We have derived the following equations, by assuming two bounding scenarios for calculating $P_\text{A}$ (and one for $P_\text{C}$ as we only need the upper bound). One scenario is where a multi-photon pulse is guaranteed to activate a detection at one of Alice's detectors, and another in which such a pulse never activates a detection at either of Alice's detectors:

%\red(Waiting for replenishment)
%\dam{
\begin{equation}
\label{Ap12}
   d_\text{A}+(1-d_\text{A} ) e^{-\mu}\mu\eta_\text{A}\leq P_\text{A}\leq d_\text{A}+(1-d_\text{A} )\bigl[1-e^{-\mu}-\mu e^{-\mu} (1-\eta_\text{A})\bigr]\,,
\end{equation}
\begin{equation}
\label{Ap13}
   P_\text{B} = d_\text{B}+(1-d_\text{B} )\bigl[1-e^{-\mu\eta_\text{B}}\bigr]\,,
\end{equation}
\begin{eqnarray}
\label{Ap14}
P_\text{C}&\leq& d_\text{A}d_\text{B} + d_\text{A}(1-d_\text{B})\bigl[1-e^{-\mu\eta_\text{B}}\bigr]+d_\text{B}(1-d_\text{A} )\bigl[1-e^{-\mu}-\mu e^{-\mu}(1-\eta_\text{A})\bigr]\nonumber\\
&&\qquad+(1-d_\text{A})(1-d_\text{B})\bigl[e^{-\mu} \mu\eta_\text{A} \eta_\text{B}+1-e^{-\mu}-\mu e^{-\mu})\bigr]\nonumber\\
&\leq& d_\text{A}+d_\text{B}-d_\text{A}d_\text{B}+(1-d_\text{A})(1-d_\text{B})\bigl[e^{-\mu}\mu \eta_\text{A} \eta_\text{B} + 1-e^{-\mu}-\mu e^{-\mu})\bigr]\,.
\end{eqnarray}
Note that when calculating the upper bound on $P_\text{C}$, in the final term we took the bounding assumption that, when there are no dark counts, any multi-photon pulse will activate a detection at both Alice and Bob.

We can then rearrange (\ref{Ap12}) and (\ref{Ap14}) to get
\begin{eqnarray}
\label{Ap15}
e^{-\mu} \mu \eta_\text{A} &\leq& \frac{P_\text{A}-d_\text{A}}{1-d_\text{A}}<e^{-\mu}\mu \eta_\text{A}+1-e^{-\mu}-\mu e^{-\mu} \, , \nonumber\\
\frac{P_\text{C}-d_\text{A}-d_\text{B}+d_\text{A} d_\text{B}}{(1-d_\text{A} )(1-d_\text{B} )} &<&e^{-\mu} \mu \eta_\text{A} \eta_\text{B} + 1-e^{-\mu}-\mu e^{-\mu} \, .
\end{eqnarray}
Also, by a rearrangement of (\ref{Ap13}) we get
\begin{equation}
\label{Ap16}
\mu\eta_\text{B}=\ln{\Bigl(\frac{1-d_\text{B}}{1-P_\text{B}}\Bigr)}\, .
\end{equation}

To improve the clarity of the following calculations, we define new variables
\begin{equation}
\label{Ap17}
 x_\text{A}\equiv \frac{P_\text{A}-d_\text{A}}{1-d_\text{A}} \, , \quad  x_\text{B}\equiv {\rm ln}{\Bigl(\frac{1-d_\text{B}}{1-P_\text{B}}\Bigr)} \, , \quad
 x_\text{C}\equiv \frac{P_\text{C}-d_\text{A}-d_\text{B}+d_\text{A} d_\text{B}}{(1-d_\text{A} )(1-d_\text{B})} \, .
\end{equation}

We are able to use (\ref{Ap15})--(\ref{Ap17}) to derive the bound
\begin{equation}
\label{Ap18}
x_\text{C}<\frac{x_\text{A} x_\text{B}}{\mu}+1-e^{-\mu}-\mu e^{-\mu} \, . 
\end{equation}
We use the bound $1-\mu<e^{-\mu}$ for $\mu>0$  and an \adr{additional} weak assumption that $\mu<0.005$ \dam{(which is justified from the experimental data in section~\ref{upboundmu})} to \adr{obtain the more useful inequality}
\begin{equation}
\label{Ap19}
x_\text{C}<\frac{x_\text{A} x_\text{B}}{\mu}+0.005\mu \, . 
\end{equation}
We can then rearrange (\ref{Ap19}) to get
\begin{equation}
\label{Ap20}
\mu^2-200x_\text{C}\mu+200x_\text{A} x_\text{B}>0 \, .
\end{equation}
which leads to the restriction
\begin{equation}
\label{Ap21}
\mu<100x_\text{C}-\sqrt{10000x_\text{C}^2-200x_\text{A} x_\text{B}}=\mu^{\text{U}} \, , 
\end{equation}
if the observed $x_\text{A}$, $x_\text{B}$, $x_\text{C}$ satisfy $\mu<0.005\leq 100x_\text{C}+ \sqrt{10000x_\text{C}^2-200x_\text{A} x_\text{B}}$ (which the experimental observations do). We take $\mu^{\text{U}}$ as an upper bound for $\mu$. We can now obtain standard deviations of the quantities in (\ref{Ap17}) and (\ref{Ap21}):
\begin{eqnarray}
\label{Ap22}
\sigma_{x_\text{A}}&=&\sqrt{\Biggl(\frac{(1-P_\text{A})\sigma_{d_\text{A}}}{(1-d_\text{A})^2}\Biggr)^2+\Bigl(\frac{\sigma_{P_\text{A}}}{1-d_\text{A}}\Bigr)^2} \, , \nonumber\\
\sigma_{x_\text{B}}&=&\sqrt{\Bigl(\frac{\sigma_{d_\text{B}}}{1-d_\text{B}}\Bigr)^2+\Bigl(\frac{\sigma_{P_\text{B}}}{1-P_\text{B}}\Bigr)^2} \, , \nonumber\\
\sigma_{x_\text{C}}&=&\sqrt{\Biggl(\frac{\sigma_{P_\text{C}}}{(1-d_\text{A})(1-d_\text{B})}\Biggr)^2+\Biggl(\frac{(P_\text{C}-1)\sigma_{d_\text{A}}}{(1-d_\text{A})^2(1-d_\text{B})}\Biggr)^2+\Biggl(\frac{(P_\text{C}-1)\sigma_{d_\text{B}}}{(1-d_\text{A})(1-d_\text{B})^2}\Biggr)^2} \, , \nonumber\\
\sigma_{\mu^{\text{U}}}&=&\sqrt{\Biggl(\frac{\partial \mu^{\text{U}}}{\partial x_\text{A}} \sigma_{x_\text{A}} \Biggr)^2+\Biggl(\frac{\partial \mu^{\text{U}}}{\partial x_\text{B} } \sigma_{x_\text{B}} \Biggr)^2+\Biggl(\frac{\partial \mu^{\text{U}}}{\partial x_\text{C}} \sigma_{x_\text{C}} \Biggr)^2 } \, , \nonumber\\
\frac{\partial \mu^{\text{U}}}{\partial x_\text{A}}&=&100x_\text{B} \bigl(10000x_\text{C}^2-200x_\text{A} x_\text{B} \bigr)^{-\frac{1}{2}} \, , \nonumber\\
\frac{\partial \mu^{\text{U}}}{\partial x_\text{B}}&=&100x_\text{A} \bigl(10000x_\text{C}^2-200x_\text{A} x_\text{B} \bigr)^{-\frac{1}{2}} \, , \nonumber\\
\frac{\partial \mu^{\text{U}}}{\partial x_\text{C}}&=&100-10000x_\text{C} \bigl(10000x_\text{C}^2-200x_\text{A} x_\text{B} \bigr)^{-\frac{1}{2}} \, .
\end{eqnarray}
Then, by using (\ref{Ap11}) and (\ref{Ap22}), with use of $x_\text{B}=\mu\eta_\text{B}$, we can obtain a standard deviation of an upper bound of $P_{\text{noqub}}$, which we shall call $P_{\text{noqub}}^{\text{U}}$.

Next, we verify that the quantity $P_{\text{noqub}}$ increases with $\mu$, so that the upper bound $\mu^{\text{U}}$ can be used to calculate $P_{\text{noqub}}^{\text{U}}$. This is seen from the relation
\begin{equation}
\label{Ap23}
e^{-\mu^{\text{U}}}\bigl(1+\mu^{\text{U}}\bigr)<e^{-\mu} (1+\mu)\,,
\end{equation}
providing the upper bound
\begin{equation}
\label{Ap24}
P_{\text{noqub}}=1-\frac{e^{-\mu} \bigl[d_\text{B} (1+\mu)+(1-d_\text{B})x_\text{B}\bigr]}{d_\text{B}+(1-d_\text{B})\bigl(1-e^{-x_\text{B}}\bigr)}<1-\frac{e^{-\mu^{\text{U}}}\bigl[d_\text{B} \bigl(1+\mu^{\text{U}}\bigr)+(1-d_\text{B})x_\text{B}\bigr]}{d_\text{B}+(1-d_\text{B})\bigl(1-e^{-x_\text{B}}\bigr)}=P_{\text{noqub}}^{\text{U}} \, , 
\end{equation}
and the associated standard deviation
\begin{eqnarray}
\label{Ap25}
\sigma_{P_{\text{noqub}}^{\text{U}}} &=&\sqrt{\Biggl(\frac{\partial P_\text{{noqub}}^{\text{U}}}{\partial \mu^{\text{U}}} \sigma_{\mu^{\text{U}} } \Biggr)^2+\Biggl(\frac{\partial P_{\text{noqub}}^{\text{U}}}{\partial x_\text{B}} \sigma_{x_\text{B}} \Biggr)^2+\Biggl(\frac{\partial P_{\text{noqub}}^{\text{U}}}{\partial d_\text{B}} \sigma_{d_\text{B}}\Biggr)^2 } \, , \nonumber\\
\frac{\partial P_{\text{noqub}}^{\text{U}}}{\partial \mu^{\text{U}} }&=&\frac{e^{-\mu^{\text{U}} } \bigl(d_\text{B} \mu^{\text{U}}+(1-d_\text{B} ) x_\text{B} \bigr)}{d_\text{B}+(1-d_\text{B} )\bigl(1-e^{-x_\text{B} } \bigr) } \, , \nonumber\\
\frac{\partial P_{\text{noqub}}^{\text{U}}}{\partial x_\text{B} }&=&-\frac{e^{-\mu^{\text{U}} } (1-d_\text{B})\bigl(1-e^{-x_\text{B} } (1+d_\text{B} \mu^{\text{U}}+(1-d_\text{B} ) x_\text{B})\bigr)}{\bigl(d_\text{B}+(1-d_\text{B} )(1-e^{-x_\text{B}})\bigr)^2} \,,\nonumber\\
\frac{\partial P_{\text{noqub}}}{\partial d_\text{B} }&=&-\frac{e^{-\mu^{\text{U}}} \bigl(1+\mu^{\text{U}}-x_\text{B}-e^{-x_\text{B}} (1+\mu^{\text{U}})\bigr)}{\bigl(d_\text{B}+(1-d_\text{B} )(1-e^{-x_\text{B}})\bigr)^2} \, .
\end{eqnarray}
We then bound $P_{\text{noqub}}$ above by the upper bound plus 7 corresponding standard deviations to achieve our final result
\begin{equation}
    \label{Ap26}
P_{\text{noqub}}<1-\frac{e^{-\mu^{\text{U}}} \bigl[d_\text{B} (1+\mu^{\text{U}} )+(1-d_\text{B} ) x_\text{B}\bigr]}{d_\text{B}+(1-d_\text{B} )(1-e^{-x_\text{B}})}+7\sigma_{P_{\text{noqub}}^{\text{U}}}=P_{\text{noqub}}^{\text{U}}+7\sigma_{P_{\text{noqub}}^{\text{U}}} \, . 
\end{equation}
We use 7 standard deviations so that the probability of exceeding the bound is small enough to satisfy our security criteria.

This upper bound evaluates numerically as
\begin{equation}
    \label{Ap27}
\dami{P_\text{noqub}\leq P_{\text{noqub}}^{\text{max}}=} P_{\text{noqub}}^{\text{U}}+7\sigma_{P_{\text{noqub}}^{\text{U}}}=%4.86424\times 10^{-5} 
\dam{4.9\times 10^{-5}}\, ,
\end{equation}
\dam{to six \adr{decimal places}.}

\subsection{Lower bounds on detection probabilities}
\label{lowerboundsapp}
We also have from (\ref{Ap15})--(\ref{Ap17}) that when $0<\mu<1$ and using $1- \mu<e^{-\mu}$
\begin{equation}
\label{Ap28}
    \eta_{\text{A}}^{\text{L}}\equiv \frac{x_{\text{A}}}{\mu^{\text{U}}}-\mu^{\text{U}}<\frac{x_{\text{A}}-\mu^2}{\mu}<\eta_{\text{A}} \, ,  \qquad \eta_{\text{B}}^{\text{L}}\equiv \frac{x_{\text{B}}}{\mu^{\text{U}}}<\frac{x_{\text{B}}}{\mu}=\eta_{\text{B}} \, .
\end{equation}
By using the upper bound for $\mu$ in (\ref{Ap21}), we can compute 
\begin{equation}
\label{Ap29}
\sigma_{\eta_{\text{A}}^{\text{L}}}=\sqrt{\biggl(\frac{\sigma_{x_{\text{A}}}}{\mu^{\text{U}}}\biggr)^2+\biggl(\biggl(\frac{x_{\text{A}}}{(\mu^{\text{U}})^2}-1\biggr)\sigma_{\mu^{\text{U}}}\biggr)^2} \, ,  \qquad \sigma_{\eta_{\text{B}}^{\text{L}}}=\sqrt{\biggl(\frac{\sigma_{x_{\text{B}}}}{\mu^{\text{U}}}\biggr)^2+\biggl(\frac{x_{\text{B}}\sigma_{\mu^{\text{U}}}}{(\mu^{\text{U}})^2}\biggr)^2} \, .
\end{equation}
We evaluate these numerically to obtain
\begin{equation}
\label{Ap30}
\mu^{\text{U}}=8.30097\times 10^{-5} \, ,  \quad  \eta_{\text{A}}^{\text{L}}=0.865369\,, \quad  \eta_{\text{B}}^{\text{L}}=0.828142\,,
\end{equation}
and
\begin{equation}
\label{Ap31}
\sigma_{\mu^{\text{U}}}=4.51565\times 10^{-8} \, ,  \quad  \sigma_{\eta_{\text{A}}^{\text{L}}}=5.36449 \times 10^{-4} \, ,  \quad
\sigma_{\eta_{\text{B}}^{\text{L}}}=5.15047 \times 10^{-4} \, .
\end{equation}

\subsection{Derivation of an upper bound on $\mu$ from a lower bound on $\eta_{\text{B}}$}
\label{upboundmu}
Start by assuming that
\begin{equation}
\label{Ap32}
\eta_{\text{B}}>0.02\,.
\end{equation}
This is consistent with the experimental value of the lower bound $\eta_{\text{B}}^{\text{L}}$ of $\eta_{\text{B}}$ computed in section~\ref{lowerboundsapp}.

We show below that it follows from (\ref{Ap32}) and the experimental data that
\begin{equation}
\label{Ap33}
\mu<0.005\,,
\end{equation}
as assumed in section \ref{Pnoqubapp}.

Given the calculated quantity 
\begin{equation}
    \label{Ap34}
    x_{\text{B}}=\mu\eta_{\text{B}}=\ln \biggl( \frac{1-d_\text{B}}{1-P_\text{B}}\biggr)=6.87\times 10 ^{-5} \, , 
\end{equation}
we get that
\begin{equation}
    \label{Ap35}
\mu<50(x_\text{B}+7\sigma_{x_\text{B}}) \, ,     
\end{equation}
when considering an additional bound of seven standard deviations on $x_\text{B}$. Using the derived quantities for $x_\text{B}$ in section \ref{Pnoqubapp}, we get
\begin{equation}
\label{Ap36}
\mu<0.0035<0.005\,,
\end{equation}
as claimed.
%}

\blk
\section{Upper bound on the uncertainty angle in the preparation state}
\label{sec:theta}

In this section we describe the experimental procedure used to determine an upper bound $\theta$ for the uncertainty angle on the Bloch sphere for Bob's prepared states, and an upper bound $P_\theta$ on the probability that this bound is not satisfied.

As shown in Fig.~1 of the main text, Bob prepared the quantum states using two Pockels cells to modulate the bases and encoded bits. The Pockels cells were driven by quantum random number generators (QRNGs). We label the four target states by `0', `1', `+' and `-'. Alice used a half-wave plate (HWP) and a polarizing beam splitter (PBS) to measure the quantum states.  We used Alice's setup to measure the quantum states in one of two bases by setting the HWP at one of two possible angles using a rotation mount. We experimentally estimated $\theta$ and $P_\theta$ using this joint setup, with some variations discussed below. 
Given the experimental setup, we refer to these below as measurements by Alice.   Note however that in a real world implementation, these estimations would be performed by Bob using his own independent measurement setup. 

To estimate $\theta$, Alice's two single photon detectors were replaced by two power meters; and the intensity of the incoming light pulse was set to the higher value of approximately 18 mW. For each of the four states prepared by Bob, we measured the intensity of light measured by each of the two power meters. We repeated this 1000 times for each of the four states prepared by Bob. If Bob prepared the target states perfectly and the experimental setup were ideal then only one of the two power meters would measure a non-zero value. However, due to imperfections of the preparation procedure and the experimental setup, both power meters measure non-zero values, although one is much smaller than the other.

We first assume that the optical devices involved all work ideally and that the nonzero value for the smaller measured intensity arises only due to an uncertainty angle $\alpha$ in the Bloch sphere for Bob's prepared states. 

We estimate $\alpha$ as follows. Bob prepared $K>>1$ photons in \dami{a} qubit state $\vert \psi\rangle$, aiming to prepare \dami{a} qubit state $\lvert \phi_0\rangle\dami{\in\{\ket{0},\ket{1},\ket{+},\ket{-}\}}$. He sent the pulses through the HWP and the PBS, which were set aiming to apply a quantum measurement in the orthonormal qubit basis $\{\lvert \phi_0\rangle,\lvert \phi_1\rangle\}$. In this case, the probability that a photon goes to the power meter corresponding to the state $\lvert \phi_0\rangle$ which measures the maximum intensity is given by
\begin{eqnarray}
P_\text{max}&=&\cos^2\Bigl(\frac{\alpha}{2}\Bigr)\nonumber\\
&\approx&\frac{I_\text{max}}{I_\text{min}+I_\text{max}} \, , 
    \end{eqnarray}
where $I_\text{min}$ and $I_\text{max}$ are the smaller and bigger intensities measured by the respective power meters, corresponding to measuring the states $\lvert \phi_1\rangle$ and $\lvert \phi_0\rangle$, respectively; and where the $\approx$ symbol arises due to the approximation of probabilities by the observed experimental frequencies. Since each pulse was $18~$mW,  the number of photons $K$ in each pulse satisfied $K>>1$, and we can take the second line as an equality to very good approximation. Therefore, we obtain
\begin{equation}
\label{idealalpha}
    \alpha = 2\arccos{\sqrt{1-\frac{1}{1+C}}} \, , 
\end{equation}
where
\begin{equation}
    C=\frac{I_\text{max}}{I_\text{min}}
\end{equation}
is the contrast of intensities measured by the power meters.

%We now simply take $\alpha_j$ as the upperbound in (\ref{alphabound}), abusing notation.
The procedure above gives us a value $\alpha_j$ for the $j$th pulse. We repeat this procedure for $n=1000$ pulses and obtain $\alpha=\max\{\alpha_j\}_{j\in[n]}$. With this definition, the bound $\alpha_j \leq \alpha$ is satisfied for all $n$ measurements. 

Now suppose that the probability that $\alpha_j > \alpha$ for a general pulse $j$ is $P_{\alpha}$, independently for each pulse.   The probability of finding $\alpha_j < \alpha$ for all pulses in our data is 
\begin{equation}
\label{AAA1}
P_\text{error}=(1-P_\alpha)^n \, . 
\end{equation}
If we take 
\begin{equation}
\label{AAA2}
    P_\alpha=0.027 \, , 
\end{equation}
we have 
\begin{equation}
\label{AAA3}
    P_\text{error}=1.2967\times 10^{-12}<2.09769\times 10^{-12} \, . 
\end{equation}
We thus infer that $P_\alpha \leq 0.027$ almost certainly, with the probability of 
the contrary being of the order of (\ref{AAA3}).   
This is consistent with the value of $2.09769\times 10^{-12}$ corresponding to the seven standard deviation confidence used for all other parameters measured in our experiment that are relevant for the security analysis.

This procedure is repeated for each of the four states targeted by Bob. Thus, we obtain four values for the angle $\alpha$: 
\begin{eqnarray}
\label{idealalphas}
    \alpha_0&=& %2.231221441^\circ 
    \dam{2.231222^\circ}\, ,\nonumber\\
    \alpha_1&=& %3.429184591^\circ 
    \dam{3.429185^\circ}\, ,\nonumber\\
    \alpha_+&=&%2.769765723 ^\circ 
    \dam{2.769766 ^\circ}\, ,\nonumber\\
    \alpha_-&=& %2.088436585^\circ
    \dam{2.088437^\circ}\,  ,
\end{eqnarray}
\dam{to six decimal places.} The uncertainty angle for Bob's state preparation would be given by $\theta=\max_{i\in\{0,1,+,-\}}\{\alpha_{i}\}$, giving $\theta=%3.429184591^\circ
\dam{3.429185^\circ}$. However, we need to consider the imperfections of the PBS, HWP and rotation mount to derive a more accurate value of $\theta$, as discussed below.

\subsection{Considering the experimental imperfections of the PBS, HWP and rotation mount}

We now provide a derivation of $\theta$ and $P_\theta$ considering the experimental imperfections of the PBS, HWP and rotation mount using the experimental setup depicted in Fig.~S1.

%If the PBS, HWP and the rotation mount worked perfectly then we could take $\theta=\alpha$ using (\ref{idealalpha}). However, for a rigorous estimation of $\theta$ and $P_\theta$ we considered the imperfections of these devices using the experimental setup depicted in Fig.~S2 as discussed below. 
%Here we used a polarizer is used for the polarization (projection) measurement, due to better performance. For instance, when the polarizer is rotated to 0°, only the H-polarized light can pass, while the V-polarized light is blocked. When the polarizer is rotated to 45°, the “+”-polarized light can pass, but “-”-polarized light is blocked. When the polarizer is rotated to 90°, the V-polarized light can pass, while H is blocked. When the polarizer is rotated to135°, “-”-polarized light can pass, while “+” is blocked. The setup is used to characterize the errors of this measurement device, including the PBS and HWP (along with the rotation mount).

\begin{figure}
\label{figS2}
\renewcommand{\thefigure}{S\arabic{figure}}
	\centering
	\includegraphics[width=0.55\textwidth]{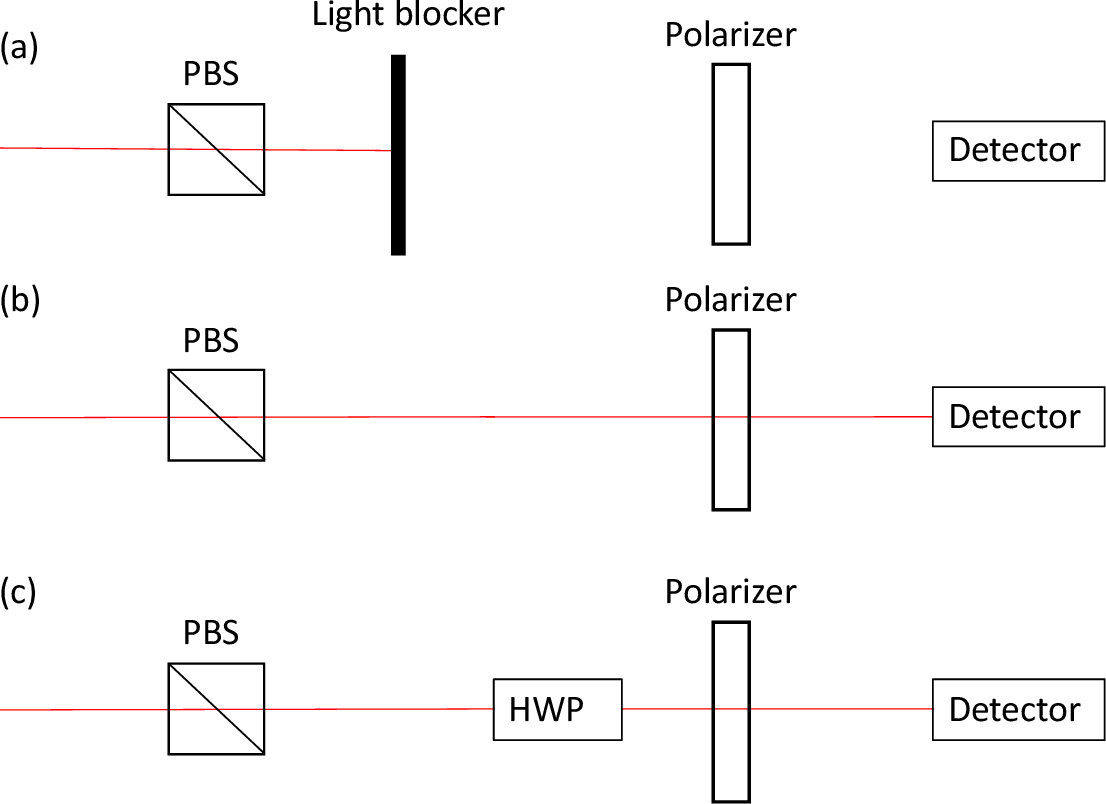}
	\caption{Experimental setup to measure the imperfections of the PBS, HWP and rotation mount.}
\end{figure}

We first measured the noise. We performed ten measurements of the intensity after blocking the laser, as shown in Fig.~S1(a). %The measured noise power is shown in Table 1.
We obtained an average power of $I_\text{ave}= 5.74$~nW with a standard deviation of $0.91068$~nW. In subsequent measurements of the light intensity, we subtracted $I_\text{ave}$ from the measured power.

Then we measured the level of imperfection of the PBS. In practice, if a perfectly horizontally (vertically) polarized photon enters the PBS, it exits via the vertically (horizontally) polarized channel with a small \dami{nonzero} probability.

\dami{We model the actions of an imperfect PBS by a unitary operation $U_\text{PBS}$ as follows
\begin{equation}
    U_\text{PBS}\ket{i}=C_{i0}\ket{0}+C_{i1}\ket{1} \, ,     
\end{equation}
where $C_{il}\in\mathbb{C}$, $\sum_{j=0}^1\lvert C_{ij}\rvert ^2=1$, $C_{10}^*C_{00}+C_{11}^*C_{01}=0$, for all $i,l\in\{0,1\}$. We assume that $U_\text{PBS}$ is close to the identity operator, hence,  $\lvert C_{ii}\rvert ^2\approx 1$ and $\lvert C_{i\bar{i}}\rvert ^2<<1$, for all $i\in\{0,1\}$. Thus, $U_\text{PBS}$ acts as a rotation in the Bloch sphere by a small angle. We estimated the angle $\delta_\text{PBS}$ that $U_\text{PBS}$ rotates the state $\lvert 0\rangle$, corresponding to a horizontally polarized photon, in the Bloch sphere. We assume that this is a typical value for the rotation angle in the Bloch sphere implemented by $U_\text{PBS}$ on an arbitrary qubit state $\ket{\psi}$ input by Bob.

}

%$q_{01}>0$ ($q_{10}>0$). Let $q_{00}>0$ ($q_{11}>0$) be the probability that a perfectly horizontally (vertically) polarized photon entering the PBS exits via the horizontally (vertically) polarized channel. We have
%\begin{equation}
%    q_{i0}+q_{i1}= 1, \qquad q_{i,i\oplus 1}\leq q_\text{error} \, , 
%\end{equation}
%for $i=0,1$; where in the first equality we assume that potential losses in the PBS are negligible; and where $q_\text{error}$ is a suitable upper bound that determines the level of imperfection of the PBS. 

We placed the PBS before the polarizer, as shown in Fig. S1(b). The light after the PBS is supposed to be horizontally(vertically)-polarized. We set the polarizer to angles of 0 and 90 degrees to test the actual polarization. We measured the residual vertical component when the light is supposed to be horizontally-polarized, and vice versa. %As shown in table 2,
\dami{We repeated this procedure ten times}. We obtained a contrast of intensities given by
\begin{equation}
    \bar{C}_\text{PBS}= \frac{I^{\text{PBS}}_\text{max}}{I^{\text{PBS}}_\text{min}}= 161448\,,
\end{equation}
\dami{where $I^{\text{PBS}}_\text{min}$ and $I^{\text{PBS}}_\text{max}$ are the lower and 
higher intensities measured by the respective power meters, corresponding to measuring the states $\lvert \phi_1\rangle$ and $\lvert \phi_0\rangle$, respectively. We obtained} a standard deviation of
\begin{equation}
    \sigma_{\bar{C}_\text{PBS}}=1700 \, . 
\end{equation}
\dami{We consider a range of values for the intensity contrast including seven standard deviations, as follows:
\begin{equation}
C^\text{min}_\text{PBS} \leq    C_\text{PBS} \leq C^\text{max}_\text{PBS} \, ,
    \end{equation}
where
\begin{equation}
\label{PBScontrastvalue}
    C^\text{min}_\text{PBS}=    \bar{C}_\text{PBS}-7\sigma_{\bar{C}_\text{PBS}} = 149548\, ,\qquad C^\text{max}_\text{PBS}=    \bar{C}_\text{PBS}+7\sigma_{\bar{C}_\text{PBS}} =173348\, .
\end{equation}

This allows us to estimate $\delta_{\text{PBS}}$ as follows. The probability that a horizontally polarized photon (i.e., having quantum state $\lvert 0\rangle$) is detected in the vertical polarization channel (i.e., corresponding to the quantum state $\lvert 1\rangle$), is given by
\begin{equation}
1-\cos^2{\Bigl(\frac{\delta_{\text{PBS}}}{2}\Bigr)}\approx \frac{I^{\text{PBS}}_\text{min}}{I^{\text{PBS}}_\text{min}+I^{\text{PBS}}_\text{max}}=\frac{1}{1+C_{\text{PBS}}} \, , 
\end{equation}
where the $\approx$ symbol arises due to the approximation of probabilities by the observed experimental frequencies. The intensity of the input pulse was set at approximately $6~$mW, hence,  the number of photons $K$ in each pulse satisfied $K>>1$, and we can take equality to very good approximation. Considering the seven standard deviation measurement of the contrast $C_{\text{PBS}}$, we obtain the upper bound
\begin{equation}
\label{idealalpha}
    \delta_{\text{PBS}} \leq  2\arccos{\sqrt{1-\frac{1}{1+C^\text{min}_{\text{PBS}}}}}= %0.296320266 ^\circ 
    \dam{0.296321^\circ}\, ,
\end{equation}
\dam{to six decimal places,} where $C^\text{min}_{\text{PBS}}$ is given by (\ref{PBScontrastvalue}).

}

We then measured the imperfection of the HWP. We recall that the HWP was set at one of two possible angles by Alice during the quantum token generation in order to measure in one of the two bases. These angles were targeted at $0^\circ$ and $22.5^\circ$. If the HWP worked perfectly and these angles were precisely obtained then the HWP would map horizontal (vertical) polarization to horizontal (vertical) and to diagonal at $45^\circ$ (antidiagonal, i.e., at $135^\circ$) respectively. That is, the states $\ket{0}, \ket{1}$ would be mapped to the states $\ket{0}, \ket{1}$ when the HWP is set at $0^\circ$ or to $\ket{+}, \ket{-}$ when the HWP is set at $22.5^\circ$. %Thus, in the data above in this document (table 4) we only need to focus on the first two entries.
However, imperfections of the HWP, PBS and the rotation mount imply that these mappings take place with uncertainty angles in the Bloch sphere, as we deduce below.

%The PBS was used to generate horizontal polarization. The HWP was placed between the PBS and the polarizer.

%We set the HWP and the polarizer to $0^\circ$.

\dami{We model the actions of an imperfect HWP by a unitary operation $U_\text{HWP}$.} %use the data of Table 4 to
We \dami{estimate} an upper bound on the rotation error angle in the Bloch sphere introduced by the HWP, with the setup illustrated in Fig.~S1(c). To do this, let us assume for now that all source of error comes from the HWP and the PBS, neglecting the errors due to the rotation mount. Thus, we assume the polarizers are set exactly at $90^\circ$ between each other and are perfectly aligned with the axes of the PBS. Hence we estimate a maximum rotation error angle due to the HWP of
\begin{equation}
\label{beta}
\beta\leq \dami{\delta_{\text{PBS}}+2 {\rm arccos} \sqrt{1-\frac{1}{1+C^{\text{min}}_{\text{HWP}}}} \, ,} 
\end{equation}
where %$q_\text{error}$ correspond to the imperfection of the PBS as given by (\ref{qerrormax}), and where
\dami{$C^\text{min}_\text{HWP}$ is a lower bound on} the contrast of intensities measured by the power meters.  %given in Table 4.
\dami{Note that we include the term $\delta_\text{PBS}$ to consider the imperfections of the PBS, as modelled above.}

We took ten measurements and obtained a contrast of intensities given by
\begin{equation}
    \bar{C}_{\text{HWP},01} =145551\,,
\end{equation}
with a standard deviation of 
\begin{equation}
   \sigma_{ \bar{C}_{\text{HWP},01}} = 1700\,,
\end{equation}
when we targeted the HWP at $0^\circ$; and a contrast of intensities given by
\begin{equation}
    \bar{C}_{\text{HWP},+-} = 9973\,,
\end{equation}
with a standard deviation of 
\begin{equation}
   \sigma_{ \bar{C}_{\text{HWP},+-}} = 14\,,
\end{equation}
when we targeted the HWP at $22.5^\circ$.

We consider a range of values for the intensity contrast including seven standard deviations, as follows:
\begin{eqnarray}
C^\text{min}_{\text{HWP},01} &\leq&    C_{\text{HWP},01} \leq C^\text{max}_{\text{HWP},01} \, ,\nonumber\\
C^\text{min}_{\text{HWP},+-} &\leq&    C_{\text{HWP},+-} \leq C^\text{max}_{\text{HWP},+-} \, ,
    \end{eqnarray}
where
\begin{eqnarray}
\label{HWPcontrastvalue}
    C^\text{min}_{\text{HWP},01}&=&    \bar{C}_{\text{HWP},01}-7\sigma_{\bar{C}_{\text{HWP},01}} = 133651\, ,\qquad C^\text{max}_{\text{HWP},01}=    \bar{C}_{\text{HWP},01}+7\sigma_{\bar{C}_{\text{HWP},01}} =157451\, ,\nonumber\\
    C^\text{min}_{\text{HWP},+-}&=&    \bar{C}_{\text{HWP},+-}-7\sigma_{\bar{C}_{\text{HWP},+-}} = 9875\, ,\qquad C^\text{max}_{\text{HWP},+-}=    \bar{C}_{\text{HWP},+-}+7\sigma_{\bar{C}_{\text{HWP},+-}} = 10071\, .\nonumber\\
    \end{eqnarray}

\begin{comment}
We obtained
\begin{equation}
\label{C01}
    \dami{C^\text{min}_{\text{HWP},01}=145551-7(1700)\leq C_{\text{HWP},01}\leq 145551+7(1700)=C^\text{max}_{\text{HWP},01} }
\end{equation}
when we targeted the HWP at $0^\circ$ and
\begin{equation}
\label{C+-}
   \dami{ C^\text{min}_{\text{HWP},+-}=9973-7(14)\leq C_{\text{HWP},+-}\leq 9973+7(14)=C^\text{max}_{\text{HWP},+-}}
\end{equation}
when we targeted the HWP at $22.5^\circ$, including seven standard deviations. 
\end{comment}

In order to obtain the maximum upper bounds for $\beta$ in (\ref{beta}), we take the \dami{lower bounds $C^\text{min}_{\text{HWP},01}$ and $C^\text{min}_{\text{HWP},+-}$ given by (\ref{HWPcontrastvalue})}. Thus, from \dami{(\ref{idealalpha}) and (\ref{beta})--(\ref{HWPcontrastvalue})}, we obtained the following upper bounds on the rotation error angle in the Bloch sphere introduced by the HWP
\begin{eqnarray}
\label{betabounds}
    \beta_{01}&\leq&%\dami{0.60976822^\circ} 
    \dam{0.609769^\circ}\, ,\nonumber\\
    \beta_{\pm}&\leq &%\dami{1.449426757^\circ}
    \dam{ 1.449428^\circ}\, ,
\end{eqnarray}
\adr{to six decimal places,} when we targeted the HWP at $0^\circ$ (corresponding to measuring in the $\{\ket{0},\ket{1}\}$ basis, approximately) and at $22.5^\circ$ (corresponding to measuring in the $\{\ket{+},\ket{-}\}$ basis, approximately), respectively. \dami{We assume these provide valid upper bounds for arbitrary quantum states $\ket{\psi}$ input by Bob.}

Finally, we consider the imperfection of the rotation mount, which gives an uncertainty angle of $\frac{\delta_{\text{RM}}}{2}=0.05^\circ$, corresponding to $\delta_{\text{RM}}=0.1^\circ$ in the Bloch sphere. We now deduce the final uncertainty angle $\theta$ in the Bloch sphere taking into account $\alpha,\beta\dami{,\delta_\text{PBS}}$ and $\delta_{\text{RM}}$.

We first consider that the horizontal or vertical polarization states (i.e., $\ket{0}$ or $\ket{1}$) are prepared and so the HWP is aimed at $0^\circ$. To deduce an upper bound on $\theta$ we assume the worst case scenario in which the Bloch vectors of the prepared state and of the final state after passing through the HWP are aligned to the real plane defined by the axes of the PBS, as in Fig.~S2(a). We assume $\beta<\theta$.

\begin{figure}
\renewcommand{\thefigure}{S\arabic{figure}}
	\centering
	\includegraphics[width=0.55\textwidth]{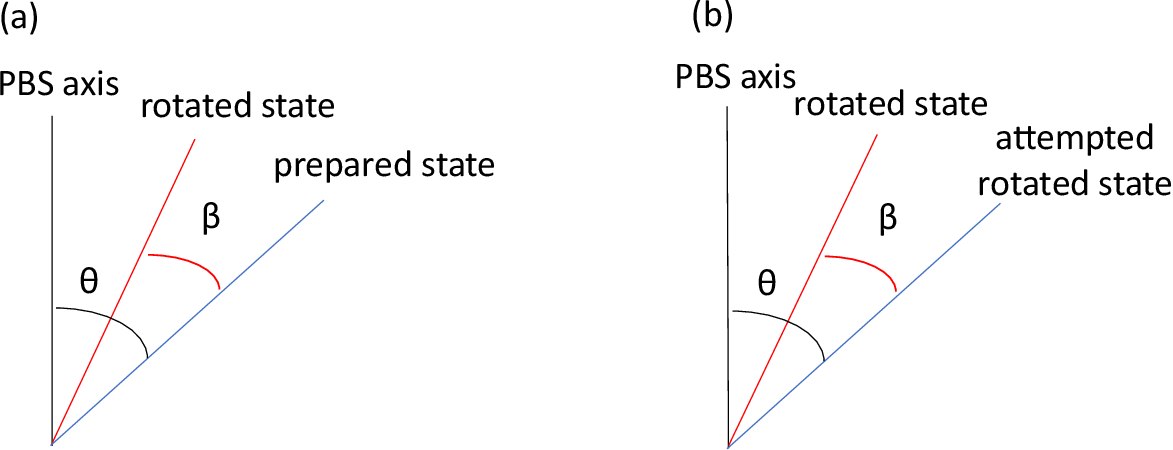}
	\caption{Rotation angle errors due to the HWP imperfections.}
	\label{fig:betatheta}
\end{figure}

We first assume that the rotation mount is perfect. Thus, we have
\begin{equation}
\label{X1}
    \theta-\beta \leq \alpha\dami{+\delta_\text{PBS}} \, , 
\end{equation}
hence,
\begin{equation}
\label{AAA4}
    \theta\leq \alpha + \beta\dami{+\delta_\text{PBS}} \, .
\end{equation}
\dami{Note that we consider the PBS imperfections by adding the angle $\delta_{\text{PBS}}$.}

Now let us consider that the states $\ket{+}$ or $\ket{-}$ are prepared. Thus, the HWP is aimed at $22.5^\circ$. In this case, if the HWP were perfect and perfectly aligned, the Bloch vector of the imperfectly prepared state would be rotated to the ‘attempted rotated state’ shown in Fig.~S2(b). Due to the imperfections and misalignment of the HWP, the state is in fact rotated to the ‘rotated state’ illustrated in Fig.~S2(b). Fig.~S2(b) illustrates the worst case scenario that allows us to derive an upper bound on $\theta$. As in Fig.~S2(a), we assume the two Bloch vectors lie on the real plane defined by the PBS axes. Thus, as above, (\ref{X1}) and (\ref{AAA4}) hold.

If $\beta \geq \theta$ then (\ref{AAA4}) holds trivially. Thus, (\ref{AAA4}) holds in general without needing to assume $\beta<\theta$.

Due to the imperfection of the rotation mount, we need to consider the uncertainty angle  $\delta_{\text{RM}}$ in the Bloch sphere, contributing up to twice this value due to misalignment of the HWP, and contributing on this value due to misalignment of the PBS, giving a total uncertainty angle on the Bloch sphere of $3\delta_{\text{RM}}$.  \adr{Conservatively, we} double this uncertainty to allow for the possibility that these uncertainties take maximum and opposite values when Alice and Bob perform the quantum token generation and when Bob estimates $\theta$ as discussed in this section. Thus, we obtain
\begin{eqnarray}
\label{thetas}
    \theta_i &\leq& \alpha_i + \beta_{01}+\dami{\delta_{\text{PBS}} +} 6\delta_{\text{RM}} \, , \nonumber\\
     \theta_l &\leq& \alpha_l + \beta_{\pm}+\dami{\delta_{\text{PBS}} +} 6\delta_{\text{RM}} \, , 
\end{eqnarray}
for $i=0,1$ and $l=`+',`-'$. Thus, from \dami{(\ref{idealalphas}), (\ref{idealalpha}),} (\ref{betabounds}) and (\ref{thetas}), we obtain
\begin{eqnarray}
\label{thetanumbers}
    \theta_0 &\leq &%\dami{3.737309927}^\circ
    \dam{3.737312^\circ}\,  , \nonumber\\
     \theta_1 &\leq &%\dami{4.935273077}^\circ
     \dam{4.935275^\circ}\,  , \nonumber\\
      \theta_+ &\leq &%\dami{5.115512747}^\circ 
\dam{5.115515^\circ}\,  , \nonumber\\
     \theta_- &\leq &  %\dami{4.434183609}^\circ 
     \dam{4.434186^\circ}\,  , \nonumber\\
\end{eqnarray}
\adr{to six decimal places.}
Finally, taking $\theta=\max_{i\in\{0,1,+,-\}}\{\theta_i\}$, and abusing notation by setting equality instead of inequality, we obtain our final upper bound
\begin{equation}
    \theta= %\dami{5.115512747}^\circ 
    \dam{5.115515^\circ}\,  ,
\end{equation}
\adr{to six decimal places.} We also obtained 
\begin{equation}
    P_\theta= 0.027 \, , 
\end{equation}
by taking $P_\theta=P_\alpha$. As discussed above, \adr{if the uncertainty angle in Bob's state preparations is greater than $\theta$ with a probability greater than $P_\theta$, and the 
distributions are independent, the uncertainty bounds satisfied by our data would be obtained with probability $\leq 1.296 \times 10^{-12} < 2.6 \times 10^{-12}$. }

\adr{As noted above, this analysis assumes that Bob's PBS and HWP behave
with suitably small deviations from ideal specifications. 
We have estimated these from empirical data, obtaining significantly better estimates
than the manufacturer's stated error tolerances.   
We note that our analysis would imply a very high degree of unforgeability even if the 
deviations were significantly larger and $\theta$ were significantly higher.  
In practical application, if Bob has any reason
to suspect his devices might deviate substantially from the ideal, he could  
carry out full device tomographic tests.  }

\section{\adr{Time sequence and transaction times}}
\label{sec:times}
%\section{The chronological sequence of the token transaction}

\dami{The chronological sequence of our quantum token implementation is shown in Fig.~S3. See main text for a complete description of the scheme.}

\begin{figure}
\renewcommand{\thefigure}{S\arabic{figure}}
	\centering
	\includegraphics[width=0.55\textwidth]{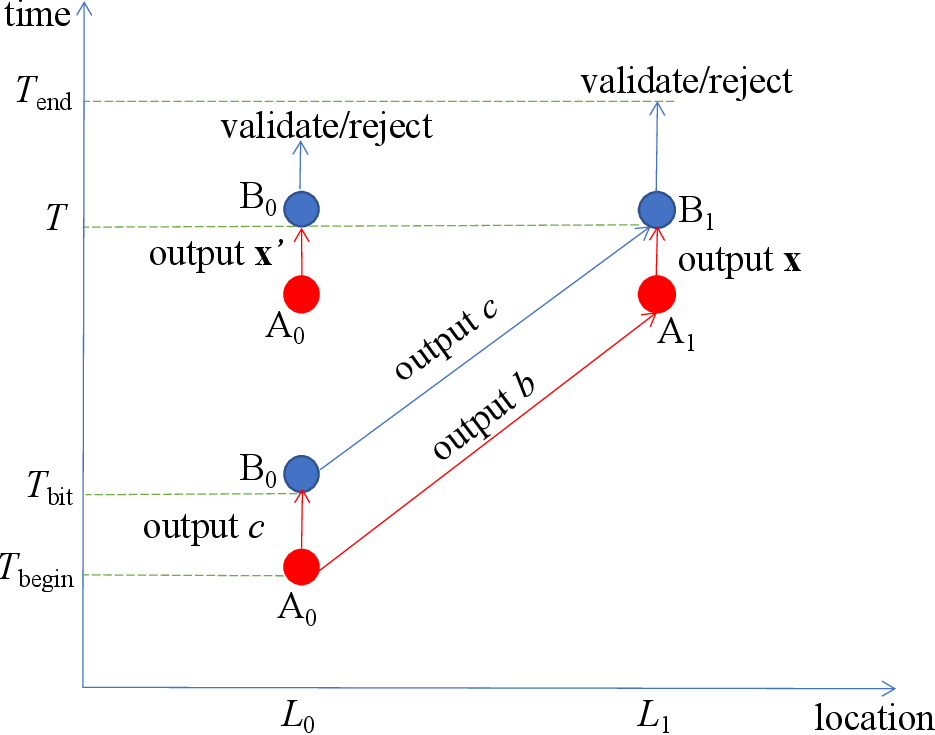}
	\caption{The chronological sequence of the token  transaction. 
 \dami{A$_0$ sends the bit $b$ to A$_1$ at the time $T_{\text{begin}}$ and the bit $c=z\oplus b$ to B$_0$ as soon as possible after, at the time $T_\text{bit}$. B$_0$ sends $c$ to B$_1$, which receives it by the time $T=T_\text{bit}+\Delta T_\text{comm}$, where $\Delta T_\text{comm}$ is the time that it takes a bit to be communicated from A$_0$ to A$_1$ and from B$_0$ to B$_1$. A$_b$ and A$_{b\oplus 1}$ present the token $\textbf{x}$ and the dummy token $\textbf{x}'$ to B$_b$ and B$_{b\oplus 1}$ within the time interval $[T,T+\Delta T]$, respectively. The case $b=1$ is illustrated. B$_0$ and B$_1$ validate or reject the token by the time $T_\text{end}\geq T+\Delta T$, using field programmable gate arrays (FPGAs). It is worth noting that pipeline processing is employed in the FPGAs to accelerate the verification speed. The transaction time is defined by $\Delta T_\text{tran}=T_\text{\text{end}}-T_{\text{begin}}$. The communications are output by electronic boards at the corresponding times as illustrated. We measured the difference between the times $T_\text{\text{end}}$ and $T_{\text{begin}}$ using an oscilloscope to obtain the transaction time $\Delta T_\text{tran}$.  }}
  %We measure the time interval $\Delta T_\text{trans}$ using an oscilloscope to obtain the transaction time.Bob's electronic boards output electronic pulses at the times We measure the time difference using an oscilloscope to obtain the transaction time. 
 %At t1, $Alice_i$ at the presentation point outputs $x$ to $Bob_i$. When $Bob_i$ receives $x$ and $c$, he starts the validation using a field-programmable gate array (FPGA). The token transaction is completed at t2. It is worth noting that pipeline processing is employed in the FPGA to accelerate the verification speed. The transaction time is obtained by (t2-t0). The electronic board outputs the electric pulse at t0, t1, and t2, respectively. We measure the time difference using an oscilloscope to obtain the transaction time.}}
	\label{fig:sequence}
\end{figure}

\dami{In Fig.~3 of the main text, we presented the results for the quantum and comparative advantages (QA and CA). We implemented the quantum token scheme 20 times, randomly choosing the bit $b$ that denoted the presentation location $L_b$.

Table S2 gives details of the obtained time measurements relevant for computing QA in our intracity implementation in Jinan, with laboratories communicated by $L_\text{fibre}=2,766$~m of optical fibre and physically separated by 425 m. The average value for the transaction time $\Delta T_\text{tran}$ was $15.34~\mu s$. The transaction time of the classical cross-checking scheme discussed in the Methods section is given by $\Delta T_\text{tran,C}= 2 \Delta T_\text{comm}$ (Eq.~(8) in Methods), where $\Delta T_\text{comm}=\frac{L_\text{fibre}}{c_\text{fibre}}$ is the time that it takes to communicate a bit between $L_0$ and $L_1$ over the optical fibre channel, which took the value $\Delta T_\text{tran,C}=27.66~\mu s$ in our implementation in Jinan, and where $c_\text{fibre}=2\times 10^{8}$~ms$^{-1}$ is the speed of light through the optical fibre. Thus, we obtained a quantum advantage of $QA= \Delta T_\text{tran,C} - \Delta T_\text{tran}= 27.66~\mu \text{s} - 15.34~\mu \text{s} = 12.32~\mu \text{s}$ in our intracity implementation in Jinan.

Table S3 gives details of the obtained time measurements relevant for computing CA in our intercity implemetation between Yiyuan and Mazhan,  with laboratories communicated by 60.54 km field-deployable optical fibre and physically separated by $D=51.60$~km. The average value for the transaction time $\Delta T_\text{tran}$ was $304.20~\mu$s. The transaction time of the classical cross-checking scheme discussed in the Methods section is given by $\Delta T_\text{tran,CF}= \frac{2D}{c}=344~\mu$s (Eq.~(9) in Methods), where $c=3\times 10^{8}~$ms$^{-1}$ is the speed of light through a vacuum. Thus, we obtained a comparative advantage of $CA= \Delta T_\text{tran,CF} - \Delta T_\text{tran}= 344~\mu \text{s} - 304.20~\mu \text{s} = ~39.80~\mu \text{s}$ in our intercity implementation.

%Table S4 shows the measured values for the time intervals $\Delta T_\text{proc}$ for the FPGA communication and processing times, excluding the communication times between the distant locations $L_0$ and $L_1$. The average value was $1.507~\mu s$.

The measured values for the time intervals $\Delta T_\text{proc}$ for the FPGA communication and processing times, excluding the communication times between the distant locations $L_0$ and $L_1$, gave an average value of $1.507~\mu s$. The time intervals $\Delta T_\text{proc}$ are already included in the transaction times $\Delta T_\text{tran}$ reported in tables S2 and S3.

}

%artificially choose trading locations at random. For QA, the details of the measured transaction time of the quantum token are shown in Table 3. The average transaction time is $15.34~\mu s$. The transaction time of the classical cross-checking protocol is (2 $\times$ the time for the channel communication )=$27.66~\mu s$. Thus, QA=$12.32~\mu s$.

%For CA, the details of the measured transaction time of the quantum token are shown in Table 4. The average transaction time is $304.20~\mu s$. The transaction time of the classical cross-checking protocol is (2 $\times$ the time for the channel communication )=$344~\mu s$. Thus, CA=$39.80~\mu s$. 

%Table 4 shows the measured values for the time intervals $\Delta T_\text{proc}$ for the FPGA communication and processing times, excluding the communication times between the distant locations $L_0$ and $L_1$. The average value was $1.507~\mu s$.

%For VA, we haven't conducted specific experimental demonstrations, but we can perform calculations by measuring the time for FPGA communication and processing. The results are shown in Table 5, average time is $1.507~\mu s$. This means that when the spatial distance exceeds $1.507\times 300\approx 453$ meters, it can manifest the time advantage.

\begin{table}
\renewcommand{\thetable}{S\arabic{table}}
\centering
\caption{%The results of the quantum token for QA
\dami{Transaction times in the Jinan intracity implementation. The error rates correspond to the token presentation and validation stage at location $L_b$, which are below the predetermined threshold $\gamma_\text{err}=9.4\%$. The bits $b$ and $z$ denote the location presentation $L_b$ and the measurement basis $\mathcal{D}_z$ chosen by Alice.}}
\begin{tabular}{c|cccc}

\dami{Trial} & $b$ %Present Location
& $z$ & $\Delta T_{\text{tran}}$ ($\mu\text{s}$) %Transaction time ($\mu s$)
& Error rate (\%)\\
\hline
1 & 1 & 1 & 15.309 & 5.347 \\
2 & 1 & 0 & 15.345 & 6.129 \\
3 & 0 & 0 & 15.348 & 6.073 \\
4 & 0 & 1 & 15.326 & 6.100 \\
5 & 1 & 0 & 15.334 & 6.126 \\
6 & 1 & 0 & 15.336 & 6.716 \\
7 & 0 & 1 & 15.347 & 5.573 \\
8 & 1 & 1 & 15.311 & 6.174 \\
9 & 1 & 0 & 15.335 & 6.100 \\
10 & 0 & 0 & 15.330 & 6.396 \\
11 & 1 & 1 & 15.370 & 5.907 \\
12 & 0 & 1 & 15.323 & 6.082 \\
13 & 0 & 0 & 15.333 & 5.541 \\
14 & 0 & 0 & 15.342 & 6.280 \\
15 & 1 & 1 & 15.319 & 6.538 \\
16 & 0 & 0 & 15.333 & 5.969 \\
17 & 0 & 1 & 15.351 & 5.725 \\
18 & 1 & 0 & 15.340 & 5.898 \\
19 & 1 & 1 & 15.343 & 5.904 \\
20 & 0 & 1 & 15.338 & 5.879 \\

\end{tabular}
\end{table}

\begin{table}
\centering
\renewcommand{\thetable}{S\arabic{table}}
\caption{\dami{Transaction times in the intercity implementation between Yiyuan and Mazhan. The error rates correspond to the token presentation and validation stage at location $L_b$, which are below the predetermined threshold $\gamma_\text{err}=9.4\%$. The bits $b$ and $z$ denote the location presentation $L_b$ and the measurement basis $\mathcal{D}_z$ chosen by Alice.}}
\begin{tabular}{c|cccc}

Trial & $b$ & $z$ & $\Delta T_{\text{tran}}$ ($\mu\text{s}$) & Error rate (\%)\\
\hline
1 & 0 & 1 & 304.198 & 6.432 \\
2 & 0 & 1 & 304.204 & 6.532 \\
3 & 1 & 0 & 304.194 & 5.204 \\
4 & 1 & 1 & 304.186 & 5.981 \\
5 & 0 & 1 & 304.220 & 5.917 \\
6 & 1 & 1 & 304.196 & 5.799 \\
7 & 1 & 1 & 304.220 & 5.461 \\
8 & 1 & 0 & 304.204 & 6.457 \\
9 & 0 & 0 & 304.210 & 5.526 \\
10 & 1 & 1 & 304.219 & 5.796 \\
11 & 0 & 1 & 304.212 & 5.895 \\
12 & 1 & 1 & 304.198 & 6.129 \\
13 & 0 & 0 & 304.215 & 5.643 \\
14 & 0 & 0 & 304.213 & 6.344 \\
15 & 0 & 0 & 304.185 & 5.876 \\
16 & 0 & 0 & 304.204 & 6.489 \\
17 & 1 & 0 & 304.211 & 5.986 \\
18 & 1 & 0 & 304.208 & 5.916 \\
19 & 0 & 1 & 304.204 & 5.900 \\
20 & 1 & 0 & 304.182 & 6.532 \\

\end{tabular}
\end{table}

\adr{\section{Related work \dami{and extensions of our schemes}}
\label{sec:related}

Various proposals for schemes for quantum money tokens and related concepts have
been considered and partially implemented.   Most such schemes require
quantum states to be propagated over a network and maintained with high fidelity.
Given current technology, implementations of these schemes
generally have thus been very short-range 
and short-lived, meaning that they are not practically applicable in their present
form.   However, they serve as valuable benchmarks of current technology.  
It will be important to continue careful comparisons between the functionality, resources required and technological feasibility 
of all proposals related to quantum money, including the S-money tokens discussed in the present work, as technology develops.   

Bartkiewicz et al.~\cite{BCGLMN17} describe an experimental implementation of 
partial cloning attacks on photon states representing components of quantum
money tokens.   The tokens were very short-lived, as they did not implement quantum
memory.  Bozzio et al.~\cite{BOVZKD18} demonstrated an on-the-fly version of quantum money tokens
using weak coherent states of light.   
Guan et al.~\cite{GAAZLYWZP18}  also implemented short-lived quantum money with 
light pulses, using high-dimensional time-bin qudits.    
Again, neither of these implementations were integrated with quantum memory, 
and so the tokens were very short-lived. 
Jirakova et al.~\cite{JBCL19} presented proof-of-concept attacks on the 
implementation reported in Ref.~\cite{BOVZKD18}. 

Behera et al.~\cite{BBP17} implemented a version of the so-called quantum cheques proposed by Moulick and Panigrahi~\cite{moulick2016quantum} on a 5-qubit IBM quantum computer.   
This test-of-concept demonstration was localized to this device, and relatively insecure because of the small number of qubits.  

Schiansky et al.~\cite{SERGTBW23} recently reported an experimental demonstration of quantum digital payments using an
interesting 
scheme that uses some features of quantum S-money.  Ref.~\cite{SERGTBW23} also offers comments on and comparisons with our scheme and a previous partial implementation~\cite{KLPGR22}.  It is important to note that the proposed use case 
of the scheme of Ref.~\cite{SERGTBW23} is very different from that on which we focus here, and its suggested
advantages are intended to apply in that use case.  
We comment briefly here only on some key points relevant to our present discussion, leaving a more complete comparison and analysis of the two schemes in various use cases for future work.

The scheme of Ref.~\cite{SERGTBW23} is not described in a relativistic context and not 
designed for the scenarios discussed in the present work.   
It involves three types of party, banks (trusted token providers or TTPs in the terminology of 
 Ref.~\cite{SERGTBW23}), clients and merchants.
To compare directly to the scenarios and use cases discussed in the present work, it is helpful to simplify
by eliminating the merchants (or alternatively considering them as untrusted classical channels).    
In that case, one or more TTPs would supply streams of quantum states over unauthenticated channels
that are accessible to clients, who may carry out appropriate measurements to generate classical tokens. 

If there are two or more spatially separated TTPs this allows a single client (with multiple agents) to generate independent classical tokens from each stream and present each to the relevant TTP, who will 
validate them without cross-checking.
The client can thus fraudulently multiply spend a single credit.
Hence the scheme does not protect against multiple spending in the scenarios 
considered in the present work.  (Its aim is to give a different type of protection, 
namely to prevent multiple transactions with the same identifier/label, without the need to cross-check.)   Protection could be obtained by cross checking, but in that case there is no time advantage compared to a purely classical cross checking scheme.
Alternatively, unforgeability can be guaranteed if there is a single TTP, as in the experimental demonstration of Ref.~\cite{SERGTBW23}. But in this case, two-way communication is required between the client and the TTP. Again this gives no time advantage over classical cross-checking schemes.

The scheme of Ref.~\cite{SERGTBW23} requires clients to first decide where to spend their credit
and then generate classical token data by carrying out appropriate measurements on a string of quantum
states.   
In contrast, our recent~\cite{KLPGR22} and current implementation allow the user 
flexibility to make this choice shortly before presenting the token for verification.
The time between choice and presentation is limited in principle only by causal
communication constraints, and in practice we achieved the sub-millisecond transaction times reported above. 
Our implementation allows the quantum communication between bank and user to take place
arbitrarily long (even years) \dami{before} the user chooses when and where to spend the token.
In this respect it replicates the real-world functionality of money and credit cards, 
which are typically obtained from banks without any need to commit to using them with a given merchant or at a given time and place.

The implementation of Ref.~\cite{SERGTBW23} required ``a few tens of minutes''~\cite{SERGTBW23} \dami{from Alice’s (the client's) choice of presentation point to token validation}.    This should be compared with the 
\adr{$\approx 1.5~\mu{\rm s}$ required to validate a presented token in our experiments,
the total transaction times of $\approx  15~\mu{\rm s}$ and $\approx 30\dami{4}~\mu{\rm s} $  for our intra-city and inter-city experiments} 
%$\approx 300~\mu{\rm s}$ required in our experiment, 
and also, importantly, with the \dami{times required for classical cross-checking schemes to complete a token transaction, which would be $\approx 28~\mu{\rm s}$ in our intracity setup using our 2,766 m long optical fibre link and $\approx 344~\mu{\rm s}$ in our intercity scenario using ideal light speed communication through free space with nodes separated by $51.6$ km, and which would be $\approx 7~\mu{\rm s}$ for the 641 m long optical fibre link  used in the experiment of Ref.~\cite{SERGTBW23} -- or even shorter if the classical cross-checking scheme used ideal light speed communication through free space.}
Ref. ~\cite{SERGTBW23} suggests that with sufficiently good technology their token generation time could be reduced to the order of one second.  Our transaction times are still far shorter than this target, which is also considerably longer than the classical cross-checking time required even for a global network. For comparison, great circle free space communication between antipodal Earth points would take about 67 ms.   

It may be challenging to extend the scheme of Ref.~\cite{SERGTBW23} to networks with multiple transaction nodes and longer distances between nodes. One major challenge is overcoming quantum channel losses. The experiment of Ref.~\cite{SERGTBW23} implemented quantum communication through a 641 metres optical fibre with a total loss of $\approx 22\%$, of which $\approx 7\%$ corresponded to the fibre. With their reported error rate of $\approx 3.3\%$, their results are just within the secure region for the simplest case of two merchants, as illustrated in Fig. 4 of Ref.~\cite{SERGTBW23}. For realistic use cases with multiple merchants and large scale networks, higher losses would make it harder to obtain security. 
In contrast, quantum channel losses do not scale with network size in our implemented protocol, as all quantum communications may take place long before token use, over a single short dedicated channel with a known low loss rate. The communications between distant nodes in our protocol are purely classical. We analyze and discuss in section 7 below how our implementation can be extrapolated to multi-node global networks and still achieve good security.

If implemented in a scenario in which transaction times and positions are critical, users/clients of Ref.~\cite{SERGTBW23}} would need to rely on clock synchronization and position authentication, using networks which are potentially insecure.   As noted above, these issues potentially affect the correctness of any quantum token scheme, including the present implementation of ours and that of Ref.~\cite{SERGTBW23}, when position and time are relevant.  If the client’s coordinates are incorrect, she may present a token in an incorrect region, i.e. one where no merchant or bank agent is present, or in a valid but unintended region.  This means she may spend her token to acquire resources in the wrong space-time region, which may be to her
disadvantage (for example, if she has a time- and location-dependent trading strategy).   It also affects the privacy of any quantum token scheme when position and time are relevant.   The client may present quantum money tokens in a region in the causal past of her intended region, which would allow the spoofer (e.g. the bank, or a third party) to learn her intended presentation region.   Alternatively, she may present tokens in a region space-like separated from her intended region, which allows the spoofer exploitable information about her trading strategy in regions where she requires such information to be unavailable.   Again, this may be disadvantageous when her trading strategy and those of adversaries (such as the spoofer) are time- and location-dependent and utilize all locally available information.   We emphasize again that secure position authentication and time synchronization are solvable problems – assuming that some data, some locations and a reference clock can be trusted – although they are not experimentally addressed in our present implementation,  nor in Ref.~\cite{SERGTBW23}, nor indeed, as far as we are aware, in any implementation to date in the field of relativistic quantum cryptography (see e.g.~\cite{LKBHTKGWZ13,LCCLWCLLSLZZCPZCP14,LKBHTWZ15,VMHBBZ16,ABCDHSZ21}). 
Of course, in scenarios in which transaction times and positions are not important, users of our scheme can label possible recipients of token presentations in ways that are not related to any purported space time coordinates.
In this case they need not rely on secure clock synchronisation or position authentication.

Neither the present scheme nor that of Ref.~\cite{SERGTBW23} require space-like separated presentation regions or other space-time constraints to ensure unforgeability. The reason we impose these constraints in the present implementation is to guarantee that the implementation demonstrates an advantage over classical alternatives. 

Our scheme assumes secure and authenticated classical
channels among agents of the same party.  These can be straightforwardly and efficiently implemented with 
pre-distributed secret keys. Our scheme can also be straightforwardly adapted to prevent any third
party from impersonating the user or the bank in any of the communications between the bank and the
user (using previously distributed secret keys, for instance). It can also be straightforwardly adapted
to prevent any third party from impersonating the bank in the quantum communication, which occurs
only once and is from the bank to the user. For example, after transmitting the quantum states to
the user in the quantum token generation stage, the bank can indicate the states prepared, the bases
used for preparation, and the labels for a random subset, using a secure and authenticated classical channel (implemented with previously distributed secret keys, for instance), which effectively allows
the user to authenticate that the quantum communication was from the bank. The quantum channel
does not need to be assumed secure, because the security properties of user privacy and unforgeability hold straightforwardly as claimed even if some third party is interfering in the quantum channel.
Since our scheme satisfies user privacy against a dishonest bank, it also trivially satisfies user privacy
against a third party trying to impersonate the bank. Since our scheme satisfies unforgeability against
a dishonest user, it also trivially satisfies unforgeability against a third party trying to impersonate
the user.

\section{Security proof}
\label{sec:security}

In Ref.~\cite{KLPGR22}, two quantum token schemes were presented, called $\mathcal{QT}_1$ and $\mathcal{QT}_2$. Here we provide security analyses and proofs for these schemes that improve the results of Ref. ~\cite{KLPGR22} in the following ways.

First, the unforgeability proof of Ref.~\cite{KLPGR22} used the parameter $\theta$, which was defined as an upper bound on the uncertainty angle on the Bloch sphere for each of the two qubit bases prepared by Bob with respect to the target computational and Hadamard bases. This definition of $\theta$ was used in the proof of theorem 1 of Ref.~\cite{KLPGR22} to guarantee that the angle in the Bloch sphere between the two bases prepared by Bob was within the range $\bigl[\frac{\pi}{2}-2\theta,\frac{\pi}{2}+2\theta\bigr]$.

Crucially, our unforgeability proof here improves the one of Ref.~\cite{KLPGR22} by allowing Bob's prepared qubit states to deviate independently from the BB84 states up to a small uncertainty angle $\theta$ on the Bloch sphere, instead of making the assumption that they belong to two qubit orthonormal bases. Furthermore, our proof here also allows for a small probability $P_\theta>0$ that the prepared qubit states deviate from a BB84 state by more than an angle $\theta$ on the Bloch sphere.

%In the proof of theorem 1 in this paper, we relax the definition of $\theta$ as follows. We define an angle $\theta\in(0,\frac{\pi}{4})$ and a probability $P_\theta$ such that for the qubit quantum states that Bob sends Alice, Bob's states belong to two bases separated by an angle in the Bloch sphere within the range $\bigl[\frac{\pi}{2}-2\theta,\frac{\pi}{2}+2\theta\bigr]$ unless with a probability not greater than $P_\theta$. We can then define $\theta_\text{comp}$ and $\theta_\text{Had}$ as the uncertainty angles in the Bloch sphere for the target computational and Hadamard bases of Bob, and take $2\theta=\theta_\text{comp}+\theta_\text{Had}$, with upper bounds on the respective uncertainties $\theta_\text{comp}$ and $\theta_\text{Had}$ being both satisfied with probability not smaller than $1-P_\theta$. We note that this provides a more general case than in Ref.~\cite{KLPGR22}, as the particular case $\theta_\text{comp}=\theta_\text{Had}=\theta$ and $P_\theta=0$ corresponds to the analysis of Ref.~\cite{KLPGR22}.

With these changes, we define
\begin{equation}\label{noqubtheta}
P_{\text{noqub},\theta}=1-(1-P_\text{noqub})(1-P_\theta) \, ,
\end{equation}
which is an upper bound on the probability that a quantum state that Bob sends Alice has dimension greater than two (corresponding to more than one qubit) or that it deviates from the intended BB84 state by an angle greater than $\theta$ on the Bloch sphere. In our updated unforgeability proof we assume, very conservatively, that an arbitrarily powerful dishonest Alice can perfectly read the bits that Bob encodes in quantum states with dimension greater than two (in multiple qubit states) and in qubit states that deviate from BB84 states by an angle greater than $\theta$.   \adr{In practice the information she can extract from these inrequent ``rogue" pulses will generally be far more limited.}   

% With the updated unforgeability proof, we can further assume that for qubit states that Bob prepares from a pair of qubit bases separated by an angle in the Bloch sphere outside the range $\bigl[\frac{\pi}{2}-2\theta,\frac{\pi}{2}+2\theta\bigr]$, an arbitrarily powerful dishonest Alice can also perfectly read their encoded bits.

Second, the robustness, correctness and unforgeability guarantees proved here in lemmas 2, 3 and theorem 1 use tighter Chernoff bounds than those used in lemmas 2, 3 and theorem 1 of Ref.~\cite{KLPGR22}. These Chernoff bounds are stated and proved below.

\adr{Although the experimental implementation reported in this paper is of scheme $\mathcal{QT}_2$ (with minor variations that do not affect the security analysis)
and not $\mathcal{QT}_1$, the unforgeability proof given here applies to both these quantum token schemes, originally defined in Ref.~\cite{KLPGR22}.  
We believe this broad scope proof will be helpful for future analyses and implementations of both schemes. }

\setcounter{theorem}{0}

Our unforgeability proof uses the maximum confidence measurement of the following quantum state discrimination task.

\begin{definition}
\label{Pbound}
    Consider the following quantum state discrimination problem. For $k\in\Omega_\text{qub}$, we define $\rho_1^k=\rho_{00}^k$, $\rho_2^k=\rho_{01}^k$, $\rho_3^k=\rho_{10}^k$, $\rho_4^k=\rho_{11}^k$, $q_1^k=P^k_\text{PS}(0)P^k_\text{PB}(0)$, $q_2^k=P^k_\text{PS}(0)P^k_\text{PB}(1)$, $q_3^k=P^k_\text{PS}(1)P^k_\text{PB}(0)$, $q_4^k=P^k_\text{PS}(1)P^k_\text{PB}(1)$, and %With this notation, we have $S^k_1=\{\rho_1^k,\rho_2^k\}$, $S^k_2=\{\rho_2^k,\rho_3^k\}$, $S_3^k=\{\rho_3^k,\rho_4^k\}$ and $S_4^k=\{\rho_4^k,\rho_1^k\}$.
    \begin{equation}
    \label{BB1}
r^k_i=\frac{q^k_i+q^k_{i+1}}{2} \, , \qquad\chi^k_i =\frac{q^k_i\rho^k_i+q^k_{i+1}\rho^k_{i+1}}{q^k_i+q^k_{i+1}} \, , \qquad \rho^k=\sum_{i=1}^4 r^k_i \chi^k_i\, ,
    \end{equation}
for all $i\in[4]$. Let $P_{\text{MC}}(\chi^k_j)$ be the maximum confidence measurement that the received state was $\chi^k_j$ when Alice \adr{ is distinguishing states from the 
ensemble $\{ \chi_j^k , r_i^k \}$ and her outcome is $j\in [4]$ \cite{CABGJ06}. 
This} maximum is taken over all positive operators $Q$ acting on a two dimensional Hilbert space. That is, we have
\begin{equation}
\label{BB2}
 P_{\text{MC}}(\chi^k_j)=\max_{Q\geq 0}\frac{r^k_jTr[Q\chi^k_j]}{Tr[Q\rho^k]} \, .   
\end{equation}
\end{definition}

\begin{theorem}\label{theorem1rep}
Suppose that the following constraints hold:
\adr{
\begin{eqnarray}\label{conditions}
\max_{j\in[4],k\in \Omega_\text{qub}} \dami{2} P_{\text{MC}}(\chi^k_j) & < & 1 \, , \nonumber \\
 N\gamma_\text{det}&\leq &  n \leq N\, ,\nonumber\\
0&<&P_{\text{noqub},\theta}<\nu_\text{unf}<\gamma_\text{det}\Bigl(1-\frac{\gamma_\text{err}}{1-P_\text{bound}}\Bigr)\, ,
\end{eqnarray}
for predetermined $\gamma_\text{det}\in(0,1]$ and $\gamma_\text{err}\in[0,1)$ and for some $\nu_\text{unf}\in(0,1)$, where $n=\lvert \Lambda\rvert$, and where $P_\text{bound}$ satisfies
\begin{equation}
\label{BB3}
\max_{j\in[4],k\in \Omega_\text{qub}} \dami{2} P_{\text{MC}}(\chi^k_j)\leq P_{\text{bound}} < 1\, .
\end{equation}
In the case} that losses are not reported we take $\gamma_\text{det}=1$ and $n=N$.
 Then the quantum token schemes $\mathcal{QT}_1$ and $\mathcal{QT}_2$ are $\epsilon_{\text{unf}}-$unforgeable with
\begin{eqnarray}
\label{unforgeability}
\epsilon_{\text{unf}} &=& \sum_{l=0}^{\lfloor N(1-\nu_\text{unf}) \rfloor}\left(\begin{smallmatrix}N\\ l\end{smallmatrix}\right)(1-P_{\text{noqub},\theta})^{l}(P_{\text{noqub},\theta})^{N-l}\, \nonumber\\
&&\quad + \sum_{l=0}^{\lfloor n\gamma_\text{err} \rfloor}\left(\begin{smallmatrix}n- \lfloor N\nu_\text{unf} \rfloor\\ l\end{smallmatrix}\right)(1-P_{\text{bound}})^{l}(P_{\text{bound}})^{n- \lfloor N\nu_\text{unf} \rfloor-l} \, .
\end{eqnarray}
\end{theorem}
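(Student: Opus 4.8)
The plan is to reduce forging to a single multi-round discrimination game and then control it with two independent Chernoff estimates that produce, respectively, the two terms of (\ref{unforgeability}). First I would fix an honest Bob, who in round $k$ prepares $\rho^k_{t_k u_k}$ with bit $t_k$ and basis $u_k$ drawn independently with probabilities $P^k_\text{PS}$, $P^k_\text{PB}$, and an arbitrarily powerful dishonest Alice who receives all $N$ systems, may perform any joint measurement, and (in the loss-reporting version) reports a subset $\Lambda$ of size $n$ as detected. I would partition the rounds into the good-qubit set $\Omega_\text{qub}$ and a complementary rogue set of multi-photon states or states whose Bloch-sphere deviation exceeds $\theta$, and conservatively grant Alice a free win on every rogue round.

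The key combinatorial reduction is that, since $d_0=c$ and $d_1=c\oplus 1$ are complementary, the index sets $\Delta_0,\Delta_1$ checked at the two presentation locations partition the detected rounds by preparation basis. Hence acceptance at both locations forces Alice, for each round $k$, to supply via $(x^0_k,x^1_k)$ a guess of $t_k$ in the \emph{actual} basis $u_k$ that is correct for all but at most $\lfloor n\gamma_\text{err}\rfloor$ of the detected rounds (using $\lfloor N_0\gamma_\text{err}\rfloor+\lfloor N_1\gamma_\text{err}\rfloor\le\lfloor n\gamma_\text{err}\rfloor$). Each such round is exactly the four-state problem of Definition~\ref{Pbound}: a guessed pair $(g^0,g^1)$ wins precisely on the two states composing one mixture $\chi^k_j$, so an outcome $j$ obtained from POVM element $Q$ wins with conditional probability $2\,r^k_j\Tr[Q\chi^k_j]/\Tr[Q\rho^k]$, whose maximum over $Q$ is $2P_\text{MC}(\chi^k_j)$ by (\ref{BB2}). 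This explains the factor $2$ appearing in (\ref{conditions}) and (\ref{BB3}).

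The heart of the argument, and the step I expect to be hardest, is to upgrade this single-round bound to one holding \emph{conditioned on arbitrary information from all other rounds} --- any inputs $\rho_j$ and outputs $x_j$ for $j\neq k$ and any extra outcome $o_\text{extra}$ --- so that it feeds the coin-tossing lemma (\ref{B1}). Because Bob prepares the $N$ systems in tensor-product form, conditioning on a fixed joint outcome of the remaining systems leaves round $k$ in the same ensemble $\{\chi^k_j,r^k_j\}$, and I would argue the conditional per-round win probability is still bounded by $\max_j 2P_\text{MC}(\chi^k_j)\le P_\text{bound}$ even against joint measurements. Granting this, the number of good detected rounds Alice wins is stochastically dominated by independent coins with success probability $P_\text{bound}$, so the probability she errs on at most $\lfloor n\gamma_\text{err}\rfloor$ of at least $n-\lfloor N\nu_\text{unf}\rfloor$ good detected rounds is bounded by the second sum in (\ref{unforgeability}).

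Finally I would assemble the pieces by total probability. The rogue count is stochastically dominated by $\mathrm{Bin}(N,P_{\text{noqub},\theta})$, so the event that more than $\lfloor N\nu_\text{unf}\rfloor$ rounds are rogue has probability at most the first sum in (\ref{unforgeability}); on its complement at least $n-\lfloor N\nu_\text{unf}\rfloor$ good detected rounds remain, to which the previous paragraph applies. Adding the two contributions yields (\ref{unforgeability}) for both $\mathcal{QT}_1$ and $\mathcal{QT}_2$, whose unforgeability reduces to the same game. The constraints (\ref{conditions}) are precisely what make both tails genuine Chernoff bounds: $P_{\text{noqub},\theta}<\nu_\text{unf}$ places $\lfloor N\nu_\text{unf}\rfloor$ above the mean rogue count, while $\nu_\text{unf}<\gamma_\text{det}\bigl(1-\gamma_\text{err}/(1-P_\text{bound})\bigr)$ together with $n\ge N\gamma_\text{det}$ forces $\lfloor n\gamma_\text{err}\rfloor$ below the expected number of lost good rounds $(n-\lfloor N\nu_\text{unf}\rfloor)(1-P_\text{bound})$, so both sums decay exponentially in $N$.
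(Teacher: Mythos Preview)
Your proposal is correct and follows essentially the same architecture as the paper's proof: reduce double validation to a multi-round ``guess the pair'' game, bound each round's conditional win probability by $2P_{\text{MC}}(\chi^k_j)$ via the maximum-confidence measurement, feed this into a coin-tossing lemma that handles arbitrary joint strategies, and split over the rogue count. One small imprecision: your phrase ``conditioning on a fixed joint outcome of the remaining systems leaves round $k$ in the same ensemble $\{\chi^k_j,r^k_j\}$'' is not literally what happens---the posterior on round $k$ can shift---but the conclusion survives because the effective POVM element on round $k$, obtained by partially tracing the global projector against the fixed states and ancilla of the other rounds, is still a positive operator on the qubit space, so the ratio $r^k_j\Tr[Q\chi^k_j]/\Tr[Q\rho^k]$ is still bounded by $P_{\text{MC}}(\chi^k_j)$; this is exactly the content of the paper's Lemma~\ref{lemma1}, and your flagging of this as the hardest step is apt.
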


\adr{Note that the conditions (\ref{conditions}) \dami{and (\ref{BB3}}) imply that the bound (\ref{unforgeability}) decreases exponentially with $N$.}

\subsection{Mathematical preliminaries}

%\dami{We define some notation.} 
For any finite dimensional Hilbert space $\mathcal{H}$, we define the sets of linear operators, positive semi-definite operators and quantum density matrices acting on $\mathcal{H}$, respectively by $\mathcal{L}(\mathcal{H})$, $\mathcal{P}(\mathcal{H})$ and $\mathcal{D}(\mathcal{H})$.

\setcounter{lemma}{3}
\begin{lemma}
\label{lemma1}
Let $\mathcal{H}_A$ and $\mathcal{H}_B$ be arbitrary finite dimensional Hilbert spaces. Let $S, X, Y$ be arbitrary finite non-empty sets.
Let $\{q_i \}_{i\in S}$ and $\{p_x \}_{x\in X}$ be probability distributions.  Suppose that Alice receives the quantum state $\rho_i\otimes \sigma_x$ with probability $q_ip_x$, where $\rho_i\in \mathcal{D}(\mathcal{H}_A)$  and $\sigma_x\in \mathcal{D}(\mathcal{H}_B)$, for all $i\in S$ and $x\in X$. Suppose that Alice plays a quantum state discrimination task obtaining a measurement outcome $j\in S$, guessing that she received the state $\rho_j\in \mathcal{D}(\mathcal{H}_A)$, and that she also obtains a measurement outcome $y \in Y$. The probability $P_{xy}$ that Alice discriminates the quantum state successfully from the ensemble $\{q_i,\rho_i\}_{i\in S}$, conditioned on the extra state being $\sigma_x$ and her extra outcome being $y$ satisfies
\begin{equation}
\label{C1}
   % P_{xy}\leq \max_{M}\sum_{i\in S}q_i Tr[O_i \rho_i],
   P_{xy}\leq \max_{j\in S}P_{\text{MC}}(\rho_j)\, ,
\end{equation}
for all $x\in X$ and all $y\in Y$,
where
\begin{equation}
 P_{\text{MC}}(\rho_j)=\max_{Q\in \mathcal{P}(\mathcal{H}_A)}\frac{q_jTr[Q\rho_j]}{Tr[Q\rho]}   
\end{equation}
is the maximum confidence measurement that the received state was $\rho_j$ when Alice's outcome is $j\in S$ \cite{CABGJ06},
and where
\begin{equation}
\rho=\sum_{i\in S}q_i \rho_i \, .   
\end{equation}
%, where the maximum is taken over all POVMs $M=\{O_j\}_{j\in S}$. That is, $P_{xy}$ is upper bounded by the maximum probability of success in the standard quantum state discrimination task $\{q_i,\rho_i\}_{i\in S}$.
\end{lemma}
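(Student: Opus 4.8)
The plan is to collapse the bipartite, two-label measurement into an effective single-system measurement on $\mathcal{H}_A$ by partial-tracing out the side register $\mathcal{H}_B$ against $\sigma_x$ and fixing the extra outcome to $y$, and then to apply the maximum-confidence bound term by term in the guess label.

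First I would model Alice's most general strategy as a POVM $\{M_{jy}\}_{(j,y)\in S\times Y}$ on $\mathcal{H}_A\otimes\mathcal{H}_B$ with $\sum_{j,y}M_{jy}=I$, where $j$ is her guess of the true label $i$ and $y$ the extra outcome. The joint probability of true label $i$, side state $\sigma_x$, and outcome $(j,y)$ is $q_i p_x\,\Tr[M_{jy}(\rho_i\otimes\sigma_x)]$, and success is the event $j=i$. Writing $P_{xy}$ by Bayes' rule and cancelling the common prior $p_x$ between numerator and denominator, I would obtain
\[
P_{xy}=\frac{\sum_{i\in S} q_i\,\Tr[M_{iy}(\rho_i\otimes\sigma_x)]}{\Tr[N_y(\rho\otimes\sigma_x)]},\qquad N_y:=\sum_{j\in S}M_{jy}.
\]

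Second I would introduce, for each $i$, the effective operator $Q_i:=\Tr_B[M_{iy}(I_A\otimes\sigma_x)]$ on $\mathcal{H}_A$ and establish two facts. (i) $Q_i\in\mathcal{P}(\mathcal{H}_A)$: for any $\ket{\psi}\in\mathcal{H}_A$ one has $\bra{\psi}Q_i\ket{\psi}=\Tr[M_{iy}(\ket{\psi}\bra{\psi}\otimes\sigma_x)]\geq 0$, since the trace of a product of two positive operators is nonnegative. (ii) $\Tr[M_{iy}(\rho_i\otimes\sigma_x)]=\Tr[Q_i\rho_i]$, which follows from the standard identity $\Tr_A[X_A\,\Tr_B[Y_{AB}]]=\Tr_{AB}[(X_A\otimes I_B)Y_{AB}]$ applied with $X_A=\rho_i$ and $Y_{AB}=M_{iy}(I_A\otimes\sigma_x)$, together with cyclicity of the full trace and $(I_A\otimes\sigma_x)(\rho_i\otimes I_B)=\rho_i\otimes\sigma_x$. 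Summing (ii) over the guess labels gives $\Tr[N_y(\rho\otimes\sigma_x)]=\Tr[Q\rho]$ with $Q:=\sum_j Q_j$, so that $P_{xy}=\big(\sum_i q_i\Tr[Q_i\rho_i]\big)\big/\big(\sum_j\Tr[Q_j\rho]\big)$.

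Finally I would apply the defining property of $P_{\text{MC}}$: since each $Q_i\geq 0$, optimality gives $q_i\Tr[Q_i\rho_i]\leq P_{\text{MC}}(\rho_i)\,\Tr[Q_i\rho]\leq P^\ast\Tr[Q_i\rho]$ with $P^\ast:=\max_{j\in S}P_{\text{MC}}(\rho_j)$. (In the degenerate case $\Tr[Q_i\rho]=0$, positivity forces every term of $\sum_k q_k\Tr[Q_i\rho_k]=\Tr[Q_i\rho]$ to vanish, so $q_i\Tr[Q_i\rho_i]=0$ and the bound holds trivially.) Summing over $i$ and dividing by $\sum_j\Tr[Q_j\rho]>0$ yields $P_{xy}\leq P^\ast$, uniformly in $x$ and $y$, which is the claim. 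The main obstacle is the reduction step (i)--(ii): one must check that $Q_i$ is genuinely positive semidefinite even though the product $M_{iy}(I_A\otimes\sigma_x)$ need not be, since the factors do not commute, and that this partial trace exactly reproduces the bipartite success probabilities on $\mathcal{H}_A$; once the effective operators $Q_i$ are in hand the maximum-confidence inequality applies termwise and the result is immediate. This is precisely the lemma that later lets the authors condition on arbitrary side information $(\sigma_x,y)$ without degrading the per-round bound.
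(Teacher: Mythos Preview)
Your proof is correct and takes essentially the same approach as the paper: both define effective positive operators on $\mathcal{H}_A$ by partially tracing the measurement operators against $\sigma_x$ (the paper via an ancilla and projective measurement, you directly via a POVM), then bound using the definition of $P_{\text{MC}}$. The only cosmetic difference is that the paper first bounds $P_{xy}$ by the maximum posterior $\max_j P(\rho_j\mid g_j,e_y,\sigma_x)=q_j\Tr[Q_{jy}^x\rho_j]/\Tr[Q_{jy}^x\rho]$ before invoking $P_{\text{MC}}$ once, whereas you apply $q_i\Tr[Q_i\rho_i]\leq P^\ast\Tr[Q_i\rho]$ termwise in the success sum and then divide; the content is identical.
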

\begin{proof}
Consider an arbitrary strategy by Alice. Alice introduces an ancilla of arbitrary finite dimensional Hilbert space $\mathcal{H}_C$ in a pure state $\lvert \xi\rangle$ and applies a projective measurement $\{\pi_{j,y}\}_{j\in S,y\in Y}$ on the whole Hilbert space $\mathcal{H}=\mathcal{H}_A\otimes \mathcal{H}_B\otimes \mathcal{H}_C$. We define
\begin{equation}
    \label{D2}
    \phi_x=\sigma_x \otimes \lvert \xi\rangle\langle\xi \rvert \, , 
\end{equation}
   and
   \begin{equation}
    Q_{i,y}^x= Tr_{BC} [\pi_{i,y}(\mathds{1}_A \otimes \phi_x)] \, , 
\end{equation}
where $Tr_{BC}$ denotes tracing over the Hilbert space $\mathcal{H}_B\otimes \mathcal{H}_C$. We note that  $Q_{i,y}^x\in \mathcal{P}(\mathcal{H}_A)$ for all $i\in S$, $x\in X$ and $y\in Y$. % and that
We have that
\begin{equation}
     Tr[\pi_{i,y}(\omega\otimes \phi_x)]=Tr[Q_{i,y}^x \omega] \, , 
\end{equation}
for any $\omega\in \mathcal{L}(\mathcal{H}_A)$.

Let $P(\rho_i, \sigma_x, g_j, e_y)$ denote the probability distribution for the state received by Alice being $\rho_i\otimes \sigma_x$ and her measurement outcomes being $j$ and $y$, for all $i,j\in S$, $x\in X$ and $y\in Y$. We note that $P(\rho_j\vert \sigma_x)=q_j$ since $P(\rho_j, \sigma_x)=q_jp_x$ for all $j\in S$ and $x\in X$. Using Bayes' theorem, the probability $P_{xy}$ that Alice discriminates the quantum state successfully from the ensemble $\{q_i,\rho_i\}_{i\in S}$, conditioned on the extra state being $\sigma_x$ and her extra outcome being $y$, satisfies
\begin{eqnarray}
    \label{D1}
    %P_{xy}=\frac{\sum_{i\in S}q_i Tr[\pi_{i,y}(\rho_i\otimes \phi_x)]}{\sum_{j,l\in S} q_jTr[\pi_{l,y}(\rho_j\otimes \phi_x)]} \, , 
    P_{xy}&=&\sum_{i\in S}P(\rho_i, g_i\vert e_y,\sigma_x)\nonumber\\
    &=&\sum_{i\in S}\frac{P(\rho_i\vert g_i,e_y, \sigma_x)P(g_i,e_y\vert \sigma_x)}{P(e_y\vert \sigma_x)}\nonumber\\
    &\leq&\max_{j\in S}\{P(\rho_j\vert g_j,e_y, \sigma_x)\}\sum_{i\in S}\frac{P(g_i,e_y\vert \sigma_x)}{P(e_y\vert \sigma_x)}\nonumber\\
    &=&\max_{j\in S}P(\rho_j\vert g_j,e_y, \sigma_x)\nonumber\\
 &=&\max_{j\in S}\frac{P(g_j,e_y\vert \rho_j, \sigma_x)P(\rho_j\vert \sigma_x)}{P(g_j,e_y\vert \sigma_x)}\nonumber\\
        &=&\max_{j\in S}\frac{q_jTr[Q_{jy}^x\rho_j]}{Tr[Q_{jy}^x\rho]}\nonumber\\
        &\leq&\max_{j\in S, Q\in \mathcal{P}(\mathcal{H}_A)}\frac{q_jTr[Q\rho_j]}{Tr[Q\rho]}\nonumber\\
        &=&\max_{j\in S}P_{\text{MC}}(\rho_j)\, .
\end{eqnarray}
\end{proof}

\begin{definition}
    \label{definition1}
    We define the following task. Alice receives a finite dimensional quantum state $\rho_{\textbf{i}}=\otimes _{k=1}^N \rho_{i_k}^k$ encoding an input $\textbf{i}=(i_1,\ldots,i_N)$ with probability $P_\text{in}(\textbf{i})=\prod_{k=1}^N P_\text{in}^k(i_k)$, where $i_k\in I_k$ for some finite non-empty set $I_k$, and where $\{P_\text{in}^k(i_k)\}_{i_k\in I_k}$ is a probability distribution, for all $k\in[N]$, and for some $N\geq 1$. Alice obtains an output $\textbf{o}=(o_1,\ldots,o_N)$, where $o_k\in O_k$ for some finite non-empty set $O_k$, for all $k\in[N]$. Alice wins the $k$th round with probability given by a function $F_\text{win}^k(i_k,o_k)\in [0,1]$, for $i_k\in I_k$ and $o_k\in O_k$, for all $k\in[N]$. Alice wins the task if she loses (i.e., does not win) in no more than $n$ rounds, for some $n\in\{0,1,2,\ldots,N\}$.
\end{definition}

\begin{observation}
     \label{observation1}
To have a nontrivial task, we must have $F_\text{win}^k(i_k,o_k)>0$ for at least one $(i_k,o_k)\in I_k\times O_k$ and for at least one $k\in[N]$, that is, Alice must be able to win the task with non-zero probability. Additionally, we note that Alice trivially wins the task with unit probability if $n=N$, as in this case she is allowed to lose in all rounds; we allow this trivial case in the definition above as this simplifies the statements for the results below and their proofs.
\end{observation}
  
\begin{observation}
        \label{observation2}
The task in which Alice must win a standard quantum state discrimination game in each round $k$ and make no more than $n$ errors out of the $N$ rounds corresponds to the case $O_k=I_k$ and $F_\text{win}^k(i_k,o_k)=\delta_{i_k,o_k}$, for all $k\in[N]$. In the unforgeability proof below for our quantum token schemes we consider a variation of this task in which $O_k=I_k=\{1,2,3,4\}$ and $F_\text{win}^k(i_k,o_k)=1$ if $o_k=\{i_k,i_{k}-1\}$ (where we use the notation $i\pm 4 = i$, for all $i=1,2,3,4$) and $F_\text{win}^k(i_k,o_k)=0$ otherwise, for all $k\in[N]$. We allow the task to be more general in definition~\ref{definition1}, as  lemma~\ref{lemma2} has broader applications in quantum cryptography and quantum information theory. 
\end{observation}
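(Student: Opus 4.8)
The final statement is an \emph{observation} (a comment) that identifies two concrete tasks as special cases of the general framework of Definition~\ref{definition1}; accordingly, the plan is to verify each correspondence by matching the data $(I_k,O_k,F_\text{win}^k,n)$, not to establish a nontrivial inequality. First I would treat the standard quantum state discrimination game: set $O_k=I_k$ so that Alice's round-$k$ output is a guess for the input label $i_k$, and substitute $F_\text{win}^k(i_k,o_k)=\delta_{i_k,o_k}$. The round-$k$ winning probability is then $1$ exactly when $o_k=i_k$ (a correct guess) and $0$ otherwise, so ``losing round $k$'' in the sense of Definition~\ref{definition1} coincides with ``making an error in round $k$''. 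The global winning condition of Definition~\ref{definition1} -- losing in no more than $n$ rounds -- therefore reads precisely as ``no more than $n$ errors among the $N$ rounds'', recovering the intended game by direct substitution.

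Second, for the variation used in the unforgeability proof I would take $O_k=I_k=\{1,2,3,4\}$ and interpret the output $o_k=j$ as the guess that the received state is the merged state $\chi_j^k$ of Definition~\ref{Pbound}, recalling from (\ref{BB1}) that $\chi_j^k$ is the normalized mixture of $\rho_j^k$ and $\rho_{j+1}^k$ with cyclic indices under the convention $i\pm 4=i$. The guess $\chi_j^k$ is \emph{compatible} with the true input $\rho_{i_k}^k$ precisely when $i_k\in\{j,j+1\}$, equivalently $j\in\{i_k,i_k-1\}$, i.e. $o_k\in\{i_k,i_k-1\}$. Setting $F_\text{win}^k(i_k,o_k)=1$ on exactly this set and $0$ elsewhere encodes this compatibility criterion, so this task instantiates Definition~\ref{definition1} with the stated data. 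I would check the wraparound explicitly at the boundary ($i_k=1$ gives $i_k-1=4$, and $j=4$ merges $\rho_4^k$ with $\rho_1^k$) to confirm the cyclic labelling is consistent throughout.

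The only place an error could enter -- and hence the main thing to get right -- is the modular bookkeeping: I must ensure the cyclic labelling of the four states, the merged ensemble $\{\chi_j^k\}_{j\in[4]}$, and the winning set $\{i_k,i_k-1\}$ all use the same convention $i\pm 4=i$, so that each $\chi_j^k$ genuinely pools exactly the two inputs declared compatible with output $j$. Once this alignment is fixed both correspondences follow immediately by substitution. Finally, the closing remark that Definition~\ref{definition1} is deliberately more general is justified by observing that Lemma~\ref{lemma2} is stated for an arbitrary $F_\text{win}^k\in[0,1]$, and so applies well beyond these two binary-valued instances -- which is the broader applicability the observation advertises.
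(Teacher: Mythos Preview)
Your proposal is correct and matches the paper's treatment: the observation is stated as a self-evident remark without a separate proof, and your verification by direct substitution of $(I_k,O_k,F_\text{win}^k)$ is exactly the reasoning implicit in the paper's later unforgeability proof, where the sets $S_j^k=\{\rho_j^k,\rho_{j+1}^k\}$ play the role you assign to $\chi_j^k$ and the winning condition $g^k\in\{i^k,i^k-1\}$ is derived in the same way. Your explicit check of the cyclic wraparound is a sensible addition but not something the paper spells out.
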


\begin{lemma}
\label{lemma2}
Consider the task of definition~\ref{definition1}. Suppose that Alice's winning probability $P^k_{\text{win}}\bigl(\tilde{\textbf{i}}_k,\tilde{\textbf{o}}_k,o_\text{extra}\bigr)$ in the round $k$ conditioned on an input $\tilde{\textbf{i}}_k=(i_1,\ldots,i_{k-1},i_{k+1},i_{k+2},\ldots,i_N)$ and an output $\tilde{\textbf{o}}_k=(o_1,\ldots,o_{k-1},o_{k+1},o_{k+2},\ldots,o_N)$ \adr{(i.e., conditioned on inputs and outputs for all rounds apart from $k$)}, and on an extra output $o_\text{extra}$, satisfies
\begin{equation}
    \label{B0}P^k_{\text{win}}\bigl(\tilde{\textbf{i}}_k,\tilde{\textbf{o}}_k,o_\text{extra}\bigr)\leq P_\text{bound}^k\, ,
\end{equation}
for all $i_j\in I_j$, $o_j\in O_j$, $j\in[N]\setminus\{k\}$ and $k\in[N]$. Then, Alice's probability $P_{\text{win}}(n,N\vert o_\text{extra})$ to win the task conditioned on any extra output $o_\text{extra}$ satisfies
\begin{equation}
\label{B1}
   P_{\text{win}}(n,N\vert o_\text{extra}) \leq P_{\text{bound}}^{\text{coins}}(n,N)\leq \sum_{l=0}^{n}\left(\begin{smallmatrix}N\\ l\end{smallmatrix}\right)(1-P_{\text{bound}})^{l}(P_{\text{bound}})^{N-l} \, , 
\end{equation}
where $P_{\text{bound}}^{\text{coins}}(n,N)$ is the probability of having no more than $n$ errors in $N$ independent coin tosses with success probabilities $P_\text{bound}^1,\ldots, P_\text{bound}^N$, %with 
%\begin{equation}
%\label{B2}
% P_\text{MC}^k=   \max_{z_k\in S_k}P_{\text{MC}}(\rho_{z_k}^k),
%\end{equation}
and where 
\begin{equation}
\label{B2}
P_{\text{bound}}^k\leq P_{\text{bound}} \, , 
\end{equation}
for all $k\in [N]$, $n\in\{0,1,\ldots,N\}$ and $N\geq 1$.
\end{lemma}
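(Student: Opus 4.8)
The plan is to reduce the claim to a standard stochastic-domination comparison between Alice's (highly correlated) per-round win events and a family of independent coin tosses. I would model the task on a single probability space carrying the inputs $\textbf{i}$, Alice's outputs $\textbf{o}$, the extra output $o_\text{extra}$, and independent auxiliary ``win-coins'' $c_1,\dots,c_N$, so that the indicator $W_k$ that Alice wins round $k$ equals $1$ precisely when $c_k\le F_\text{win}^k(i_k,o_k)$. Then $W_k$ is a function of $(i_k,o_k,c_k)$ only, with $P(W_k=1\vert i_k,o_k)=F_\text{win}^k(i_k,o_k)$, and winning the task is the event $\{\sum_k(1-W_k)\le n\}=\{\sum_k W_k\ge N-n\}$. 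Everything below is understood conditioned on a fixed value of $o_\text{extra}$.

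First I would convert the two-sided hypothesis (\ref{B0}), which controls the win probability of round $k$ given the inputs and outputs of \emph{all} other rounds, into a one-sided (prefix) bound suitable for a martingale argument. Introduce the filtration $\mathcal{F}_k$ generated by $\{(i_j,o_j,c_j)\}_{j\le k}$ and $o_\text{extra}$, so that $W_k$ is $\mathcal{F}_k$-measurable. Since the win-coins are independent external randomness, conditioning on the earlier coins $\{c_j\}_{j<k}$ does not alter the law of $(i_k,o_k,c_k)$, whence $P(W_k=1\vert\mathcal{F}_{k-1})=P(W_k=1\vert\{(i_j,o_j)\}_{j<k},o_\text{extra})$. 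Expressing this as an average over the future inputs and outputs of the fully-conditioned quantity $P^k_{\text{win}}(\tilde{\textbf{i}}_k,\tilde{\textbf{o}}_k,o_\text{extra})$ and applying (\ref{B0}) termwise yields the key estimate $P(W_k=1\vert\mathcal{F}_{k-1})\le P_\text{bound}^k$ almost surely, for every $k$.

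With the prefix bound in hand, the process $(W_k)$ has conditional success probability dominated, round by round, by that of independent Bernoulli variables $\hat W_k$ with $P(\hat W_k=1)=P_\text{bound}^k$. I would then invoke the standard stochastic-domination principle (provable by a short backward induction that, working from round $N$ downwards, replaces the conditional law of each $W_k$ by that of $\hat W_k$, using that $w\mapsto\mathds{1}[\sum_k w_k\ge N-n]$ is coordinatewise increasing). This gives $P(\sum_k W_k\ge N-n\vert o_\text{extra})\le P(\sum_k\hat W_k\ge N-n)=P_\text{bound}^\text{coins}(n,N)$, which is the first inequality of (\ref{B1}). The final inequality follows by the same monotonicity: raising each success probability from $P_\text{bound}^k$ to the common upper bound $P_\text{bound}$ can only decrease the number of losses stochastically, and with a uniform success probability the loss count is $\mathrm{Binomial}(N,1-P_\text{bound})$, giving $\sum_{l=0}^{n}\binom{N}{l}(1-P_\text{bound})^{l}(P_\text{bound})^{N-l}$.

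The main obstacle, and the step needing genuine care, is the passage from the hypothesis to the prefix bound. The hypothesis conditions on \emph{all} other rounds, future as well as past, and a naive attempt to peel off one round at a time fails, because that round's win probability $F_\text{win}^N(i_N,o_N)$ is correlated with the conditional tail probabilities of the remaining rounds and so does not factor out of the averaging. The resolution is exactly to average the fully-conditioned bound over the future coordinates to obtain a clean $\mathcal{F}_{k-1}$-adapted estimate, after which the correlation disappears and the generic domination machinery applies. One must also verify throughout that the argument is carried out for each fixed $o_\text{extra}$, and that the win-coins are genuinely independent of the inputs and outputs, so that the conditional-independence steps used in both reductions are valid.
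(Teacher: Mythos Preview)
Your proposal is correct and takes a genuinely different route from the paper's proof.

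The paper proceeds by explicit case analysis and induction on $n$: it first handles $n=0$ by writing $P_{\text{win}}(0,N\vert o_\text{extra})$ as a telescoping product of conditional win probabilities $P(\text{w}_k\vert \text{w}_1,\ldots,\text{w}_{k-1},o_\text{extra})$ and bounding each factor; then it treats $n=1$ by hand; and finally it passes from $n=m-1$ to $n=m$ via the decomposition
\[
P_{\text{win}}(m,N\vert o_\text{extra})=\Pr[\text{$m$ errors in first $N-1$ rounds}\vert o_\text{extra}]\cdot P(\text{w}_N\vert\cdots)+P_{\text{win}}(m-1,N-1\vert o_\text{extra}),
\]
together with the observation that the two displayed events are disjoint, so their probabilities sum to at most $1$.

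Your approach is the standard probabilistic one: introduce auxiliary win-coins to realize the per-round wins as $\{0,1\}$-valued random variables, use the tower property to collapse the two-sided hypothesis into the adapted (prefix) bound $P(W_k=1\vert\mathcal{F}_{k-1})\le P_\text{bound}^k$, and then invoke the generic stochastic-domination comparison with independent Bernoulli variables. This is cleaner and more conceptual; it isolates the one genuinely delicate step (passing from conditioning on all other rounds to conditioning only on the past) and then reduces everything else to a standard monotone-coupling argument. The paper's proof, by contrast, is more elementary and avoids any auxiliary randomness or appeal to stochastic domination, at the cost of a somewhat ad hoc recursion. Both arguments ultimately exploit the same mechanism---that a uniform bound on conditional success probabilities, valid for every configuration of the other rounds, survives averaging---but yours packages it once and for all in the prefix bound, whereas the paper re-derives the needed consequences at each inductive step.
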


%change z by i, change S_k by I_k, change y by o_extra

\begin{proof}
The second inequality in (\ref{B1}) follows straightforwardly by noting that $P_{\text{bound}}^{\text{coins}}(n,N)$ is maximized by maximizing each of the success probabilities $P_\text{bound}^1,\ldots, P_\text{bound}^N$ in the independent coin tosses, and by using (\ref{B2}). 

Below we show the first inequality in (\ref{B1}). Let $P(r_1,r_2,\ldots,r_N\vert o_\text{extra})$ denote Alice's probability distribution of her results, conditioned on her extra output $o_\text{extra}$, where $r_k\in\{\text{w}_k,\text{l}_k\}$ with $o_k=\text{w}_k$ denoting that Alice wins (succeeds) in the $k$th round and with $o_k=\text{l}_k$ denoting that Alice loses (makes an error) in the $k$th round, for all $k\in[N]$. 

We show the case $n=0$ and $N\geq 1$. We have
   \begin{equation}
   \label{B3}
     P_{\text{win}}(0,N\vert o_\text{extra})=P(\text{w}_1\vert o_\text{extra})P(\text{w}_2\vert\text{w}_1, o_\text{extra})\cdots P(\text{w}_N\vert\text{w}_1,\ldots,\text{w}_{N-1},o_\text{extra}).
\end{equation}
From (\ref{B0}), Alice's probability to win the $k$th round conditioned on any inputs and outputs for other rounds and on any extra output $o_\text{extra}$ is upper bounded by $P_\text{bound}^k$, for all $k\in [N]$. Thus, we have
\begin{equation}
   \label{B4}
     P_{\text{win}}(0,N\vert o_\text{extra})\leq \prod_{k=1}^N P_\text{bound}^k=P_{\text{bound}}^{\text{coins}}(0,N)\, ,
\end{equation}
as claimed.

As mentioned in observation~\ref{observation1}, the case $n=N$ for $N\geq 1$ is trivial, as in this case Alice is allowed any number of errors, achieving $P_{\text{win}}(N,N\vert o_\text{extra})=1$, which is consistent with (\ref{B1}), as $P_{\text{bound}}^{\text{coins}}(N,N)=1$.

%We show the case $n=1$ and $N=2$. We have
%\begin{eqnarray}
 %   \label{B5}
  %  P_{\text{guess}}(1,2)&=&P(\text{s}_1)+P(\text{e}_1)P(\text{s}_2\vert \text{e}_1)\nonumber\\
   % &=&P(\text{s}_1)[1-P(\text{s}_2\vert \text{e}_1)]+ P(\text{s}_2\vert \text{e}_1).
%\end{eqnarray}
%From lemma~\ref{lemma1}, $P(\text{s}_2\vert \text{e}_1)\leq P_\text{MC}^2$.  Since $1-P(\text{s}_2\vert \text{e}_1)\geq 0$, we maximize   $P_{\text{guess}}(1,2)$ by maximizing $P(\text{s}_1)$. Thus, from lemma~\ref{lemma1} again, we have
%\begin{equation}
 %   \label{B6}
  %  P_{\text{guess}}(1,2)\leq P_\text{MC}^1+\bigl(1-P_\text{MC}^2\bigr)P_\text{MC}^2=P_{\text{ICT}}^{\text{MC}}(1,2),
%\end{equation}
%as claimed.

We show the case $n=1$ and $N\geq 2$. We have
\begin{eqnarray}
    \label{B7}
    P_{\text{win}}(1,N\vert o_\text{extra})&=&\text{Pr}[\text{1~error~in~first~$N-1$~rounds}\vert o_\text{extra}]\times\nonumber\\
    &&\quad  P(\text{w}_{N}\vert \text{1~error~in~first~$N-1$~rounds},o_\text{extra})\nonumber\\
    &&\qquad+P_{\text{win}}(0,N-1\vert o_\text{extra})\nonumber\\
    &\leq& P_{\text{win}}(0,N-1\vert o_\text{extra})\times\nonumber\\
    &&\quad [1-P(\text{w}_{N}\vert \text{1~error~in~first~$N-1$~rounds},o_\text{extra})]\nonumber\\
    &&\qquad+ P(\text{w}_{N}\vert \text{1~error~in~first~$N-1$~rounds},o_\text{extra})\, , 
\end{eqnarray}
where in the second line we used
\begin{equation}
\label{B7.1}
    \text{Pr}[\text{1~error~in~first~$N-1$~rounds}\vert o_\text{extra}]+P_{\text{win}}(0,N-1\vert o_\text{extra})\leq 1\, .
\end{equation}
Since
\begin{equation}
\label{B7.2}
    1-P(\text{w}_{N}\vert \text{1~error~in~first~$N-1$~rounds},o_\text{extra})\geq 0\, ,
\end{equation}
we see from the second line of (\ref{B7}), that we maximize $P_{\text{win}}(1,N\vert o_\text{extra})$ by maximizing $P_{\text{win}}(0,N-1\vert o_\text{extra})$. From (\ref{B4}), this is upper bounded by the probability of having no errors in $N-1$ independent coin tosses with success probabilities $P_\text{bound}^1,\ldots, P_\text{bound}^{N-1}$. Furthermore, from~(\ref{B0}), we have 
\begin{equation}
\label{B7.3}
P(\text{w}_{N}\vert \text{1~error~in~first~$N-1$~rounds},o_\text{extra})\leq P_\text{bound}^N\, .
\end{equation}
Thus, we see from the first line of (\ref{B7}) and from (\ref{B7.3}) that
\begin{eqnarray}
    \label{B8}
    P_{\text{win}}(1,N\vert o_\text{extra})&\leq& P_{\text{bound}}^{\text{coins}}(0,N-1)\nonumber\\
    &&\quad+P_{\text{bound}}^{\text{coins}}[\text{1~error~in~first~$N-1$~rounds}]P_\text{bound}^N\nonumber\\
    &=&P_{\text{bound}}^{\text{coins}}(1,N)\, ,
\end{eqnarray}
as claimed.

To complete the proof we show the case $n=m\geq 2$ and $N\geq 3$ by induction. We suppose it holds for $n=m-1$ and show that it holds for $n=m$, for any $N\geq 3$ and any $2\leq m\leq N$. We have
\begin{eqnarray}
    \label{B9}
    P_{\text{win}}(m,N\vert o_\text{extra})&=&\text{Pr}[\text{$m$~errors~in~first~$N-1$~rounds}\vert o_\text{extra}]\times\nonumber\\
    &&\quad \times P(\text{w}_{N}\vert \text{m~errors~in~first~$N-1$~rounds},o_\text{extra})\nonumber\\
    &&\qquad+ P_{\text{win}}(m-1,N-1\vert o_\text{extra})\nonumber\\
    &\leq& P_{\text{win}}(m-1,N-1\vert o_\text{extra})\times\nonumber\\
    &&\quad \!\!\!\times [1-P(\text{w}_{N}\vert \text{$m$~errors~in~first~$N-1$~rounds},o_\text{extra})]\nonumber\\
    &&\quad\!+ P(\text{w}_{N}\vert \text{$m$~errors~in~first~$N-1$~rounds},o_\text{extra})\, ,
\end{eqnarray}
where in the second line we used
\begin{equation}
\label{B9.1}
P_{\text{win}}(m-1,N-1\vert o_\text{extra})+\text{Pr}[\text{$m$~errors~in~first~$N-1$~rounds}\vert o_\text{extra}]\leq 1\, .
    \end{equation}
Since
\begin{equation}
\label{B9.2}
    [1-P(\text{w}_{N}\vert \text{$m$~errors~in~first~$N-1$~rounds},o_\text{extra})]\geq 0\, ,
\end{equation}
we see from the second line of (\ref{B9}), that we maximize $P_{\text{win}}(m,N\vert o_\text{extra})$ by
maximizing $P_{\text{win}}(m-1,N-1\vert o_\text{extra})$. By assumption, this is upper bounded by the probability of having no more than $m-1$ errors in $N-1$ independent coin tosses with success probabilities $P_\text{bound}^1,\ldots, P_\text{bound}^{N-1}$. Furthermore, from~(\ref{B0}), we have 
\begin{equation}
\label{B9.3}
    P(\text{w}_{N}\vert \text{$m$~errors~in~first~$N-1$~rounds},o_\text{extra})\leq P_\text{bound}^N\, .
\end{equation}
Thus, we see from the first line of (\ref{B9}) and from (\ref{B9.3}) that
\begin{eqnarray}
    \label{B10}
    P_{\text{win}}(m,N\vert o_\text{extra})&=&P_{\text{bound}}^{\text{coins}}[\text{$m$~errors~in~first~$N-1$~rounds}]P_\text{bound}^{N}\nonumber\\
    &&\quad+ P_{\text{bound}}^{\text{coins}}(m-1,N-1)\nonumber\\
    &=&P_{\text{bound}}^{\text{coins}}(m,N)\, ,
\end{eqnarray}
as claimed.
\end{proof}

\begin{proposition}\label{proposition3}
Let $X_1,X_2,\ldots,X_N$ be identical independent random variables taking values $X_k\in \{0,1\}$, for all $k\in[N]$. Let $X=\sum_{k=1}^N {X}_k$, and let the expectation value of $X_k$ be $E(X_k)=p$ for all $k\in[N]$. 
Then we have the Chernoff bound 
\begin{equation}
\label{firstChernoffbound}
	\text{Pr}[X\leq (1-\epsilon)Np]\leq \left(\frac{1}{1-\epsilon}\right)^{Np(1-\epsilon)}\left(\frac{1-p}{1-p(1-\epsilon)}\right)^{N\bigl(1-p(1-\epsilon)\bigr)} \, , 
\end{equation}
for $0< \epsilon<1$ and $0<p\leq1$.  
%\dam{where we can replace `$\leq$' by `$<$' inside the brackets.} 
We also have the Chernoff bound 
\begin{equation}
\label{secondChernoffbound}
	\text{Pr}[X\geq (1+\epsilon)Np]\leq \left(\frac{1}{1+\epsilon}\right)^{Np(1+\epsilon)}\left(\frac{1-p}{1-p(1+\epsilon)}\right)^{N\bigl(1-p(1+\epsilon)\bigr)} \, , 
\end{equation}
for $\epsilon>0$ and $0<p(1+\epsilon)<1$. 
%\dam{where we can replace `$\geq$' by `$>$' inside the brackets.}
\end{proposition}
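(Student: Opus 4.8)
The plan is to use the standard exponential (Chernoff) method: for each tail I would apply Markov's inequality to an exponentiated version of $X$, exploit independence to factorize the moment generating function, and then optimize over the free exponential parameter $t$. Since the two displayed bounds are exactly the tight Kullback--Leibler forms, the optimization must be carried out exactly rather than with a relaxation.

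For the upper tail I would start from the fact that for any $t>0$, $\text{Pr}[X\geq (1+\epsilon)Np]=\text{Pr}[e^{tX}\geq e^{t(1+\epsilon)Np}]\leq e^{-t(1+\epsilon)Np}\,E[e^{tX}]$ by Markov's inequality. Because the $X_k$ are independent and identically distributed Bernoulli$(p)$ variables, the moment generating function factorizes as $E[e^{tX}]=(1-p+pe^{t})^N$. Setting the derivative of $g(t)=-t(1+\epsilon)Np+N\ln(1-p+pe^{t})$ to zero gives the condition $\frac{pe^{t}}{1-p+pe^{t}}=(1+\epsilon)p$, whose solution is $e^{t^*}=\frac{(1+\epsilon)(1-p)}{1-(1+\epsilon)p}$. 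The hypothesis $p(1+\epsilon)<1$ is precisely what makes the numerator and denominator positive with $e^{t^*}>1$, i.e. $t^*>0$, so the optimizer is admissible. Substituting back one finds the clean identity $1-p+pe^{t^*}=\frac{1-p}{1-(1+\epsilon)p}$; collecting the two factors $e^{-t^*(1+\epsilon)Np}$ and $(1-p+pe^{t^*})^N$ and simplifying then reduces to $\left(\frac{1}{1+\epsilon}\right)^{Np(1+\epsilon)}\left(\frac{1-p}{1-p(1+\epsilon)}\right)^{N(1-p(1+\epsilon))}$, which is the claimed bound (\ref{secondChernoffbound}).

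For the lower tail the argument is symmetric: for $t>0$ I would write $\text{Pr}[X\leq (1-\epsilon)Np]=\text{Pr}[e^{-tX}\geq e^{-t(1-\epsilon)Np}]\leq e^{t(1-\epsilon)Np}(1-p+pe^{-t})^N$, minimize over $t$, and obtain the optimizer $e^{-t^*}=\frac{(1-\epsilon)(1-p)}{1-(1-\epsilon)p}$, which lies in $(0,1)$ (so $t^*>0$) exactly when $\epsilon>0$. The identical simplification then yields bound (\ref{firstChernoffbound}).

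The computations are routine; the only points requiring genuine care are (i) confirming that the stated parameter ranges place the optimizing $t^*$ strictly in the admissible region $t>0$, which is what keeps the Markov step valid and is exactly what the side conditions $0<\epsilon<1,\ 0<p\leq 1$ and $\epsilon>0,\ 0<p(1+\epsilon)<1$ encode, and (ii) executing the algebraic simplification that converts the optimized exponential expression into the displayed product form via the identity above. I would deliberately avoid the looser relaxation $1-p+pe^{t}\leq e^{p(e^{t}-1)}$, since it produces only the weaker Poisson-type Chernoff bound and not the tight form asserted here.
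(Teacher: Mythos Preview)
Your proposal is correct and follows essentially the same route as the paper: Markov's inequality applied to the exponentiated sum, exact factorization of the Bernoulli moment generating function, and exact optimization over $t$ to land on the tight KL-divergence form. The only cosmetic differences are that the paper handles the lower tail by passing to the complementary variables $Y_k=1-X_k$ rather than using $e^{-tX}$ directly, and it treats the boundary case $p=1$ in (\ref{firstChernoffbound}) separately (both sides vanish); your $e^{-t^*}=\frac{(1-\epsilon)(1-p)}{1-(1-\epsilon)p}$ degenerates to $0$ there, so you would want to note that edge case explicitly.
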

\begin{proof}
We denote the expectation value of a random variable $x$ by $E(x)$. We begin by proving 
(\ref{secondChernoffbound}). Let $\epsilon>0$ and $0<p(1+\epsilon)<1$, which implies $p>0$. For $X_k\in\{0,1\}$ and $E(X_k)=p$, we get that $\text{Pr}[X_k=1]=p$. By Markov’s inequality, as $p, \epsilon > 0$, and for $t > 0$, we have
\begin{eqnarray}
	\text{Pr}[X\geq (1+\epsilon)Np]&=&\text{Pr}\left[e^{tX}\geq e^{t(1+\epsilon)Np}\right]\nonumber\\
 &\leq&\frac{E[e^{tX}]}{e^{t(1+\epsilon)Np}}\nonumber\\
 &=&\frac{ \sum_{l=0}^N \left(\begin{smallmatrix}l\\N\end{smallmatrix}\right)p^l(1-p)^{N-l}e^{tl}}{e^{t(1+\epsilon)Np}}\nonumber\\
  &=&\frac{ \sum_{l=0}^N \left(\begin{smallmatrix}l\\N\end{smallmatrix}\right)(pe^t)^l(1-p)^{N-l}}{e^{t(1+\epsilon)Np}}\nonumber\\
 &=&\frac{(1-p+pe^t)^N}{e^{t(1+\epsilon)Np}} \,, 
\end{eqnarray}
 where in the last line we used the binomial theorem. By taking the infimum over $t>0$, we have
\begin{equation}\label{bound1}
	\text{Pr}[X\geq (1+\epsilon)Np]\leq\inf_{t>0}\frac{(1-p+pe^t)^N}{e^{t(1+\epsilon)Np}} \, .
\end{equation}
Next, we define
\begin{equation}
	f(t) = \frac{1-p+pe^t}{e^{t(1+\epsilon)p}} \, .
\end{equation}
Then, we have
\begin{equation}
	\frac{df}{dt} = pe^{-t(1+\epsilon)p}\bigl[e^t\bigl(1-p(1+\epsilon)\bigr)-(1+\epsilon)(1-p)\bigr]
 \end{equation}
 and
\begin{equation}
	\frac{d^2f}{dt^2} = pe^t [1-p(1+\epsilon)]^2e^{-t(1+\epsilon)p}>0\,,
  %p^2(1-p)(1+\epsilon)^2e^{-t(1+\epsilon)p}+p(1-p(1+\epsilon))^2e^{t(1-p(1+\epsilon))}>0,
\end{equation}
as $p(1+\epsilon)<1$ and $p>0$. Thus, $f$ is strictly convex and attains its infimum when \dami{$\frac{df}{dt}=0$}. This occurs when
\begin{equation}
	t=\ln\left(\frac{(1-p)(1+\epsilon)}{1-p(1+\epsilon)}\right)\,,
\end{equation}
which satisfies $t>0$, for $\epsilon>0$ and $0<p(1+\epsilon)<1$. Substituting this back into (\ref{bound1}) gives
\begin{equation}
	\text{Pr}[X\geq (1+\epsilon)Np]\leq\left(\frac{1}{1+\epsilon}\right)^{Np(1+\epsilon)}\left(\frac{1-p}{1-p(1+\epsilon)}\right)^{N\bigl(1-p(1+\epsilon)\bigr)} \,.
\end{equation}

\begin{comment}
    
\dam{We can replace `$\geq$' by `$>$' inside the brackets in (\ref{secondChernoffbound}) by noting that in general we have
\begin{equation}
    \text{Pr}[X>Y]\leq \text{Pr}[X>Y]+\text{Pr}[X=Y]=\text{Pr}[X\geq Y]
\end{equation}
for arbitrary random variables $X$ and $Y$.}

\end{comment}

Now we show (\ref{firstChernoffbound}). Let $0< \epsilon<1$. We first note that the case $p=1$ is trivially satisfied, as both sides of (\ref{firstChernoffbound}) are zero in this case; $\text{Pr}[X\leq (1-\epsilon)Np]= \text{Pr}[X\leq (1-\epsilon)N]=0$ because $X=N$ and $(1-\epsilon)N<N$; the right hand side of (\ref{firstChernoffbound}) is also zero because $1-p=0$. Now we consider the case $0<p<1$. We define $Y_k=1-X_k$ and $Y=\sum_{k=1}^N {Y}_k=N-X$. We note that $Y_k\in\{0,1\}$, $Y_k$ are identical independent random variables, and $E(Y_k)=1-p$. By Markov's inequality, as $0<\epsilon<1$ and $0<p<1$, for $t>0$, we have
\begin{eqnarray}
	\text{Pr}[Y\geq (1-p(1-\epsilon))N]&=&\text{Pr}\left[e^{tY}\geq e^{t(1-p(1-\epsilon))N}\right]\nonumber\\
  &\leq&\frac{E[e^{tY}]}{e^{t(1-p(1-\epsilon))N}}\nonumber\\
  &=&\frac{\sum_{l=0}^N \left(\begin{smallmatrix}l\\N\end{smallmatrix}\right) p^l(1-p)^{N-l}e^{t(N-l)}}{e^{t(1-p(1-\epsilon))N}}\nonumber\\
 &=& \frac{\sum_{l=0}^N \left(\begin{smallmatrix}l\\N\end{smallmatrix}\right) p^l[(1-p)e^t]^{N-l}}{e^{t(1-p(1-\epsilon))N}}\nonumber\\
  &=&\frac{(p+(1-p)e^t)^N}{e^{t(1-p(1-\epsilon))N}} \, , 
\end{eqnarray}
where in the last line we used the binomial theorem. By taking the infimum over $t>0$, we obtain
\begin{equation}\label{bound2}
	\text{Pr}[Y\geq (1-p(1-\epsilon))N]\leq\inf_{t>0}\frac{(p+(1-p)e^t)^N}{e^{t(1-p(1-\epsilon))N}} \, .
\end{equation}
Next, we define
\begin{equation}
	g(t) = \frac{p+(1-p)e^t}{e^{t(1-p(1-\epsilon))}} \, .
\end{equation}
Then, we have
\begin{equation}
    \frac{dg}{dt}=pe^{-t(1-p(1-\epsilon))}\bigl[e^t(1-p)(1-\epsilon)-\bigl(1-p(1-\epsilon)\bigr)\bigr]
\end{equation}
and
\begin{equation}
	\frac{d^2g}{dt^2} = pe^{-t(1-p(1-\epsilon))}\Bigl[e^tp(1-p)(1-\epsilon)^2 +\bigl(1-p(1-\epsilon)\bigr)^2 \Bigr]>0,
\end{equation}
since $0<p<1$, $0<\epsilon<1$ and $t>0$. Thus, $g$ is strictly convex and attains its infimum when \dami{$\frac{dg}{dt}=0$}. This occurs when
\begin{equation}
	t=\ln\left(\frac{1-p(1-\epsilon)}{(1-p)(1-\epsilon)}\right)\,,
\end{equation}
which satisfies $t>0$, for $0<\epsilon<1$ and $0<p<1$. Substituting this back into (\ref{bound2}), and using $0<\epsilon<1$ and $0<p<1$, gives
\begin{equation}
	\text{Pr}[X\leq (1-\epsilon)Np]=\text{Pr}[Y\geq (1-p(1-\epsilon))N]\leq\left(\frac{1}{1-\epsilon}\right)^{Np(1-\epsilon)}\left(\frac{1-p}{1-p(1-\epsilon)}\right)^{N(1-p(1-\epsilon))} \, .
\end{equation}

\begin{comment}

\dam{We can replace `$\leq$' by `$<$' inside the brackets in (\ref{firstChernoffbound}) by noting that in general we have
\begin{equation}
\text{Pr}[X<Y]\leq \text{Pr}[X<Y]+\text{Pr}[X=Y]=\text{Pr}[X\leq Y],
\end{equation}
for arbitrary random variables $X$ and $Y$.}
\end{comment}
\end{proof}
%\stepcounter{proposition}
%\stepcounter{proposition}
%\dam{(SHOULD I ADD A STATEMENT AND PROOF THAT THESE BOUNDS ARE TIGHTER THAN THOSE USED IN OUR NPJ PAPER? I THINK I HAD ASKED THIS TO GEORGE -- CHECK NOTES)}

\begin{proposition}%[new addition]
\label{newproposition1}
Consider random variables $X_k\in\{0,1\}$ with $0<\text{Pr}[X_k=1]$, for all $k\in\{1,2,\ldots, N\}\equiv[N]$. We define the random variable $X=\sum_{k=1}^NX_k$, which satisfies $X\in[0,N]$. For any $t\in[N]$, consider random variables $Y_{t,k}\in\{0,1\}$ satisfying $0<\text{Pr}[Y_{t,t}=1]\leq \text{Pr}[X_t=1]$ and $\text{Pr}[Y_{t,k}=1]=\text{Pr}[X_k=1]$ for all $k\neq t$. We define the random variable $Y_t=\sum_{k=1}^NY_{t,k}$, which satisfies $Y_t\in[0,N]$. Then, for all $a\in[0,N]$, we have $\text{Pr}[X<a]\leq \text{Pr}[Y_t<a]$.
\end{proposition}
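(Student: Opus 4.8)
The statement is a first-order stochastic dominance result: lowering the success probability of a single coordinate can only increase the lower tail of the sum. The plan is to prove it by an explicit coupling, reducing the desired inequality between distribution functions to a surely-holding pointwise inequality $Y_t\leq X$. Throughout I use that the coordinates are independent, as in the coin-toss setting in which this proposition is applied in Lemma~\ref{lemma2} (the common coordinates $k\neq t$ being shared between the two collections).

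First I would isolate the coordinate $t$. Write $p=\text{Pr}[X_t=1]$ and $p'=\text{Pr}[Y_{t,t}=1]$, so that $0<p'\leq p$ by hypothesis; note $p>0$ guarantees $p'/p$ is a well-defined probability. Construct the coupling by setting $Y_{t,k}=X_k$ for every $k\neq t$, and $Y_{t,t}=X_t\cdot B$, where $B\in\{0,1\}$ is an independent Bernoulli variable with $\text{Pr}[B=1]=p'/p$. Then $Y_{t,t}\in\{0,1\}$ with $\text{Pr}[Y_{t,t}=1]=p\cdot(p'/p)=p'$, so $Y_{t,t}$ has the required marginal; and since $B$ and $X_t$ are independent of $(X_k)_{k\neq t}$, the family $(Y_{t,k})_k$ is independent with exactly the prescribed marginals, hence $Y_t=\sum_k Y_{t,k}$ has precisely the law demanded in the statement.

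By construction $Y_{t,t}=X_t B\leq X_t$ and $Y_{t,k}=X_k$ for $k\neq t$, so $Y_t\leq X$ holds surely. Therefore $\{X<a\}\subseteq\{Y_t<a\}$ for every $a$, since $X<a$ forces $Y_t\leq X<a$; taking probabilities gives $\text{Pr}[X<a]\leq\text{Pr}[Y_t<a]$, as claimed. As an independent cross-check I would present the purely computational route that conditions on coordinate $t$: with $S=\sum_{k\neq t}X_k$ independent of $X_t$, one has $\text{Pr}[X<a]=(1-p)\,\text{Pr}[S<a]+p\,\text{Pr}[S<a-1]$ and likewise $\text{Pr}[Y_t<a]=(1-p')\,\text{Pr}[S<a]+p'\,\text{Pr}[S<a-1]$, so their difference equals $(p-p')\bigl(\text{Pr}[S<a]-\text{Pr}[S<a-1]\bigr)=(p-p')\,\text{Pr}[a-1\leq S<a]\geq 0$.

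The main thing to get right is the independence structure that licenses both the coupling and the conditioning: the proposition is meaningful only because the collection $(Y_{t,k})_k$ shares its law on the coordinates $k\neq t$ with $(X_k)_k$ and differs independently in coordinate $t$, exactly the configuration needed when it is invoked to show that $P_{\text{bound}}^{\text{coins}}(n,N)$ is maximized by maximizing each $P_{\text{bound}}^k$. Once this is fixed, the boundary cases (whether $a$ is an integer, and $a$ lying outside $[0,N]$) are handled automatically by the convention that $\text{Pr}[S<c]=0$ for $c\leq 0$, so no separate argument is required there.
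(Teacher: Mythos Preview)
Your proof is correct. Your primary route via coupling is genuinely different from the paper's, which proceeds by conditioning on the partial sum $S=\sum_{k\neq t}X_k$: the paper splits $\Pr[X<a]$ according to whether $S\in[a-1,a)$ or $S<a-1$, observes that in the first case $X<a$ iff $X_t=0$ and in the second it holds automatically, then uses $\Pr[X_t=0]\leq\Pr[Y_{t,t}=0]$ and equality of the laws of $\sum_{k\neq t}X_k$ and $\sum_{k\neq t}Y_{t,k}$. Your ``cross-check'' is exactly this argument, written more compactly as $\Pr[Y_t<a]-\Pr[X<a]=(p-p')\Pr[a-1\leq S<a]\geq 0$. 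The coupling buys you a one-line proof once the construction $Y_{t,t}=X_tB$ is in place and makes the stochastic-dominance content transparent; the paper's conditioning route is more elementary in that it avoids constructing an auxiliary variable, but is slightly longer. Both implicitly require independence of the coordinates, which you rightly flag as unstated in the proposition but clear from context.
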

\begin{proof}

The case $a=0$ is trivial, as $\text{Pr}[X<0]=\text{Pr}[Y_t<0]=0$ since $X,Y_t\in[0,N]$ by definition. Thus, below we assume $0<a\leq N$. We have,
    \begin{equation}
\begin{split}
	\text{Pr}[X<a]=&\text{Pr}\bigg[X<a\bigg|a-1\leq\sum_{k\neq t}X_k<a\bigg]\text{Pr}\bigg[a-1\leq\sum_{k\neq t}X_k <a\bigg]\\
	&+\text{Pr}\bigg[X<a\bigg|\sum_{k\neq t} X_k<a-1\bigg]\text{Pr}\bigg[\sum_{k\neq t} X_k <a-1\bigg]\\
	=&\text{Pr}[X_t=0]\text{Pr}\bigg[a-1\leq\sum_{k\neq t}X_k<a\bigg]+\text{Pr}\bigg[\sum_{k\neq t} X_k<a-1\bigg]\\
	=&\text{Pr}[X_t=0]\text{Pr}\bigg[a-1\leq\sum_{k\neq t}Y_{t,k}<a\bigg]+\text{Pr}\bigg[\sum_{k\neq t}Y_{t,k}<a-1\bigg]\\
	\leq&\text{Pr}[Y_{t,t}=0]\text{Pr}\bigg[a-1\leq\sum_{k\neq t} Y_{t,k}<a\bigg]+\text{Pr}\bigg[\sum_{k\neq t} Y_{t,k}<a-1\bigg]\\
	=&\text{Pr}\bigg[Y<a\bigg|a-1\leq\sum_{k\neq t} Y_{t,k}<a\bigg]\text{Pr}\bigg[a-1\leq\sum_{k\neq t}Y_{t,k}<a\bigg]\\
	&+\text{Pr}\bigg[Y<a\bigg|\sum_{k\neq t}Y_k<a-1\bigg]\text{Pr}\bigg[\sum_{k\neq t}Y_k<a-1\bigg]\\
	= &\text{Pr}[Y<a]\,.
\end{split}
\end{equation}
\end{proof}

	\begin{comment}
	    	Consider random variable $X=\sum_{k=1}^NX_k$ where $X_k\in\{0,1\}$ and $0<P(X_k=1)<1$. For any $t\in[N]$, let $X'=\sum_{k=1}^NX_k'$ with $X_k'\in\{0,1\}$, $0<P(X_t'=1)\leq P(X_t=1)$ and $P(X_k'=1)=P(X_k=1)$ for $k\neq t$. Then $\forall a\in\mathbb{R}$, $P(X<a)\leq P(X'<a)$. 

\begin{proof}
\begin{equation}
\begin{split}
	P(X<a)=&P\bigg(X<a\bigg|a-1\leq\sum_{k\neq t}X_k<a\bigg)P\bigg(a-1\leq\sum_{k\neq t}X_k <a\bigg)\\
	&+P\bigg(X<a\bigg|\sum_{k\neq t} X_k<a-1\bigg)P\bigg(\sum_{k\neq t} X_k <a-1\bigg)\\
	=&P(X_t=0)P\bigg(a-1\leq\sum_{k\neq t}X_k<a\bigg)+P\bigg(\sum_{k\neq t} X_k<a-1\bigg)\\
	=&P(X_t=0)P\bigg(a-1\leq\sum_{k\neq t}X_k'<a\bigg)+P\bigg(\sum_{k\neq t}X_k'<a-1\bigg)\\
	\leq&P(X_t'=0)P\bigg(a-1\leq\sum_{k\neq t} X_k'<a\bigg)+P\bigg(\sum_{k\neq t} X_k'<a-1\bigg)\\
	=&P\bigg(X'<a\bigg|a-1\leq\sum_{k\neq t} X_k'<a\bigg)P\bigg(a-1\leq\sum_{k\neq t}X_k'<a\bigg)\\
	&+P\bigg(X'<a\bigg|\sum_{k\neq t}X_k'<a-1\bigg)P\bigg(\sum_{k\neq t}X_k'<a-1\bigg)\\
	=&P(X'<a).
\end{split}
\end{equation}
\end{proof}
\end{comment}

\begin{proposition}%[new addition]
\label{propositionprobcol}
Consider random variables $W_k,Z_k\in\{0,1\}$ with $0<\text{Pr}[Z_k=1]\leq\text{Pr}[W_k=1]$, for all $k\in\{1,2,\ldots, N\}\equiv[N]$. We define the random variables $W=\sum_{k=1}^NW_k$ and $Z=\sum_{k=1}^NZ_k$, which satisfy $W,Z\in[0,N]$. Then, for all $a\in[0,N]$, we have $\text{Pr}[W<a]\leq \text{Pr}[Z<a]$.
 %If random variables $W=\sum_{k=1}^NW_k$ and $Z=\sum_{k=1}^NZ_k$ are such that $\forall k\in[N]$, $W_k,Z_k\in\{0,1\}$ and $0<P(Z_k=1)\leq P(W_k=1)$, then \dam{$\forall a\in[0,1]$,}
 %\in\mathbb{R}$, $P(W<a)\leq P(Z<a)$.
\end{proposition}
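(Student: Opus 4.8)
The plan is to deduce Proposition~\ref{propositionprobcol} from Proposition~\ref{newproposition1} by a hybrid (telescoping) argument that swaps the variables $W_k$ for $Z_k$ one index at a time, each swap only decreasing a success probability. First I would introduce the interpolating random variables
\begin{equation}
V^{(j)}=\sum_{k=1}^{j}Z_k+\sum_{k=j+1}^{N}W_k \, ,
\end{equation}
for $j\in\{0,1,\ldots,N\}$, so that $V^{(0)}=W$ and $V^{(N)}=Z$. By construction $V^{(j+1)}$ is obtained from $V^{(j)}$ by keeping every summand fixed except that the single term $W_{j+1}$ is replaced by $Z_{j+1}$.

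The key step is to apply Proposition~\ref{newproposition1} to each adjacent pair $\bigl(V^{(j)},V^{(j+1)}\bigr)$, in the roles of $(X,Y_t)$ with $t=j+1$. To invoke it I must check its two hypotheses: every summand of $V^{(j)}$ must have strictly positive probability of equalling $1$, and the one replaced variable must have a success probability no larger than the variable it replaces. The first holds because $0<\text{Pr}[Z_k=1]\leq\text{Pr}[W_k=1]$ forces $0<\text{Pr}[W_k=1]$ for every $k$, so all of $Z_1,\ldots,Z_j,W_{j+1},\ldots,W_N$ qualify; the second is exactly the hypothesis $0<\text{Pr}[Z_{j+1}=1]\leq\text{Pr}[W_{j+1}=1]$. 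Proposition~\ref{newproposition1} then yields $\text{Pr}\bigl[V^{(j)}<a\bigr]\leq\text{Pr}\bigl[V^{(j+1)}<a\bigr]$ for every $a\in[0,N]$.

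Chaining these $N$ inequalities gives
\begin{equation}
\text{Pr}[W<a]=\text{Pr}\bigl[V^{(0)}<a\bigr]\leq\text{Pr}\bigl[V^{(1)}<a\bigr]\leq\cdots\leq\text{Pr}\bigl[V^{(N)}<a\bigr]=\text{Pr}[Z<a] \, ,
\end{equation}
which is precisely the claim. The case $a=0$ is trivial since both sides vanish, so the argument only needs the range $0<a\leq N$, exactly as in Proposition~\ref{newproposition1}.

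The main obstacle I anticipate lies not in the chaining but in verifying that Proposition~\ref{newproposition1} genuinely applies at each hybrid step, i.e. that after a single swap the untouched summands retain the same joint distribution that the proposition requires. This is immediate under the independence of the underlying $\{0,1\}$-valued variables assumed throughout the surrounding security analysis, since then each $V^{(j)}$ is a sum of independent Bernoulli terms and replacing one independent term by another preserves independence of the remainder. If one wished to drop independence, the cleanest alternative would be a monotone coupling with $Z_k\leq W_k$ almost surely, giving $Z\leq W$ pointwise and hence $\{W<a\}\subseteq\{Z<a\}$; that coupling is the only point genuinely requiring care.
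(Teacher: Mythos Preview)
Your proof is correct and follows essentially the same approach as the paper: both define the hybrids $V^{(j)}$ (the paper calls them $W^{(j)}$) that swap $W_k$ for $Z_k$ one index at a time, apply Proposition~\ref{newproposition1} at each step, and chain the resulting inequalities. Your explicit verification of the hypotheses of Proposition~\ref{newproposition1} and your remark on the implicit independence assumption are, if anything, more careful than the paper's terse iterative description.
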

\begin{proof}
	Define $W^{(0)}=W$ and $W_k^{(0)}=W_k$. Starting from $t=1$:
\begin{enumerate}
	\item Set $W_k^{(t)}=W_k^{(t-1)}$ for $k\neq t$ and $W_t^{(t)}=Z_t$. Set $W^{(t)}=\sum_{k=1}^N W_k^{(t)}$.
	\item By proposition~\ref{newproposition1} with $X=W^{(t-1)}$ and $Y_t=W^{(t)}$, $\forall a\in [0,N]$, $\text{Pr}[W^{(t-1)}<a]\leq \text{Pr}[W^{(t)}<a]$.
	\item If $t<N$, add 1 to $t$ and return to step 1. If $t=N$, then $W^{(N)}=Z$, concluding the process.
\end{enumerate}
This gives the result
\begin{equation}
	\text{Pr}[W<a]=\text{Pr}[W^{(0)}<a]\leq \text{Pr}[W^{(1)}<a]\leq ... \leq \text{Pr}[W^{(N)}<a] = \text{Pr}[Z<a]\,.
\end{equation}
\end{proof}

\subsection{Proof of robustness and correctness}

%\stepcounter{lemma}
%\section{Appendix A}
\setcounter{lemma}{1}
\begin{lemma}\label{lemma2rep}
If
\begin{equation}\label{rob1repeated}
	0<\gamma_\text{det} <P_{\text{det}} \, , 
\end{equation}
then the quantum token schemes $\mathcal{QT}_1$ and $\mathcal{QT}_2$ of Ref.~\cite{KLPGR22} are $\epsilon_{\text{rob}}-$robust with
\begin{equation}\label{rob2repeated}
	\epsilon_{\text{rob}}=\left(\frac{P_{\text{det}}}{\gamma_\text{det}}\right)^{N\gamma_\text{det}}\left(\frac{1-P_{\text{det}}}{1-\gamma_\text{det}}\right)^{N(1-\gamma_\text{det})} \, .
\end{equation}
\end{lemma}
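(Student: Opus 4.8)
The plan is to identify Bob's abort probability in an honest run with a lower-tail large-deviation event for a sum of independent Bernoulli variables, and then apply the Chernoff bound of Proposition~\ref{proposition3}. Recall from the security definitions that the scheme is $\epsilon_\text{rob}$-robust if $\text{Pr}[\text{Bob aborts}]\le\epsilon_\text{rob}$ when both parties are honest, and that in the loss-reporting version $B_0$ does not abort if and only if $n\ge\gamma_\text{det}N$, where $n=\lvert\Lambda\rvert$ counts the quantum states that $A_0$ reports as successfully measured. Hence Bob aborts precisely when $n<\gamma_\text{det}N$, and the whole claim reduces to bounding $\text{Pr}[n<\gamma_\text{det}N]$.

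First I would write $n=\sum_{k=1}^N X_k$, where $X_k\in\{0,1\}$ is the indicator that the $k$th transmitted state is reported as successfully measured. Since Alice and Bob are honest and the $N$ states are prepared, transmitted, and measured in $N$ independent rounds, the $X_k$ are independent, each with $\text{Pr}[X_k=1]=P_\text{det}^k\ge P_\text{det}$, the relevant direction for robustness being that $P_\text{det}$ is a uniform lower bound on the per-round successful-measurement probability. To pass from the possibly non-identical $P_\text{det}^k$ to the worst case, I would invoke Proposition~\ref{propositionprobcol} with $W_k=X_k$ and $Z_k$ i.i.d.\ Bernoulli of parameter $P_\text{det}\le P_\text{det}^k$, obtaining $\text{Pr}[n<\gamma_\text{det}N]=\text{Pr}[W<\gamma_\text{det}N]\le\text{Pr}[Z<\gamma_\text{det}N]$, where $Z=\sum_k Z_k$ is binomial with parameters $N$ and $P_\text{det}$.

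Next I would convert the threshold into the form required by the Chernoff bound. Using the hypothesis $0<\gamma_\text{det}<P_\text{det}$, set $\epsilon=1-\gamma_\text{det}/P_\text{det}\in(0,1)$, so that $\gamma_\text{det}N=(1-\epsilon)NP_\text{det}$. Then $\text{Pr}[Z<\gamma_\text{det}N]\le\text{Pr}[Z\le(1-\epsilon)NP_\text{det}]$, and I would apply the lower-tail Chernoff bound~(\ref{firstChernoffbound}) of Proposition~\ref{proposition3} with $p=P_\text{det}$. Substituting the identities $1/(1-\epsilon)=P_\text{det}/\gamma_\text{det}$, $P_\text{det}(1-\epsilon)=\gamma_\text{det}$, and $1-P_\text{det}(1-\epsilon)=1-\gamma_\text{det}$ into the right-hand side collapses it exactly to
\begin{equation}
\left(\frac{P_\text{det}}{\gamma_\text{det}}\right)^{N\gamma_\text{det}}\left(\frac{1-P_\text{det}}{1-\gamma_\text{det}}\right)^{N(1-\gamma_\text{det})},
\end{equation}
which is the claimed $\epsilon_\text{rob}$.

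The argument is essentially bookkeeping once Propositions~\ref{proposition3} and~\ref{propositionprobcol} are in hand, so I do not anticipate a serious obstacle. The only points demanding care are: (i) justifying the independence of the rounds and the uniform lower bound $P_\text{det}^k\ge P_\text{det}$ from the honest-implementation assumptions, which is what licenses the reduction via Proposition~\ref{propositionprobcol}; and (ii) the minor strict-versus-weak inequality step $\text{Pr}[Z<\gamma_\text{det}N]\le\text{Pr}[Z\le\gamma_\text{det}N]$, needed because the Chernoff statement is phrased with $\le$. Finally, I would note that for the implementation of this paper, where Alice reports no losses, one has $P_\text{det}=\gamma_\text{det}=1$ and $n=N$, so Bob never aborts and the bound holds trivially.
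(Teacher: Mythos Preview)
Your proposal is correct and follows essentially the same approach as the paper: identify the abort event with $\text{Pr}[n<\gamma_\text{det}N]$ and apply the lower-tail Chernoff bound~(\ref{firstChernoffbound}) of Proposition~\ref{proposition3} with $p=P_\text{det}$ and $\epsilon=1-\gamma_\text{det}/P_\text{det}$. The only minor difference is that the paper treats $P_\text{det}$ as \emph{the} per-round successful-measurement probability (so $E(n)=NP_\text{det}$ directly) rather than as a uniform lower bound, and therefore does not route through Proposition~\ref{propositionprobcol}; your extra step is harmless and, if anything, slightly more general.
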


\begin{proof}%[Proof of Lemma 2]
The proof is a straightforward adaptation of the proof of lemma 2 of Ref.~\cite{KLPGR22} using the tighter Chernoff bound (\ref{firstChernoffbound}) of proposition \ref{proposition3}. Let $P_{\text{abort}}$ be the probability that Bob aborts the token scheme if Alice and Bob follow the token scheme honestly. 
By definition of the quantum token schemes $\mathcal{QT}_1$ and $\mathcal{QT}_2$ of Ref.~\cite{KLPGR22},
we have
\begin{equation}
\label{aaax1}
P_{\text{abort}}=\text{Pr}[n<\gamma_\text{det} N]\,.
\end{equation}
We note that the expectation value of $n$ is $E(n)=NP_\text{det}$. From (\ref{rob1repeated}), we have that $0<1-\frac{\gamma_\text{det}}{P_\text{det}}<1$. Thus, we obtain from the Chernoff bound (\ref{firstChernoffbound}) of proposition \ref{proposition3} with $p=P_{\text{det}}$ and $\epsilon=1-\frac{\gamma_\text{det}}{P_\text{det}}$ that
 \begin{equation}
 \label{aaax3}
 \text{Pr}[n<\gamma_\text{det} N]\leq\left(\frac{P_{\text{det}}}{\gamma_\text{det}}\right)^{N\gamma_\text{det}}\left(\frac{1-P_{\text{det}}}{1-\gamma_\text{det}}\right)^{N(1-\gamma_\text{det})} \, .
 \end{equation}
It follows from (\ref{aaax1}) and (\ref{aaax3}) that
 \begin{equation}
 \label{aaax4}
 P_{\text{abort}}\leq\epsilon_{\text{rob}} \, , 
 \end{equation}
 with $\epsilon_{\text{rob}}$ given by (\ref{rob2repeated}). 
It follows from (\ref{aaax4}) that the token schemes $\mathcal{QT}_1$ and $\mathcal{QT}_2$ are  $\epsilon_{\text{rob}}$-robust with $\epsilon_{\text{rob}}$ given by (\ref{rob2repeated}).
\end{proof}

\begin{lemma}\label{lemma3rep}
If
\begin{eqnarray}\label{cor1repeated}
	0&<&E<\gamma_\text{err} \, , \nonumber\\
	0&<&\nu_\text{cor}<\frac{P_{\text{det}}(1-2\beta_\text{PB})}{2} \, , \end{eqnarray}
then the quantum token schemes $\mathcal{QT}_1$ and $\mathcal{QT}_2$ of Ref.~\cite{KLPGR22} are $\epsilon_{\text{cor}}-$correct with
\begin{eqnarray}\label{cor2repeated}
	\epsilon_{\text{cor}}&=&\left(\frac{P_{\text{det}}(1-2\beta_\text{PB})}{2\nu_\text{cor}}\right)^{N\nu_\text{cor}}\left(\frac{2-P_{\text{det}}(1-2\beta_\text{PB})}{2-2\nu_\text{cor}}\right)^{N(1-\nu_\text{cor})}\nonumber\\
 &&\qquad +\left(\frac{E}{\gamma_\text{err}}\right)^{N\nu_\text{cor}\gamma_\text{err}}\left(\frac{1-E}{1-\gamma_\text{err}}\right)^{N\nu_\text{cor}(1-\gamma_\text{err})} \, .
\end{eqnarray}
\end{lemma}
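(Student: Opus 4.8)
The plan is to mirror the robustness proof (Lemma~\ref{lemma2rep}), decomposing the event that honest Alice's valid token is rejected at $L_b$ into two failure modes and bounding each with one of the Chernoff bounds of Proposition~\ref{proposition3}. Recall that at the genuine presentation location $L_b$ Bob checks the reported-detected indices $k\in\Delta_b$ for which Alice's measurement basis coincided with Bob's preparation basis (i.e.\ $u_k=z$, since $d_b=c\oplus b=z$), and rejects only if the mismatch rate on this set exceeds $\gamma_\text{err}$. Writing $N_b=\lvert\Delta_b\rvert$ for the number of checked bits and $N_{\text{errors},b}$ for the number of mismatches among them, rejection is the event $N_{\text{errors},b}>\gamma_\text{err}N_b$. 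I would bound
\[
\text{Pr}[\text{reject}]\leq \text{Pr}[N_b\leq N\nu_\text{cor}]+\text{Pr}\bigl[N_{\text{errors},b}>\gamma_\text{err}N_b,\,N_b>N\nu_\text{cor}\bigr],
\]
and identify the two summands with the two terms of (\ref{cor2repeated}).

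First I would treat $\text{Pr}[N_b\leq N\nu_\text{cor}]$. For honest Alice each round is a checked bit iff it is reported detected and Bob's basis equals $z$; these indicators are independent across rounds, and since $z$ is fixed with $\text{Pr}[u_k=z]\geq\tfrac12-\beta_\text{PB}$ and detection probability $P_\text{det}$, each has success probability at least $p_1:=\tfrac{P_\text{det}(1-2\beta_\text{PB})}{2}$. As the true per-round probabilities may exceed $p_1$, I would invoke Proposition~\ref{propositionprobcol} to replace them by i.i.d.\ Bernoulli$(p_1)$ variables, which only increases the relevant lower-tail probability, and then apply the first Chernoff bound (\ref{firstChernoffbound}) with $p=p_1$ and $(1-\epsilon)p=\nu_\text{cor}$ (so $0<\epsilon<1$, valid since $0<\nu_\text{cor}<p_1$ by hypothesis). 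Direct substitution reproduces exactly the first term of (\ref{cor2repeated}).

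Next I would bound the second summand. Conditioned on any fixed checked set with $N_b=m$, the $m$ mismatch indicators are independent Bernoulli with parameter at most $E$ (by definition of $E=\max_{t,u}E_{tu}$ on matching-basis rounds), so the second Chernoff bound (\ref{secondChernoffbound}) with $p=E$ and $(1+\epsilon)p=\gamma_\text{err}$ (valid since $E<\gamma_\text{err}$) gives conditional rejection probability at most $C^m$, where $C=\bigl(\tfrac{E}{\gamma_\text{err}}\bigr)^{\gamma_\text{err}}\bigl(\tfrac{1-E}{1-\gamma_\text{err}}\bigr)^{1-\gamma_\text{err}}$. Since $E<\gamma_\text{err}$ forces $C<1$, the bound $C^m$ is decreasing in $m$, so for every $m>N\nu_\text{cor}$ it is at most $C^{N\nu_\text{cor}}$, which is precisely the second term of (\ref{cor2repeated}). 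Averaging this uniform conditional bound over the distribution of the checked set preserves it, and adding the two contributions yields (\ref{cor2repeated}).

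The main obstacle I anticipate is not any deep inequality but the bookkeeping in the conditioning: one must argue that, given the checked set, the mismatch events stay independent Bernoulli with parameter $\leq E$ so the error bound decouples from the count $N_b$, and that the worst case of the monotone-decreasing factor $C^m$ over $m\geq N\nu_\text{cor}$ is attained at $m=N\nu_\text{cor}$ and hence can be pulled outside the average. Given these observations, the two applications of Proposition~\ref{proposition3} together with the stochastic-domination step of Proposition~\ref{propositionprobcol} complete the argument routinely, exactly paralleling Lemma~\ref{lemma2rep} but with the tighter bound (\ref{secondChernoffbound}) replacing the looser Chernoff bound used in Ref.~\cite{KLPGR22}.
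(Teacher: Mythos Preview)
Your proposal is correct and follows essentially the same route as the paper's proof: the same two-case split on $\lvert\Delta_b\rvert$ versus $N\nu_\text{cor}$, the stochastic-domination step via Proposition~\ref{propositionprobcol} followed by the lower-tail Chernoff bound (\ref{firstChernoffbound}) for the first term, and the upper-tail Chernoff bound (\ref{secondChernoffbound}) together with the monotonicity of $C^m$ in $m$ for the second term. The only minor point the paper makes explicit that you glossed over is the passage from ``Bernoulli with parameter at most $E$'' to the i.i.d.\ bound at $E$: the paper verifies that the Chernoff expression is increasing in $E$ on $(0,\gamma_\text{err})$, which is what justifies using $E=\max_{t,u}E_{tu}$ in (\ref{secondChernoffbound}).
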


\begin{proof}~%[Proof of Lemma 3]
The proof is a straightforward adaptation of the proof of lemma 3 of Ref.~\cite{KLPGR22} using the tighter Chernoff bound (\ref{firstChernoffbound}) of proposition \ref{proposition3}. Let $P_{\text{error}}$ be the probability that Bob does not accept Alice's token as valid if Alice and Bob follow the token scheme honestly. 
By definition of the quantum token schemes $\mathcal{QT}_1$ and $\mathcal{QT}_2$ of Ref.~\cite{KLPGR22}, we have
\begin{equation}
\label{aaax1.2}
P_{\text{error}}=\sum_{\lvert \Delta_b\rvert=0}^N P_{\text{error}}(\lvert \Delta_b\rvert)\text{Pr}(\lvert \Delta_b\rvert)\,,
\end{equation}
where
\begin{equation}
\label{aaax1.1}
P_{\text{error}}(\lvert \Delta_b\rvert)=\text{Pr}\Bigl[n_{\text{errors}}>\lvert \Delta_b\rvert\gamma_\text{err}\big\vert \lvert \Delta_b\rvert\Bigr]\,,
\end{equation}
and where $n_{\text{errors}}$ is the number of bit errors in the substring $\mathbf{x}_b$ of the the token $\mathbf{x}$ that Alice presents to Bob at $Q_b$ (the spacetime region $R_b$), compared to the bits of the substring $\mathbf{r}_b$ of $\mathbf{r}$ encoded by Bob.
From 
(\ref{aaax1.2}),
we have
\begin{eqnarray}
\label{aaay3.1}
P_{\text{error}}&=&\sum_{\lvert \Delta_b\rvert<\nu_\text{cor} N}P_{\text{error}}(\lvert \Delta_b\rvert)\text{Pr}(\lvert \Delta_b\rvert)\nonumber\\
&&\quad+ \sum_{\lvert \Delta_b\rvert\geq \nu_\text{cor} N}P_{\text{error}}(\lvert \Delta_b\rvert)\text{Pr}(\lvert \Delta_b\rvert)\nonumber\\
&\leq&\text{Pr}\big[ \lvert \Delta_b\rvert < \nu_\text{cor} N\bigr]\nonumber\\
&&\quad+ \sum_{\lvert \Delta_b\rvert\geq \nu_\text{cor} N}P_{\text{error}}(\lvert \Delta_b\rvert)\text{Pr}(\lvert \Delta_b\rvert)\,.
\end{eqnarray}
We show below that
\begin{equation}
\label{aaay3.2x}
P_{\text{error}}(\lvert \Delta_b\rvert )\leq\left(\frac{E}{\gamma_\text{err}}\right)^{|\Delta_b|\gamma_\text{err}} \left(\frac{1-E}{1-\gamma_\text{err}}\right)^{|\Delta_b|(1-\gamma_\text{err})} \, , 
\end{equation}
and that
\begin{equation}
\label{aaay3.2}
\text{Pr}[\lvert \Delta_b\rvert\!<\!\nu_\text{cor} N]\leq \left(\frac{P_{\text{det}}(1-2\beta_\text{PB})}{2\nu_\text{cor}}\right)^{N\nu_\text{cor}}\left(\frac{2-P_{\text{det}}(1-2\beta_\text{PB})}{2-2\nu_\text{cor}}\right)^{N(1-\nu_\text{cor})} \, .
\end{equation}
From (\ref{aaay3.1}) -- (\ref{aaay3.2}), and showing that the function
\begin{equation}
\label{function}
 F(m)= \left(\frac{E}{\gamma_\text{err}}\right)^{m\gamma_\text{err}} \left(\frac{1-E}{1-\gamma_\text{err}}\right)^{m(1-\gamma_\text{err})}
\end{equation}
decreases with increasing $m=\lvert \Delta_b\rvert$ in the range $[\nu_{\text{cor}}N,N]$, we obtain
\begin{equation}
	\label{neweq2}
	P_\text{error}\leq\epsilon_\text{cor} \, , 
\end{equation}
with $\epsilon_\text{cor}$ given by (\ref{cor2repeated}). Thus, the token schemes $\mathcal{QT}_1$ and $\mathcal{QT}_2$ are $\epsilon_\text{cor}-$correct with $\epsilon_\text{cor}$ given by (\ref{cor2repeated}), as claimed.

We show that $F(m)$ given by (\ref{function}) decreases with increasing $m$ in the range $[\nu_{\text{cor}}N,N]$. We note from (\ref{cor1repeated}) that $0<\nu_{\text{cor}}<1$. We also note that $F(m)=q^m$, where
\begin{equation}
    q=\left(\frac{E}{\gamma_\text{err}}\right)^{\gamma_\text{err}} \left(\frac{1-E}{1-\gamma_\text{err}}\right)^{(1-\gamma_\text{err})} \, .
\end{equation}
Thus, the claim follows by showing that 
\begin{equation}
  \label{qineq}
  0<q<1\,.  
  \end{equation}
We show this as follows. First, we note that
\begin{equation}
    q=\frac{G(E)}{G(\gamma_\text{err})} \, , 
\end{equation}
where
\begin{equation}
       \label{FunctionG}
    G(x)=x^{\gamma_\text{err}}(1-x)^{1-\gamma_\text{err}}
\end{equation}
for $x\in(0,1)$. Thus, (\ref{qineq}) follows if 
\begin{equation}
\label{FunctionG2}
0<G(E)<G(\gamma_\text{err})\,.
\end{equation}
From (\ref{cor1repeated}) and from $\gamma_\text{err}<1$, by definition, we have $0<E<\gamma_\text{err}<1$. Thus, the first inequality in (\ref{FunctionG2}) follows straightforwardly, whereas the second inequality follows if $G(x)$ increases by increasing $x$ in the range $x\in(0,\gamma_\text{err})$. This holds because
\begin{equation}
    \frac{dG}{dx}=\bigl(\gamma_\text{err}(1-x)-(1-\gamma_\text{err})x\bigr)x^{\gamma_\text{err}-1}(1-x)^{-\gamma_\text{err}}>0
\end{equation}
as
\begin{equation}
    \gamma_\text{err}(1-x)>\gamma_\text{err}(1-\gamma_\text{err})>(1-\gamma_\text{err})x\,,
\end{equation}
since $0<x<\gamma_\text{err}<1$.

We show (\ref{aaay3.2x}). We recall that $E = \max_{t,u}\{E_{tu}\}$, where $E_{tu}$ is Alice’s
error rate when Bob prepares states $\lvert \psi_{tu}^k\rangle$ and Alice measures
in the basis of preparation by Bob, for $t, u \in \{0, 1\}$. Let us assume for now that $E_{tu}=E$ for $t,u\in\{0,1\}$. Given $\lvert \Delta_b\rvert$, we note that the expectation value of $n_\text{error}$ equals $E\lvert \Delta_b\rvert$. We have %$0<\frac{\gamma_{\text{err}}}{E}-1<1$
$\frac{\gamma_{\text{err}}}{E}-1>0$ and $0<\gamma_\text{err}<1$ from (\ref{cor1repeated}) and by definition of $\gamma_\text{err}$. Thus, from Chernoff bound (\ref{secondChernoffbound}) of Proposition \ref{proposition3} with $p=E$ and $\epsilon=\frac{\gamma_\text{err}}{E}-1$, we have
\begin{equation}
\label{aaay2}
\text{Pr}\bigl[n_{\text{errors}}>\lvert \Delta_b\rvert \gamma_\text{err}\big\vert \lvert \Delta_b\rvert \bigr]\leq\left(\frac{E}{\gamma_\text{err}}\right)^{|\Delta_b|\gamma_\text{err}} \left(\frac{1-E}{1-\gamma_\text{err}}\right)^{|\Delta_b|(1-\gamma_\text{err})} \, .
\end{equation}

The function $f(E)=E^{\left|\Delta_b\right|\gamma_{err}}\left(1-E\right)^{\left|\Delta_b\right|\left(1-\gamma_{err}\right)}$ is increasing with increasing $E$, because from (\ref{cor1repeated}) we have that $f'(E)=\left|\Delta_b\right|\left[\gamma_{err}-E\right]E^{\left|\Delta_b\right|\gamma_{err}-1}\left(1-E\right)^{\left|\Delta_b\right|\left(1-\gamma_{err}\right)-1}>0$. 
Let $E_{\text{max}}\geq E$. Thus, from 
(\ref{aaay2}), we have
\begin{equation}
\label{aaay2.2}
\text{Pr}\bigl[n_{\text{errors}}\!>\!\lvert \Delta_b\rvert \gamma_\text{err}\big\vert \lvert \Delta_b\rvert \bigr]\!\leq\left(\frac{E_\text{max}}{\gamma_\text{err}}\right)^{|\Delta_b|\gamma_\text{err}} \left(\frac{1-E_\text{max}}{1-\gamma_\text{err}}\right)^{|\Delta_b|(1-\gamma_\text{err})} \, .
\end{equation}
It follows from (\ref{aaax1.1}) and (\ref{aaay2.2}) that 
\begin{equation}
\label{aaay3.2xx}
P_{\text{error}}(\lvert \Delta_b\rvert )\leq \left(\frac{E_\text{max}}{\gamma_\text{err}}\right)^{|\Delta_b|\gamma_\text{err}} \left(\frac{1-E_\text{max}}{1-\gamma_\text{err}}\right)^{|\Delta_b|(1-\gamma_\text{err})} \, .
\end{equation}
Since in general we have $E_{tu}\leq E$, for $t,u\in\{0,1\}$, we can replace $E_\text{max}$ by $E$ in (\ref{aaay3.2xx}) and obtain (\ref{aaay3.2x}).

We show (\ref{aaay3.2}). Since for the quantum state $\lvert \psi_k\rangle$, with $g(k)=j$, for $k\in\Lambda $ and $j\in[n]$, $\mathcal{B}$ encodes the bit $t_k=r_j$ in the basis labelled by $u_k=s_j$, with $u_k$ chosen with probability $P^{k}_{\text{PB}}(u_k)$ satisfying $\frac{1}{2}-\beta_\text{PB}\leq P^{k}_{\text{PB}}(u_k)\leq \frac{1}{2}+\beta_\text{PB}$ for $t_k,u_k\in\{0,1\}$, the probability $\text{Pr}[y_j=s_j]$ satisfies
\begin{equation}
\label{neweq3}
\text{Pr}[y_j=s_j]\geq \frac{1}{2}-\beta_\text{PB} \, .
\end{equation}
This is easily seen as follows. By the definition of $\Delta_b$ given in the token schemes $\mathcal{QT}_1$ and $\mathcal{QT}_2$, we see that $\lvert\Delta_b\rvert$ corresponds to the number of labels $k\in\Lambda$ satisfying $g(k)=j\in[n]$ for which it holds that $y_j=s_j$, where we recall $y_j$ and $s_j$ are the bits labelling the qubit measurement basis by Alice and the preparation basis by Bob, respectively. Thus, $\text{Pr}[y_j=s_j]=\sum_{a=0}^1 \text{Pr}[s_j=a]\text{Pr}[y_j=a] \geq \frac{1}{2}-\beta_\text{PB}$, as claimed. 

Next, we define a new variable $X=\sum_{k=1}^{N}X_k$, where $X_k\in\{0,1\}$ and $\Pr{\left[X_k=1\right]}=P_{det}\left(\frac{1}{2}-\beta_{PB}\right)$. By Proposition \ref{propositionprobcol}, it is seen that
\begin{equation}
	\Pr{\left[\left|\Delta_b\right|<\upsilon_{cor}N\right]}\le\Pr{\left[X<\upsilon_{cor}N\right]} \, , 
\end{equation}
as $\left|\Delta_b\right|$ can be considered as a sum of random variables $Y_i\in\{0,1\}$, with
\begin{equation}
\Pr{\left[Y_i=1\right]}=P_{\text{det}}\Pr{\left[y_j=s_j\right]\geq P_{\text{det}}\left(\frac{1}{2}-\beta_{PB}\right)}>0\,,
\end{equation}
satisfying the conditions of proposition \ref{propositionprobcol}.

We define
\begin{equation}
\label{aaay3.5}
 \epsilon=1-\frac{2\nu_\text{cor}}{P_{\text{det}}(1-2\beta_\text{PB})} \, .
\end{equation}
From the condition (\ref{cor1repeated}), we have $0<\epsilon<1$. Thus, from Chernoff bound (\ref{firstChernoffbound}) of Proposition \ref{proposition3} with $p=P_{\text{det}}\left(\frac{1}{2}-\beta_{PB}\right)$ and $\epsilon$ as in (\ref{aaay3.5}), we have
\begin{equation}
\label{aaay3.3}
\text{Pr}[\lvert \Delta_b\rvert<\nu_\text{cor} N]\leq\Pr{\left[X<\upsilon_{cor}N\right]}\leq\left(\frac{P_{det}\left(\frac{1}{2}-\beta_{PB}\right)\ \ }{\nu_{cor}}\right)^{N\nu_{cor}}\left(\frac{1-P_{det}\left(\frac{1}{2}-\beta_{PB}\right)\ \ }{1-\nu_{cor}}\right)^{N\left(1-\nu_{cor}\right)} \, , 
\end{equation}
which gives (\ref{aaay3.2}) as claimed. This completes the proof.

 \end{proof}

\stepcounter{lemma}
\stepcounter{lemma}

\subsection{Proof of unforgeability (proof of theorem 1)}

We first show unforgeability for the practical quantum token scheme implemented in our experiment in which Alice is not allowed to report losses. Then, we extend the proof straightforwardly to the general case in which losses can be reported and variations of the scheme in which Alice measures each received qubits in a random basis, as denoted by $\mathcal{QT}_1$ in Ref.~\cite{KLPGR22} (instead of measuring all qubits in the same, random, basis as done in our implementation and in the scheme $\mathcal{QT}_2$ of Ref.~\cite{KLPGR22}).%, that is, the quantum token schemedenoted by $\mathcal{QT}_2$ in Ref.~\cite{KLPGR22}), in which Alice is not allowed to report any losses.

Recall that Bob sends Alice the quantum state $\rho_{\textbf{t}\textbf{u}}=\otimes_{k=1}^N\rho^k_{t_ku_k}$ with probability $P_{\text{PS}}(\textbf{t})P_{\text{PB}}(\textbf{u})$, where $P_{\text{PS}}(\textbf{t})=\prod_{k=1}^N P_{\text{PS}}^k(t_k)$ and $P_{\text{PB}}(\textbf{u})=\prod_{k=1}^N P_{\text{PB}}^k(u_k)$ for all $k\in[N]$, and where $\textbf{t}=(t_1,\ldots,t_N)$ denotes the encoded bits and $\textbf{u}=(u_1,\ldots,u_N)$ denotes the encoded bases.

%\subsubsection{No losses are reported}

\subsubsection{No losses are reported: all received quantum states are qubits}

\dami{We first consider the case in which Alice \adr{does not} report any losses to Bob.}

We assume that all quantum states received by Alice are qubits. Consider an arbitrary cheating strategy by Alice. Alice introduces an ancilla or arbitrary finite Hilbert space dimension in a pure state $\lvert \xi\rangle$ and applies a projective measurement on the whole Hilbert space. She obtains a measurement outcome $(c,v,\textbf{x}^0,\textbf{x}^1)$, where $c$ is a bit that Alice's agent A$_0$ must give Bob's agent B$_0$, $\textbf{x}^i$ is an $N$-bit strings that she presents to Bob at the spacetime region $R_i$, for $i=0,1$, and $v$ is any extra variable obtained by Alice, which can have an empty value. Bob accepts Alice's presented string $\textbf{x}^i$ as a valid token in $R_i$ if it holds that $d(\textbf{x}_i^i,\textbf{t}_i)\leq N_i\gamma_\text{err}$, where $\textbf{a}_i$ is the restriction of the $N$-bit string $\textbf{a}$ to the bits with labels in the set $\Delta _i=\{k\in[N]\vert u_k=c\oplus i\}$ and $N_i=\lvert \Delta_i\rvert $, for $i=0,1$, and where $d(\textbf{r},\textbf{s})$ is the Hamming distance between bit strings $\textbf{r}$ and $\textbf{s}$.

Alice's success probability in her forging strategy, conditioned on the values $\textbf{t},\textbf{u},c,v$, is 
\begin{eqnarray}
\label{p1}
P^{\textbf{t}\textbf{u}cv}_{\text{forge}}&=&\text{Pr}[d(\textbf{x}_0^0,\textbf{t}_0)\leq \gamma_\text{err}N_0, d(\textbf{x}_1^1,\textbf{t}_1)\leq \gamma_\text{err}N_1\vert \textbf{t}\textbf{u}cv]\nonumber\\
&\leq&\text{Pr}[d(\textbf{x}_0^0,\textbf{t}_0)+d(\textbf{x}_1^1,\textbf{t}_1)\leq \gamma_\text{err}N\vert \textbf{t}\textbf{u}cv]\,,
    \end{eqnarray}
where we have used $N=N_0+N_1$. For given values of $\textbf{u}$ and $c$, we define the $N-$bit strings $\textbf{y}^{c\textbf{u}}_0=(\textbf{x}_0^0,\textbf{x}_1^1)$ and $\textbf{y}^{c\textbf{u}}_1=(\textbf{x}_0^1,\textbf{x}_1^0)$. Thus, we have
\begin{equation}
    \label{p2}
    d(\textbf{x}_0^0,\textbf{t}_0)+d(\textbf{x}_1^1,\textbf{t}_1)=d(\textbf{y}^{c\textbf{u}}_0,\textbf{t})\,.
\end{equation}
From, (\ref{p1}) and (\ref{p2}), we have
\begin{equation}
\label{p3}
P^{\textbf{t}\textbf{u}cv}_{\text{forge}}\leq\text{Pr}[d(\textbf{y}^{c\textbf{u}}_0,\textbf{t})\leq \gamma_\text{err}N\vert \textbf{t}\textbf{u}cv]\,.
    \end{equation}
We define the string $\textbf{x}=\bigl((x_1^0,x_1^1),(x_2^0,x_2^1)\ldots,(x_N^0,x_N^1)\bigr)$ of $N$ two-bit values corresponding to the bits that Alice gives Bob at the presentation regions. We note that for each $k\in[N]$, the bit $x_k^0$ is presented at $R_0$ and the bit $x_k^1$ is presented at $R_1$. Thus, the condition $d(\textbf{y}^{c\textbf{u}}_0,\textbf{t})\leq  \gamma_\text{err}N$ in (\ref{p3}) is achieved if Alice makes no more than $\gamma_\text{err}N$ errors in producing two-bit number guesses $x_k=(x_k^0,x_k^1)$ satisfying $x_k^{c\oplus u_k}=t_k$ for $k\in[N]$.

This can be understood equivalently as follows. For the $k$th received state $\rho_{t_ku_k}^k$, Alice produces two guesses for the state, one given by the bit $x_k^c$ corresponding to guessing that the state is $\rho_{x_k^c0}^k$ and one given by the bit $x_k^{c\oplus 1}$ corresponding to guessing that the state is $\rho_{x_k^{c\oplus 1}1}^k$. She succeeds for the $k$th state if one of her two guesses includes the received state $\rho_{t_ku_k}^k$. In other words, for any $c\in\{0,1\}$, Alice's two-bit value $x_k$ corresponds to a guess $g^k\in\{1,2,3,4\}$ of one of the four sets $S^k_1=\{\rho_{00}^k,\rho_{01}^k\}$, $S^k_2=\{\rho_{10}^k,\rho_{01}^k\}$, $S_3^k=\{\rho_{10}^k,\rho_{11}^k\}$ and $S_4^k=\{\rho_{00}^k,\rho_{11}^k\}$, and she succeeds for the $k$th state if the received state $\rho_{t_ku_k}^k$ belongs to Alice's chosen set $S^k_{g^k}$. The upper bound in (\ref{p3}) corresponds to Alice's probability 
\begin{equation}
    \label{p4}
    P_\text{guess}(N\gamma_\text{err},N\vert \textbf{t}\textbf{u}cv)= \text{Pr}[d(\textbf{y}^{c\textbf{u}}_0,\textbf{t})\leq N \gamma_\text{err} \vert \textbf{t}\textbf{u}cv]
\end{equation}
of making no more than $N\gamma_\text{err}$ errors from the $N$ received states in the task just described.

We now connect this task with a standard quantum state discrimination task. Consider round $k\in[N]$. If Alice receives the state $\rho^k_{t_ku_k}$ from Bob, she wins in that round if her guessed set $S_{g_k}$ contains the state $\rho^k_{t_ku_k}$. %Alternatively, if Alice's guess is $g_k$, she wins in that round if the state sent by Bob is within the set $S_{g_k}$.
%I change the $\omega_i$ by $\rho_i$, for $i=1,2,3,4$, so now is consistent with the equations below.
We define $\rho_1^k=\rho_{00}^k$, $\rho_2^k=\rho_{01}^k$, $\rho_3^k=\rho_{10}^k$, $\rho_4^k=\rho_{11}^k$. We define $q_1^k=P^k_\text{PS}(0)P^k_\text{PB}(0)$, $q_2^k=P^k_\text{PS}(0)P^k_\text{PB}(1)$, $q_3^k=P^k_\text{PS}(1)P^k_\text{PB}(0)$, $q_4^k=P^k_\text{PS}(1)P^k_\text{PB}(1)$. With this notation, we have $S^k_1=\{\rho_1^k,\rho_2^k\}$, $S^k_2=\{\rho_2^k,\rho_3^k\}$, $S_3^k=\{\rho_3^k,\rho_4^k\}$ and $S_4^k=\{\rho_4^k,\rho_1^k\}$. Alice receives the state $\rho_{i^k}^k$ with probability $q_{i^k}^k$, and obtains a guess $g^k\in[4]$, for $i^k\in[4]$. Alice succeeds in the round $k$ if $g^k=i^k$ or if $g^k=i^k-1$, as both sets $S_{i^k}^k$ and $S_{i^k-1}^k$ contain the state $\rho_{i^k}^k$ received from Bob. We are using the notation $i\pm4=i$ for all $i$.

%Alice assigns two guesses for the states. Following Croke and Kent 2012, this can be understood as Alice wining the game if her received state is $\rho_j$ and her  outcome is $j$ or $j-1$. Here we take $j+4=j$.

Consider now Alice playing this task as part of a collective strategy in which she applies a projective measurement $\{\pi_{g^ky}\}$ on a bigger Hilbert space of arbitrary finite dimension with possible extra outcomes $y$ (including guesses on states for other rounds) and including the states for rounds $j\neq k$  with any extra ancillas, denoted globally by a state $\phi_z$. %We use the notation of the lemma above.

The probability that Alice succeeds in the task, %(for this particular round $k$, with the label $k$ omited below for simplicity),
conditioned on the extra state being $\phi_z$ and the extra outcome being $y$, is
\begin{eqnarray}
\label{E1}
    P^{k,\text{win}}_{yz}&=&\sum_{i=1}^4 q_{i}^k \frac{Tr[(\pi_{i,y}+\pi_{i-1,y})(\rho_{i}^k\otimes \phi_z)]}{\sum_{j=1}^4Tr[\pi_{j,y} (\rho^k\otimes \phi_z)]}\nonumber\\
    &=&\sum_{i=1}^4 (q^k_{i}+q^k_{i+1})Tr  \Biggl[\frac{\pi_{i,y}\bigl((q^k_{i}\rho^k_{i}+q^k_{i+1}\rho^k_{i+1})\bigr)\otimes \phi_z}{(q^k_{i}+q^k_{i+1})Tr[\sum_{j=1}^4\pi_{j,y} (\rho^k\otimes \phi_z)]}\Biggr]\nonumber\\
    &=&2\sum_{i=1}^4 r^k_{i} Tr  \Biggl[\frac{\pi_{i,y}\bigl(\chi^k_{i}\otimes \phi_z\bigr)}{Tr[\sum_{j=1}^4\pi_{j,y} (\rho^k\otimes \phi_z)]}\Biggr]\, ,\nonumber\\
\end{eqnarray}
where $\rho^k=\sum_{i=1}^4 q^k_{i}\rho^k_{i}$, $r^k_{i}=\frac{q^k_{i}+q^k_{i+1}}{2}$, and $\chi^k_{i} =\frac{q^k_{i}\rho^k_{i}+q^k_{i+1}\rho^k_{i+1}}{q^k_{i}+q^k_{i+1}}$, for all $i=1,2,3,4$. We note that $\{r^k_{i}\}_{i=1}^4$ is a probability distribution and that $\chi^k_{i}$ is a density matrix, for all $i=1,2,3,4$. We also note that $\sum_{i=1}^4 r^k_{i}\chi^k_{i}=\rho^k$. Thus, we see that $ P^{k,\text{win}}_{yz}$ is twice the probability $P^{k,\text{discr}}_{yz}$ to succeed in a quantum state discrimination task given by the ensemble $\{r^k_{i},\chi^k_{i}\}_{i=1}^4$ conditioned on on the extra state being $\phi_z$ and the extra outcome being $y$:
\begin{equation}
    \label{E2}
    P^{k,\text{win}}_{yx}=2P^{k,\text{discr}}_{yx} \, .
\end{equation}

From (\ref{E2}) and from lemma~\ref{lemma1}, we have
\begin{equation}
    \label{E3}
    P^{k,\text{win}}_{yx}\leq 2\max_{j\in [4]}P_{\text{MC}}(\chi^k_j)\,,
\end{equation}
for all $x\in X$ and all $y\in Y$,
where
\begin{equation}
\label{E4}
 P_{\text{MC}}(\chi^k_j)=\max_{Q\geq 0}\frac{r^k_jTr[Q\chi^k_j]}{Tr[Q\rho^k]}   
\end{equation}
is the maximum confidence measurement that the received state was $\chi^k_j$ when Alice's outcome is $j\in [4]$ \cite{CABGJ06},
where the maximum is taken over all positive operators $Q$ acting on a two dimensional Hilbert space, where
\begin{equation}
\rho^k=\sum_{i=1}^4 r^k_i \chi^k_i\,,
\end{equation}
    and where
    \begin{equation}
    \label{E5}
r^k_i=\frac{q^k_i+q^k_{i+1}}{2},\qquad\chi^k_i =\frac{q^k_i\rho^k_i+q^k_{i+1}\rho^k_{i+1}}{q^k_i+q^k_{i+1}} \, , 
            \end{equation}
for all $i\in[4]$.

Thus, from (\ref{p4}) and (\ref{E3}), and from lemma~\ref{lemma2}, we have
\begin{equation}
    \label{E6}
    P_\text{guess}(N\gamma_\text{err},N\vert \textbf{t}\textbf{u}cv)\leq P_{\text{bound}}^{\text{coins}}(\lfloor N\gamma_\text{err} \rfloor,N)\,,
\end{equation}
where $P_{\text{bound}}^{\text{coins}}(\lfloor N\gamma_\text{err} \rfloor,N)$ is the probability of having no more than $\lfloor N\gamma_\text{err} \rfloor$ errors in $N$ independent coin tosses with success probabilities $P_\text{bound}^1,\ldots, P_\text{bound}^N$, %with 
%\begin{equation}
%\label{B2}
% P_\text{MC}^k=   \max_{z_k\in S_k}P_{\text{MC}}(\rho_{z_k}^k),
%\end{equation}
where
\begin{equation}
\label{E7}
P_{\text{bound}}^k=2\max_{j\in [4]}P_{\text{MC}}(\chi_j^k)\,,
\end{equation}
for all $k\in [N]$.

Thus, from (\ref{p3}), (\ref{p4}), (\ref{E6}), (\ref{E7}) and lemma~\ref{lemma2}, we obtain that Alice's success probability $P^{cv}_{\text{forge}}$ in an arbitrary forging strategy for the quantum token scheme, conditioned on her bit outcome $c$ and on any extra variable $v$ that she obtains in her collective measurement, satisfies
\begin{eqnarray}
\label{E8}
P^{cv}_{\text{forge}}&=&\sum_{\textbf{t}}\sum_{\textbf{u}}P_{\text{PS}}(\textbf{t})P_{\text{PB}}(\textbf{u})P^{\textbf{t}\textbf{u}cv}_{\text{forge}}\nonumber\\
&\leq&\sum_{\textbf{t}}\sum_{\textbf{u}}P_{\text{PS}}(\textbf{t})P_{\text{PB}}(\textbf{u})\text{Pr}[d(\textbf{y}^{c\textbf{u}}_0,\textbf{t})\leq \gamma_\text{err}N\vert \textbf{t}\textbf{u}cv]\nonumber\\
&=&P_\text{guess}(N\gamma_\text{err},N\vert cv)\nonumber\\
&\leq& P_{\text{bound}}^{\text{coins}}(\lfloor N\gamma_\text{err} \rfloor,N)\nonumber\\
&\leq& \sum_{l=0}^{\lfloor N\gamma_\text{err} \rfloor}\left(\begin{smallmatrix}N\\ l\end{smallmatrix}\right)(1-P_{\text{bound}})^{l}(P_{\text{bound}})^{N-l} \, , 
    \end{eqnarray}
where
\begin{equation}
\label{E9}
P_{\text{bound}}\geq 2\max_{j\in [4]}P_{\text{MC}}(\chi_j^k)\, .
\end{equation}
for all $k\in [N]$.

\subsubsection{No losses are reported:  a fraction of quantum states are multiple qubits}

We now consider that there is a small probability $P_{\text{noqub}}>0$ that the quantum state $\rho^k= \rho_{t_ku_k}^k$ has Hilbert space dimension bigger than two, for all $k\in[N]$. Suppose that there are $m$ labels $k\in\Omega_{\text{noqub}}$ for which $\rho^k$ has Hilbert space dimension bigger than two. We allow Alice to be arbitrarily powerful only limited by the laws of quantum physics. Thus, we suppose that Alice knows the set $\Omega_{\text{noqub}}$ and that she can learn Bob's encoded bits $t_k$ in these quantum states.

Since Alice is allowed to make up to $N\gamma_\text{err}$ errors, she succeeds with unit probability in her cheating strategy if $m\geq N-N\gamma_\text{err}$. However, as discussed below, the probability that $m\geq N(1-\gamma_\text{err})$ is negligible if $P_\text{noqub}$ is small enough and if we choose $\gamma_\text{err}$ sufficiently small.

Thus, below we assume that
\begin{equation}
    \label{E9.1}
    m < N(1-\gamma_\text{err})\,.
\end{equation}
In this case, Alice's strategy reduces to play the original task on the $N-m$ qubit states, i.e., on $\rho^k$ with labels $k\in [N]\setminus \Omega_\text{noqub}$. Since lemma~\ref{lemma2} allows us to condition the probability to win the task on any extra quantum states and on any extra outcomes, the result in (\ref{E8}) can be straightforwardly extended for this case, as we can condition on any quantum states with labels $k\in \Omega_{\text{noqub}}$ and on any outputs of Alice for these labels. Thus, we can simply replace $N$ by $N-m$ and keep the number of allowed errors $\lfloor N\gamma_\text{err} \rfloor$ fixed in (\ref{E8}) to obtain that Alice's success probability $P^{mcv}_{\text{forge}}$ in an arbitrary forging strategy for the quantum token scheme, conditioned on the value $m=\lvert \Omega_\text{noqub}\rvert$, on her bit outcome $c$, on the set $\Omega_\text{noqub}$ and on any extra variable $v$ that she obtains in her collective measurement satisfies
\begin{equation}
\label{E10}
P^{mcv}_{\text{forge}}\leq  \sum_{l=0}^{\lfloor N\gamma_\text{err} \rfloor}\left(\begin{smallmatrix}N-m\\ l\end{smallmatrix}\right)(1-P_{\text{bound}})^{l}(P_{\text{bound}})^{N-m-l} \, .
    \end{equation}
for all $m,N\in\mathbb{N}$ and $\gamma_\text{err}\in[0,1)$ satisfying (\ref{E9.1}).

Let $\nu_\text{unf}$ be a constant satisfying
\begin{equation}
\label{E10.1}
0<P_\text{noqub}<\nu_\text{unf}<1-\frac{\gamma_\text{err}}{1-P_\text{bound}}\,.
\end{equation}
We have
\begin{equation}
\label{E10.2}
P^{cv}_{\text{forge}}= \sum_{m < N\nu_\text{unf}}\text{Pr}[\lvert \Omega_\text{noqub}\rvert=m] P^{mcv}_{\text{forge}}+ \sum_{m \geq N\nu_\text{unf}} \text{Pr}[\lvert \Omega_\text{noqub}\rvert=m] P^{mcv}_{\text{forge}} \, . %\nonumber\\
%&\leq& \sum_{m \geq N\nu_\text{unf}} \text{Pr}[\lvert \Omega_\text{noqub}\rvert=m]\nonumber\\
%&&\qquad + \sum_{m < N\nu_\text{unf}}\text{Pr}[\lvert \Omega_\text{noqub}\rvert=m] P^{mcv}_{\text{forge}} \, .
    \end{equation}
%where in the second line we used $P^{mcv}_{\text{forge}}\leq 1$ and the Chernoff bound~(\ref{}) of proposition~\ref{} with $p=P_\text{noqub}$ and $\epsilon=\frac{\nu_\text{unf}}{P_\text{noqub}}-1$, which from (\ref{E10.1}) satisfy the constraints $\epsilon>0$ and $0<p(1+\epsilon)<1$.

We upper bound the second term of (\ref{E10.2}). We have
\begin{eqnarray}
\label{E10.3}
\sum_{m \geq N\nu_\text{unf}} \text{Pr}[\lvert \Omega_\text{noqub}\rvert=m] P^{mcv}_{\text{forge}}&\leq &\sum_{m \geq N\nu_\text{unf}} \text{Pr}[\lvert \Omega_\text{noqub}\rvert=m]\nonumber\\
&=&  \sum_{l=0}^{\lfloor N(1-\nu_\text{unf}) \rfloor}\left(\begin{smallmatrix}N\\ l\end{smallmatrix}\right)(1-P_{\text{noqub}})^{l}(P_{\text{noqub}})^{N-l} \, .\nonumber\\
\end{eqnarray}

We upper bound the first term of (\ref{E10.2}). Let $m<N\nu_\text{unf}$. We have
\begin{eqnarray}
\label{E12}
P^{mcv}_{\text{forge}}&\leq&  \sum_{l=0}^{\lfloor N\gamma_\text{err} \rfloor}\left(\begin{smallmatrix}N-m\\ l\end{smallmatrix}\right)(1-P_{\text{bound}})^{l}(P_{\text{bound}})^{N-m-l}\nonumber\\
&=&  P_\text{errors}(\lfloor N\gamma_\text{err} \rfloor,N-m,P_\text{bound})\,,
  \end{eqnarray}
where in the second line we used $P_\text{errors}(e,T,p)$ to denote the probability to make no more than $e$ errors in $T$ independent coin tosses with success probability $p$.

We can easily see that 
\begin{equation}
    \label{E12.1}
P_\text{errors}(e,T',p)\leq P_\text{errors}(e,T,p)\,,   
\end{equation}
for $T'\geq T$. It suffices to show this for $T'=T+1$ and arbitrary $e,T,p$. We have
\begin{equation}
    \label{E12.2}
P_\text{errors}(e,T+1,p)=P_\text{errors}(e,T,p)p+P_\text{errors}(e-1,T,p)(1-p)\nonumber\\
\leq P_\text{errors}(e,T,p),
    \end{equation}
where we used $P_\text{errors}(e-1,T,p)\leq P_\text{errors}(e,T,p)$.

Since $m<N\nu_\text{unf}$, from (\ref{E12}) and (\ref{E12.1}), we have
\begin{eqnarray}
   \label{E12.3}
P^{mcv}_{\text{forge}} &\leq& P_\text{errors}(\lfloor N\gamma_\text{err} \rfloor,N-m)\nonumber\\
&\leq& P_\text{errors}(\lfloor N\gamma_\text{err} \rfloor,\lceil N(1-\nu_\text{unf}) \rceil)\,.\nonumber\\
%&=&\sum_{l=0}^{\lfloor N\gamma_\text{err} \rfloor}\left(\begin{smallmatrix}\lceil N(1-\nu_\text{unf}) \rceil\\ l\end{smallmatrix}\right)(1-P_{\text{bound}})^{l}(P_{\text{bound}})^{\lceil N(1-\nu_\text{unf}) \rceil-l} \, , 
&=&  \sum_{l=0}^{\lfloor N\gamma_\text{err} \rfloor}\left(\begin{smallmatrix}\lceil N(1-\nu_\text{unf}) \rceil\\ l\end{smallmatrix}\right)(1-P_{\text{bound}})^{l}(P_{\text{bound}})^{\lceil N(1-\nu_\text{unf}) \rceil-l} \, .\nonumber\\
    \end{eqnarray}
%where in the last line we used the definition for $P_\text{errors}(\lfloor N\gamma_\text{err} \rfloor,\lceil N(1-\nu_\text{unf}) \rceil)$.

Thus, we have
\begin{eqnarray}
    \label{F0}
    &&\sum_{m < N\nu_\text{unf}}\text{Pr}[\lvert \Omega_\text{noqub}\rvert=m] P^{mcv}_{\text{forge}}\nonumber\\
    &&\qquad\leq \sum_{l=0}^{\lfloor N\gamma_\text{err} \rfloor}\left(\begin{smallmatrix}\lceil N(1-\nu_\text{unf}) \rceil\\ l\end{smallmatrix}\right)(1-P_{\text{bound}})^{l}(P_{\text{bound}})^{\lceil N(1-\nu_\text{unf}) \rceil-l} \, .
\end{eqnarray}
Therefore, from (\ref{E10.2}), (\ref{E10.3}) and (\ref{E12.3}), we have
\begin{eqnarray}
    \label{F0.1}
    P^{cv}_{\text{forge}}&\leq&\sum_{l=0}^{\lfloor N(1-\nu_\text{unf}) \rfloor}\left(\begin{smallmatrix}N\\ l\end{smallmatrix}\right)(1-P_{\text{noqub}})^{l}(P_{\text{noqub}})^{N-l}\nonumber\\
    &&\quad\!\!\!+\sum_{l=0}^{\lfloor N\gamma_\text{err} \rfloor}\left(\begin{smallmatrix}\lceil N(1-\nu_\text{unf}) \rceil\\ l\end{smallmatrix}\right)(1-P_{\text{bound}})^{l}(P_{\text{bound}})^{\lceil N(1-\nu_\text{unf}) \rceil-l} \, .
\end{eqnarray}
Since the bound holds, for all $c=0,1$ and for any values of the extra variable $v$, we have
\begin{eqnarray}
    \label{F0.2}
    P_{\text{forge}}&\leq&\sum_{l=0}^{\lfloor N(1-\nu_\text{unf}) \rfloor}\left(\begin{smallmatrix}N\\ l\end{smallmatrix}\right)(1-P_{\text{noqub}})^{l}(P_{\text{noqub}})^{N-l}\nonumber\\
    &&\quad\!\!\!+\sum_{l=0}^{\lfloor N\gamma_\text{err} \rfloor}\left(\begin{smallmatrix}\lceil N(1-\nu_\text{unf}) \rceil\\ l\end{smallmatrix}\right)(1-P_{\text{bound}})^{l}(P_{\text{bound}})^{\lceil N(1-\nu_\text{unf}) \rceil-l} \, , 
\end{eqnarray}
where $\nu_\text{unf}$ is a constant satisfying (\ref{E10.1}).

\subsubsection{No losses are reported: allowing $\theta_k>\theta$ with a non-zero probability $P_\theta$}

Now suppose that there is a 
%small but 
non-zero probability $P_\theta$ that $\theta_k>\theta$ for all $k\in[N]$. We define the set of labels $\Omega_{\text{noqub},\theta}=\{k\in[N]\vert \theta_k>\theta \text{~or~} k\in\Omega\}$. Let $P_{\text{noqub},\theta}$ be the probability that $k\in\Omega_{\text{noqub},\theta}$ for all $k\in[N]$. We have
\begin{equation}
    \label{F1}
    P_{\text{noqub},\theta}=1-(1-P_{\text{noqub}})(1-P_\theta)\,.
\end{equation}
Similarly to the security analysis above, we assume that Alice can learn the encoded bits perfectly for labels $k\in\Omega_{\text{noqub},\theta}$. Thus, the analysis proceeds straightforwardly as above by replacing $\Omega_{\text{noqub}}$ with $\Omega_{\text{noqub},\theta}$ and $P_{\text{noqub}}$ with $P_{\text{noqub},\theta}$. Thus, we have
\begin{eqnarray}
    \label{F1.1}
    P_{\text{forge}}&\leq&\sum_{l=0}^{\lfloor N(1-\nu_\text{unf}) \rfloor}\left(\begin{smallmatrix}N\\ l\end{smallmatrix}\right)(1-P_{\text{noqub},\theta})^{l}(P_{\text{noqub},\theta})^{N-l}\nonumber\\
    &&\quad\!\!\!+\sum_{l=0}^{\lfloor N\gamma_\text{err} \rfloor}\left(\begin{smallmatrix}\lceil N(1-\nu_\text{unf}) \rceil\\ l\end{smallmatrix}\right)(1-P_{\text{bound}})^{l}(P_{\text{bound}})^{\lceil N(1-\nu_\text{unf}) \rceil-l} \, , 
\end{eqnarray}
where $\nu_\text{unf}$ is a constant satisfying
\begin{equation}
\label{F1.2}
0<P_{\text{noqub},\theta}<\nu_\text{unf}<1-\frac{\gamma_\text{err}}{1-P_\text{bound}}\,.
\end{equation}

\subsubsection{\dami{The general case: losses} are reported}

We now \adr{extend our results to implementations in which (unlike the one we report in this work)} Alice is allowed to report a small fraction of lost quantum states. In order for Bob not to abort, we assume that Alice reports a set of pulse labels $\Lambda\subseteq [N]$ to Bob, satisfying
\begin{equation}
    \label{Z0}
    n=\lvert \Lambda\rvert \geq N\gamma_\text{det} \, , 
\end{equation}
for a predetermined $\gamma_\text{det}\in(0,1]$. Thus, Alice and Bob play the original task restricted to quantum state $\rho^k$ with labels $k\in\Lambda$, where the allowed error rate for valid token presentation remains fixed to $\gamma_\text{err}$. That is, the maximum allowed number of errors is $\lfloor n\gamma_\text{err}\rfloor$ in this case. We note that the case $\gamma_\text{det}=1$ comprises the case of no losses discussed above.

As mentioned above, we allow Alice to be arbitrarily powerful only limited by the laws of quantum physics. Thus, we assume that Alice receives all quantum states. Reporting a fraction of losses gives Alice an advantage because she can discard quantum states that give her the smallest probabilities to guess their encoded bits. Thus, we assume that Alice reports as lost the maximum possible number of single-qubit states and reports as received the maximum possible number of multi-qubit states. In other words, Alice chooses a set $\Lambda$ with the greatest possible overlap with $\Omega_\text{noqub}$.

Furthermore, we assume that Alice can learn Bob's encoded bits $t_k$ perfectly for $k\in\Omega_{\text{noqub},\theta}$, that is, if the $k$th pulse has more than one photon or if the deviation of its quantum state from the intended BB8a state is given by an $\theta_k>\theta$ on the Bloch sphere. Thus, in our security analysis, we assume that Alice chooses a set $\Lambda$ with the greatest possible overlap with $\Omega_{\text{noqub},\theta}$.

Since lemma~\ref{lemma2} allows us to condition the probability to win the considered task on any extra quantum states and on any extra outcomes, the result in (\ref{E10}) can be straightforwardly extended for this case, as we can condition on any quantum states with labels $k\in [N]\setminus \Lambda$ that were reported as lost by Alice and on any extra measurement outcomes including the set $\Lambda$ chosen by Alice. Thus, similarly to (\ref{E10.2}), Alice's cheating probability, conditioned on her bit value $c$ and on any other variables $v$, satisfies
\begin{equation}
\label{Z1}
P^{cv}_{\text{forge}}= \sum_{m < N\nu_\text{unf}}\text{Pr}[\lvert \Omega_{\text{noqub},\theta}\rvert=m] P^{mcv}_{\text{forge}}+ \sum_{m \geq N\nu_\text{unf}} \text{Pr}[\lvert \Omega_{\text{noqub},\theta}\rvert=m] P^{mcv}_{\text{forge}} \, ,  %\nonumber\\
%&\leq& \sum_{m \geq N\nu_\text{unf}} \text{Pr}[\lvert \Omega_\text{noqub}\rvert=m]\nonumber\\
%&&\qquad + \sum_{m < N\nu_\text{unf}}\text{Pr}[\lvert \Omega_\text{noqub}\rvert=m] P^{mcv}_{\text{forge}}.
    \end{equation}
for any constant $\nu_\text{unf}\in(0,1)$.  We choose $\nu_\text{unf}$ to satisfy
\begin{equation}
    \label{Z2}
    0<P_{\text{noqub},\theta}<\nu_\text{unf}<\gamma_\text{det}\Bigl(1-\frac{\gamma_\text{err}}{1-P_\text{bound}}\Bigr)\,,
\end{equation}
which allows us to show below that both terms in (\ref{Z1}) decrease exponentially with $N$. We include the set $\Lambda$ of reported pulses chosen by Alice in the variables $v$, satisfying $n=\lvert \Lambda\rvert\geq N\gamma_\text{det}$ for Bob not to abort. As discussed above, we assume that Alice can perfectly guess Bob's encoded bits $t_k$ for $k\in \Omega_{\text{noqub},\theta}$. For this reason, we have replaced $\Omega_{\text{noqub}}$ by $\Omega_{\text{noqub},\theta}$ in (\ref{E10.2}) to obtain (\ref{Z1}).

We bound the second term in (\ref{Z1}). We have
\begin{eqnarray}
\label{Z3}
     \sum_{m \geq N\nu_\text{unf}} \text{Pr}[\lvert \Omega_{\text{noqub},\theta}\rvert=m] P^{mcv}_{\text{forge}} \!\!\!&\leq&\!\!\!\!\! \sum_{m \geq N\nu_\text{unf}} \text{Pr}[\lvert \Omega_{\text{noqub},\theta}\rvert=m]\nonumber\\
     &=&\!\!\!\text{Pr}[\lvert \Omega_{\text{noqub},\theta}\rvert \geq N\nu_\text{unf}]\nonumber\\
     &\leq &\!\!\!\!\!\sum_{l=0}^{\lfloor N(1-\nu_\text{unf}) \rfloor}\left(\begin{smallmatrix}N\\ l\end{smallmatrix}\right)(1-P_{\text{noqub},\theta})^{l}(P_{\text{noqub},\theta})^{N-l} \, .\nonumber\\
\end{eqnarray}
We can guarantee the last line to decrease exponentially with $N$ if $ P_{\text{noqub},\theta}<\nu_\text{unf}$, as given by (\ref{Z2}).

We bound the first term in (\ref{Z1}). We assume that (\ref{Z1}) holds and that $m=\lvert \Omega_{\text{noqub},\theta} \rvert <N\nu_\text{unf}$. Thus, $m<N\gamma_\text{det}\leq n$. Therefore, Alice's optimal strategy is to choose $\Lambda$ as containing $\Omega_{\text{noqub},\theta}$: $\Lambda\supset \Omega_{\text{noqub},\theta}$. Thus, we can simply replace $N$ by $n=\lvert \Lambda\rvert$ in (\ref{E10}) and keep $m=\lvert \Omega_{\text{noqub},\theta}\rvert$ to have
\begin{eqnarray}
\label{Z4}
P^{mcv}_{\text{forge}}&\leq&  \sum_{l=0}^{\lfloor n\gamma_\text{err} \rfloor}\left(\begin{smallmatrix}n-m\\ l\end{smallmatrix}\right)(1-P_{\text{bound}})^{l}(P_{\text{bound}})^{n-m-l}\nonumber\\
&=&P_\text{errors}(\lfloor n\gamma_\text{err} \rfloor,n-m,P_\text{bound} )\nonumber\\
&\leq&P_\text{errors}(\lfloor n\gamma_\text{err} \rfloor,n- \lfloor N\nu_\text{unf} \rfloor, P_\text{bound})\nonumber\\
&=&\sum_{l=0}^{\lfloor n\gamma_\text{err} \rfloor}\left(\begin{smallmatrix}n- \lfloor N\nu_\text{unf} \rfloor\\ l\end{smallmatrix}\right)(1-P_{\text{bound}})^{l}(P_{\text{bound}})^{n- \lfloor N\nu_\text{unf} \rfloor-l} \, , 
%&\leq&P_\text{errors}(\lfloor n\gamma_\text{err} \rfloor,\lceil N(\gamma_\text{det}-\nu_\text{unf})\rceil, P_\text{bound})\nonumber\\
%&=&\sum_{l=0}^{\lfloor n\gamma_\text{err} \rfloor}\left(\begin{smallmatrix}\lceil N(\gamma_\text{det}-\nu_\text{unf})\rceil\\ l\end{smallmatrix}\right)(1-P_{\text{bound}})^{l}(P_{\text{bound}})^{\lceil N(\gamma_\text{det}-\nu_\text{unf})\rceil-l} \, , 
    \end{eqnarray} 
for all $m,n\in\mathbb{N}$ and $\gamma_\text{err}\in[0,1)$ satisfying $m<n(1-\gamma_\text{err})$ instead of (\ref{E9.1}), which holds from $m<N\nu_\text{unf}$, (\ref{Z0}) and (\ref{Z2}); where in the second line we used the notation $P_\text{errors}(e,T,p)$ to denote the probability of making no more than $e$ errors in $T$ independent coin tosses with success probability $p$; and in the third line we used (\ref{E12.1}) and that $m<N\nu_\text{unf}$. %and that $n-m\geq \lceil N(\gamma_\text{det}-\nu_\text{unf})\rceil$, which follows from $m<N\nu_\text{unf}$, (\ref{Z0}) and (\ref{Z2}).
Thus, from (\ref{Z4}), we have
\begin{eqnarray}
\label{Z5}
&&\sum_{m < N\nu_\text{unf}}\text{Pr}[\lvert \Omega_{\text{noqub},\theta}\rvert=m] P^{mcv}_{\text{forge}}\nonumber\\
&&\qquad\leq\sum_{l=0}^{\lfloor n\gamma_\text{err} \rfloor}\left(\begin{smallmatrix}n- \lfloor N\nu_\text{unf} \rfloor\\ l\end{smallmatrix}\right)(1-P_{\text{bound}})^{l}(P_{\text{bound}})^{n- \lfloor N\nu_\text{unf} \rfloor-l} \, .
    \end{eqnarray} 
We can guarantee the second line to decrease exponentially with $N$ from the conditions (\ref{Z0}) and (\ref{Z2}), as these conditions straightforwardly imply that $\lfloor n\gamma_\text{err}\rfloor <(1-P_\text{bound})(n-\lfloor N\nu_\text{unf} \rfloor)$.

We note that the bounds (\ref{Z3}) and (\ref{Z5}) hold for any values of the bit $c$, the set $\Lambda$ satisfying (\ref{Z0}) and any other extra variables included in $v$. Thus, from (\ref{Z1}), (\ref{Z3}) and (\ref{Z5}), we obtain that Alice's cheating probability satisfies the bound
\begin{eqnarray}
\label{Z6}
P_{\text{forge}} &\leq& \sum_{l=0}^{\lfloor N(1-\nu_\text{unf}) \rfloor}\left(\begin{smallmatrix}N\\ l\end{smallmatrix}\right)(1-P_{\text{noqub},\theta})^{l}(P_{\text{noqub},\theta})^{N-l}\nonumber\\
&&\quad + \sum_{l=0}^{\lfloor n\gamma_\text{err} \rfloor}\left(\begin{smallmatrix}n- \lfloor N\nu_\text{unf} \rfloor\\ l\end{smallmatrix}\right)(1-P_{\text{bound}})^{l}(P_{\text{bound}})^{n- \lfloor N\nu_\text{unf} \rfloor-l} \, , 
\end{eqnarray}
where (\ref{Z0}) and (\ref{Z2}) hold. This completes the proof of unforgeability claimed in \dami{theorem 1 for the quantum token scheme $\mathcal{QT}_2$ of Ref.~\cite{KLPGR22}}.

We note that if losses are not reported, that is, if $\gamma_\text{det}=1$, the condition (\ref{Z0}) reduces to $n=N$, and the condition (\ref{Z2}) reduces to (\ref{F1.2}). In this case, the bound (\ref{Z6}) reduces to the bound (\ref{F1.1}).

\subsubsection{\dami{Proof for the quantum token scheme $\mathcal{QT}_1$}}

\dami{

The proof given above for the quantum token scheme $\mathcal{QT}_2$ applies straightforwardly for the quantum token scheme $\mathcal{QT}_1$ of Ref.~\cite{KLPGR22}. The main difference between these two schemes is that in $\mathcal{QT}_2$ Alice measures all received quantum states in the same basis, chosen randomly, whereas in $\mathcal{QT}_1$ she measures each quantum state in a random basis which is independently chosen from the measurement bases applied to other quantum states. We note that Bob's actions during the quantum token generation are the same in both schemes.

As discussed above, our unforgeabilty proof for $\mathcal{QT}_2$ reduces to showing an upperbound $P_\text{bound}$ for the maximum confidence quantum measurement on a quantum state discrimination game associated to Alice's general cheating strategy in $\mathcal{QT}_2$ that applies to each of the qubit states that Bob sends Alice, and which only depend on Bob's actions during the quantum token generation. Straightforwardly, the unforgeability proof given above also applies to $\mathcal{QT}_1$ as the bound $P_\text{bound}$ also applies to the relevant quantum state discrimination game associated to Alice's general cheating strategy in $\mathcal{QT}_1$.

\section{Extension to a global network}
\label{sec:multiple}
As discussed in Ref.~\cite{KLPGR22}, the quantum token schemes $\mathcal{QT}_1$ and $\mathcal{QT}_2$ can be straightforwardly extended to schemes $\mathcal{QT}_1^M$ and $\mathcal{QT}_2^M$ with $2^M$
presentation regions, for arbitrary $M\geq 1$. Theorem 2 in Ref.~\cite{KLPGR22} can be updated with the tighter security guarantees $\epsilon_\text{priv}$, $\epsilon_\text{rob}$, $\epsilon_\text{cor}$ and $\epsilon_\text{unf}$ proved in this paper, as given by lemmas 1, 2 and 3 and theorem 1 of the main text.

\adr{We consider \dami{an} example of $M=7$, corresponding to $2^{7} =128$ presentation regions.
We first discuss the security levels that we could achieve with our experimental setup and parameter choices above, and then note refinements.  
This is a reasonable sized sub-network for trading strategies on the global financial network.   We consider the scheme $\mathcal{QT}_2^M$ with Alice not reporting losses, as in our implementation.

We first note that this scheme would require a transmission of $NM\dami{\approx 7 \times 10^4}$ quantum states from Bob to Alice during the quantum token generation, which would take \dami{$\approx 35$} minutes, since our implementation transmitted $N=10,048$ quantum states in approximately five minutes, which we recall correspond to the transmitted heralded photon pulses. We note that our scheme has the great advantage that the quantum communication stage can be made arbitrarily in the past of the presentation regions, hence, taking $\approx 35$ minutes for completing this stage is not really problematic. Nevertheless, if needed, this time could be reduced by having several optical setups working in parallel and/or by refining the setups to achieve higher transmission frequencies.  Note that schemes in which the presentation regions must be chosen before measuring the received quantum states (e.g.,~\cite{SERGTBW23}) would unavoidably involve significant delays between choosing the presentation regions and presenting the token there in realistic applications where the number of presentation regions is large, as in this example.}

Our implementation was perfectly robust ($\epsilon_\text{rob}=0$) as Alice did not report any losses, hence, the scheme does not allow Bob to abort.  Its extension to  $2^M$ presentation regions is thus also perfectly robust.

We showed our implementation to be $\epsilon_\text{priv}-$private given the assumptions of lemma 1 in the main text, with $\epsilon_\text{priv}=10^{-5}$. As discussed previously, $\epsilon_\text{priv}$ can be made arbitrarily small by pre-processing close to random bits. Thus, we can make $\mathcal{QT}_1^M$ and $\mathcal{QT}_2^M$ $\epsilon_\text{priv}^M-$private with
\begin{equation}
    \epsilon_\text{priv}^M = \frac{1}{2^M}\Bigl[(1+2\epsilon_\text{priv})^M-1 \Bigr]
\end{equation}
arbitrarily small too.

We showed our implementation to be $\epsilon_\text{cor}-$correct with $\epsilon_\text{cor}=2.1\times 10^{-11}$. Thus, we can make $\mathcal{QT}_2^M$ $\epsilon_\text{cor}^M-$correct with
\begin{equation}
    \epsilon_\text{cor}^M = M\epsilon_\text{cor}\,.
\end{equation}
\adr{In our example of $M=7$, with our experimental setup, we can guarantee sufficiently high correctness of $\epsilon_\text{cor}^M \approx 1.5 \times 10^{-10}$.}

\adr{Presenting and verifying a token of $MN \approx 7 \times 10^4$ measurement results, using our setup, would take approximately $7 \times 1.5 ~\mu{\rm s}\approx 11~\mu{\rm s}$. From Fig.~3 \dami{and Eq.~(12) of the main text, we see that a quantum advantage could be obtained for nodes separated by $\gtrsim 2.2$} km, 
meaning a large quantum advantage would be attained compared to classical-cross checking 
around a global (or even a typical national) network.  
We attained a comparative advantage of $\approx 39~\mu{\rm s}$ in our inter-city implementation, with nodes with a direct line separation of $\approx 50~{\rm km}$, implying a significant comparative advantage for $\mathcal{QT}_2^M$ for separations larger than this. This also implies a large comparative advantage for a global or large national network.
(Note that the relevant scale for these comparisons is the diameter of the network, not the separation between individual nodes.) 
These figures could be improved still further if the $M$ schemes 
are performed in parallel, with several  FPGAs used for presentation and verification, and of course
also by faster FPGAs.}

We showed our implementation to be $\epsilon_\text{unf}-$unforgeable with $\epsilon_\text{unf}=\dami{5.52 \times 10^{-9}}$. 
A very weak bound on the unforgeability of 
 $\mathcal{QT}_2^M$ can be obtained \cite{KLPGR22}
 by noting that, to successfully forge, Alice needs to present a validated token at at least one of the $\frac{1}{2} 2^M (2^{M}-1)$ pairs of distinct presentation points, which means her forging probability is bounded by
 \begin{equation}
\frac{1}{2} 2^M (2^{M}-1) \epsilon_\text{unf} \approx 4.5 \times 10^{-5}
\end{equation}
in our example. 

This bound can likely be significantly tightened.
Security could also easily be significantly improved further by using longer tokens (i.e. increasing $N$). As the figures above illustrate, $N$ can be increased by a significant factor without significantly diminishing the quantum and comparative advantages for a global network.   Again, this could also be compensated by using several FPGAs in parallel for token presentation and verification. 

In conclusion, our example illustrates that our implementation could be extended to realistic use cases -- for example, for trading on a global financial network -- with significant quantum and comparative advantage compared to classical cross-checking, while still achieving high security.

}

\section{Acknowledgements}   We thank Mathieu Bozzio, Tobias Guggemos, Peter Schiansky and Philip
Walther for very helpful discussions relating to Section 5 of this SI.  

\begin{comment}

\subsubsection{Using Chernoff bounds}

THIS SMALL SECTION IS TO BE WRITTEN

We can upper bound each of the two terms in (\ref{F1.1}) using the Chernoff bound (\ref{}) of proposition \ref{}. We obtain straigthforwarldy that
\begin{eqnarray}
\label{G1}
P_{\text{forge}}&\leq& \Biggl(\frac{P_\text{noqub}}{\nu_\text{unf}}\Biggr)^{N\nu_\text{unf}}\Biggl(\frac{1-P_\text{noqub}}{1-\nu_\text{unf}}\Biggr)^{N(1-\nu_\text{unf})}\nonumber\\
&&\quad + \Biggl [ \frac{(1-\nu_\text{unf})(1-P_\text{bound})}{\gamma_\text{err}}\Biggr]^{ N(1-\nu_\text{unf}) \frac{\gamma_\text{err}}{1-\nu_\text{unf}}}\times\nonumber\\
&&\qquad\times\Biggl [ \frac{P_\text{bound}(1-\nu_\text{unf})}{1-\nu_\text{unf}-\gamma_\text{err}}\Biggr]^{ N(1-\nu_\text{unf}) \bigl(1-\frac{\gamma_\text{err}}{1-\nu_\text{unf}}\bigr)} \, , 
    \end{eqnarray}
where the constraint (\ref{F1.2}) holds. Using this bound with the experimental parameters given above, and with $\nu_\text{unf}=0.031047329$ as above, we obtain
\begin{equation}
\label{G2}
P_{\text{forge}}\leq \,,
\end{equation}
which is consistent with (\ref{E1.5})

\end{comment}

\bibliographystyle{unsrt}
%\bibliography{postdocbiblio}
\bibliography{tokenbiblioSI,tokenbiblio}
\end{spacing}
\end{document}